\newcommand{\md}{\mathrm{d}}
\DeclareMathAlphabet{\mathpzc}{OT1}{pzc}{m}{it}
\newtheorem{theorem}{Theorem}
\numberwithin{theorem}{section}
\newtheorem{lemma}[theorem]{Lemma}
\newtheorem{proposition}[theorem]{Proposition}
\newtheorem{definition}[theorem]{Definition}
\numberwithin{equation}{section}
\definecolor{orange}{rgb}{1,0.5,0}
\definecolor{rb}{rgb}{1,0,1}
\begin{document}

\title{\textbf{Classifying local anisotropy formed by rigid molecules: symmetries and tensors}\footnote{This work is partially supported by NCMIS, AMSS grants for Outstanding Youth Project, and ICMSEC dirctor funds.}}
\author{Jie Xu\footnote{LSEC \& NCMIS, Institute of Computational Mathematics and Scientific/Engineering Computing (ICMSEC), Academy of Mathematics and Systems Science (AMSS), Chinese Academy of Sciences, Beijing 100190, China. Email: xujie@lsec.cc.ac.cn}}
\date{}
\maketitle
\begin{abstract}
  We consider an infinitesimal volume where there are many rigid molecules of the same kind, and discuss the description and classification of the local anisotropy in this volume by tensors. 
  First, we examine the symmetry of a rigid molecule, which is described by a point group in $SO(3)$. 
  For each point group in $SO(3)$, we find the tensors invariant under the rotations in the group. 
  These tensors shall be symmetric and traceless. 
  We write down the explicit expressions. 
  The order parameters to describe the local anisotropy are then chosen as some of the invariant tensors averaged about the density function. 
  Next, we discuss the classification of local anisotropy by the symmetry of the whole infinitesimal volume. 
  This mesoscopic symmetry can be recognized by the value of the order parameter tensors in the sense of maximum entropy state. 
  For some sets of order parameter tensors involving different molecular symmetries, we give the classification of mesoscopic symmetries, in which the three-fold, four-fold and polyhedral symmetries are examined. 
  
  \textbf{Keywords}: liquid crystals, molecular symmetry, mesoscopic symmetry, symmetric traceless tensors.

  \textbf{AMS subject classifications}: 76A15, 82D30, 76M60. 
\end{abstract}


\section{Introduction}





Let us consider a volume in mesoscopic scale, which is infinitesimal so that it can be viewed as a point in the space, while is very large compared with molecules so that it is able to contain a huge number of molecules. 
Suppose that in this infinitesimal volume, there are many identical, fully rigid molecules with different orientations. How to describe the state in this infinitesimal volume? 
The physical background of the above problem is the description of local anisotropy in liquid crystals. 
Liquid crystals are featured by partial order that results in intermediate physical properties between liquids and solids. 
The mechanism of partial order in liquid crystals is exactly local anisotropy that orginates from the building blocks, typically non-spherical molecules with rigidity. 
With nonuniform orientation distributions, ordered phases can form even in spatially homogeneous cases. 
Thus, for any mathematical theory of liquid crystals, a basic problem is how to describe and classify local anisotropy. 

Let us explain this basic problem by 
rod-like molecules. 
The orientation of one molecule can be represented by a unit vector $\bm{m}\in S^2$ fixed on it. 
The state in an infinitesimal volume can then be described by a density function $\rho(\bm{m})$. 
In most cases, the second order tensor, 
\begin{equation}
  Q=\int_{S^2}(\bm{m}\otimes\bm{m}-\frac{1}{3}\mathfrak{i})\rho(\bm{m})\,\md\bm{m}, 
\end{equation}
is defined as the order parameter, where $\mathfrak{i}$ denotes the identity matrix. 
The local anisotropy is classified by the eigenvalues of $Q$ into three cases: isotropic, uniaxial and biaxial. 
Once this tensor has been chosen as the order parameter, 
one can construct the free energy as a functional of the tensor field $Q(\bm{x})$, such as the widely-used Landau-de Gennes theory \cite{GennesPierreGillesde1993Tpol}. 
The Oseen-Frank theory \cite{Oseen_Frank,ericksen1991liquid} can also be viewed as its simplification that the local anisotropy is uniaxial everywhere. 

Liquid crystals can also be formed by other rigid molecules, such as bent-core molecules \cite{JJAP,prl_111_067801,Ntb,Ntb2} and kite-like molecules \cite{Zhao29092015} that have proved experimentally to exhibit many fascinating structures, where the local anisotropy becomes more complicated. 
On the theoretical aspect, it requires to reconsider the choice of order parameters. 
For example, the proposed order parameters for bent-core molecules may include 
one first order tensor, up to two second order tensors \cite{Bi1,bisi2006universal,de2008landau,luckhurst2012molecular,shamid2014predicting,longa2016modulated,SymmO,BentModel}, and even a third order tensor \cite{lubensky2002theory}. 
On the other hand, the local anisotropy shown by these molecules is believed to be diverse. 
The types of local anisotropy other than the uniaxial and biaxial, including polar, three-fold, four-fold \cite{JJAP,PhysRevE.58.5873}, tetrahedral \cite{fel1995tetrahedral,longa2009chiral,brand2010macroscopic,brand2005tetrahedratic,trojanowski2012tetrahedratic}, and octahedral \cite{PhysRevE.58.5873}, have been considered. 

The works mentioned above, however, focus only on some special classes of molecules. 
We consider in this paper two fundamental problems. 
For general rigid molecules, what is the principle of choosing order parameters? 
Then, based on the order parameters, how to classify the local anisotropy? 

For the first problem, it shall be pointed out that the choice of order parameters is based on molecular symmetry. 
For rod-like molecules, the tensor $Q$ is defined from the second order tensor generated by $\bm{m}$, which is the lowest order nonvanishing tensor when averaged over the density function. 
Actually, it is suitable for any molecule with the same symmetry as a rod, no matter it is a disc, a spheroid or an hourglass. 
On the other hand, as we can see from the theories for bent-core and other molecules, if the molecular symmetry changes, the nonvanishing tensors will be different. 
Therefore, we shall classify rigid molecules according to their symmetries, and discuss the nonvanishing tensors for each symmetry. 

For general rigid molecules, the orientational density function $\rho$ is a function on $SO(3)$. 
The situation becomes more complicated, because there are multiple linearly dependent tensors of the same order. 
On identifying the independent components, it will become clear that we shall focus on the symmetric traceless tensors. 
Thus, we will start from writing down explicit expressions of two bases for symmetric traceless tensors. 
Symmetric traceless tensors are related to group representation theory \cite{SpecFun}. They give the complete expansion of a function on $SO(3)$, as their components are linearly equivalent to Wigner D-functions, for which the approximation theory is established \cite{wigner_SXZ}. 
Actually, in some works \cite{PhysRevE.58.5873,doi:10.1063/1.1649733}, the coefficients in the expansion by Wigner D-functions act as order parameters. 

With the preparation above, we are ready to discuss molecular symmetries, for which we focus on proper rotations. 
For a certain molecule, all the proper rotations leaving it invariant form a closed subgroup of $SO(3)$, or a point group in $SO(3)$. 
The elements in the point group also define rotations on tensors. 
We will show that if a tensor is nonvanishing when averaged, it shall be invariant under any rotation in the point group. 
A molecule may also be invariant under improper rotations, but they will not affect invariant tensors, so we do not consider them in the current work. 
The point groups in $SO(3)$ have been completely identified (see, for example, \cite{Group_Cotton}). 
We write down the invariant tensors for \emph{all} point groups in $SO(3)$, using the explicit expressions of symmetric traceless tensors. 
The order parameters are then chosen as some of the invariant tensors averaged over the density function. 
For particular molecules, the choice of order parameters might depend on many aspects \cite{mettout2006macroscopic}. 
However, the choice should be able to distinguish different groups, for which we claim some conditions. 



We now turn to the second problem, to classify local anisotropy by chosen order parameter tensors. 
Recall that for rod-like molecules with the order parameter tensor $Q$, the isotropic, uniaxial and biaxial states have different symmetries. 
The word 'symmetry' here refers to the mesoscopic symmetry, including all the rotations of the whole infinitesimal volume leaving it unchanged. 
Thus, the classification of local anisotropy is also according to rotation point groups. 

When the infinitesimal volume is filled with rod-like molecules, the mesoscopic symmetry could actually be recognized from the tensor $Q$ by finding the maximum entropy state, i.e. the density $\rho$ that maximizes the entropy with the value of $Q$ fixed. 
Such a viewpoint is appropriate for the general case.
Although the set of tensors chosen as order parameters could be various, we could always define the mesoscopic symmetry by that of the maximum entropy state. 
Thus, we could discuss mescoscopic symmetry for any choice of tensors order parameters, for which we find out the equivalent conditions on the value of these tensors. 
The classification for rod-like molecules could be recovered from this general setting. 
Furthermore, for the same molecular symmetry, we could discuss the classification when choosing different sets of tensor order parameters, which could lead to different level of classification on mesoscopic symmetry. 
Thus, it becomes clear that choosing order parameters is also relevant to our demand on classifying local anisotropy. 
We will discuss some cases including the possibilities of forming local anisotropy of three-fold, four-fold and polyhedral symmetries, some of which have been indicated by some experiments \cite{JJAP}. 

The identification of invariant tensors, and the classification of mesoscopic symmetries, can be applied directly to the interpretation of the results from molecular/Monte Carlo simulations \cite{greco2014molecular,greco2015entropy,tomczyk2016twist, ferrarini2017twist,camp1999theory,memmer2002liquid,lansac2003phase,dewar2005dipolar,nguyen2008molecular}. 
Our results also give a preliminary for the construction of tensor model for different molecular symmetries. 
As done in some recent works \cite{SymmO,RodModel,BentModel,doi:10.1080/02678292.2017.1290285}, theories based on tensors can be derived from microscopic theory by expansion, where the molecular symmetry and truncation determine the order parameters. 
The recognition of invariant tensors is crucial for writing down the expansion, which will be discussed in a forthcoming work.

We will follow the route below in the rest of the paper. 
Some notations on $SO(3)$ and tensors are introduced in Section \ref{notation}. 
In Section \ref{sym_tt}, we discuss symmetric traceless tensors, where we write down explicit expressions of two bases and illustrate briefly the relation to Wigner D-functions. 
In Section \ref{mol_sym}, we turn to molecular symmetries. For each rotation point group, we write down the invariant tensors. 
The classification of mesoscopic symmetries for local anisotropy is discussed in Section \ref{dist_sym}. 
Concluding remarks are given in Section \ref{concl}. 

\section{Notations\label{notation}}
\subsection{Orthonormal frame}
We define a right-handed reference orthonormal frame, $(\bm{e}_1,\bm{e}_2,\bm{e}_3)$, in the space. 
For a rigid molecule, we fix another right-handed orthonormal frame $(\bm{m}_1,\bm{m}_2,\bm{m}_3)$ on it to express its orientation. 
The coordinates of the body-fixed frame in the reference frame, 
\begin{equation}
  \mathfrak{p}_{ij}=\bm{e}_i\cdot\bm{m}_j,\ i,j=1,2,3, 
\end{equation}
define a $3\times 3$ rotation matrix $\mathfrak{p}\in SO(3)$. 
The $i$-th column of $\mathfrak{p}$ is the vector $\bm{m}_i$, so we would also use $\mathfrak{p}$ to represent the body-fixed frame itself. 
Under this notation, the reference frame corresponds to the identity matrix $\mathfrak{i}$. 
The matrix can also be expressed by Euler angles, $\alpha$, $\beta$ and $\gamma$, 
\begin{align}
\mathfrak{p}=\left(
\begin{array}{ccc}
 \cos\alpha &\quad -\sin\alpha\cos\gamma &\quad\sin\alpha\sin\gamma\\
 \sin\alpha\cos\beta &\quad\cos\alpha\cos\beta\cos\gamma-\sin\beta\sin\gamma &
 \quad -\cos\alpha\cos\beta\sin\gamma-\sin\beta\cos\gamma\\
 \sin\alpha\sin\beta &\quad\cos\alpha\sin\beta\cos\gamma+\cos\beta\sin\gamma &
 \quad -\cos\alpha\sin\beta\sin\gamma+\cos\beta\cos\gamma
\end{array}
\right),\label{EulerRep}
\end{align}
where $0\le\alpha\le \pi$, $0\le\beta,\gamma <2\pi$. 

The uniform probability measure is given by $\md \mathfrak{p}=(1/8\pi^2)\sin\alpha\md\alpha\md\beta\md\gamma$. 
This measure is invariant under the rotation in $SO(3)$: for any function $f(\mathfrak{p})$, we have 
\begin{equation}\label{invariantP}
  \int f(\mathfrak{p})\md \mathfrak{p}=\int f(\mathfrak{pt})\md \mathfrak{p}=\int f(\mathfrak{tp})\md \mathfrak{p}, \quad\forall \mathfrak{t}\in SO(3). 
\end{equation}
The product $\mathfrak{pt}$ follows the rule of matrix multiplication.

Throughout the paper, the rotation $\mathfrak{p}$ is taken as a variable on which the density $\rho$ depends. 
From this viewpoint, the axes of the body-fixed frame, $\bm{m}_i$, are also regarded as functions of the variable $\mathfrak{p}$. 
Thus, if we write an expression about $\bm{m}_i$, the expression is viewed as a function of the variable $\mathfrak{p}$. 


We mention that another way to describe rotations is to use unit quaternions. In appendix, we write down the relation between quaternion and rotation matrix in $SO(3)$. 

\subsection{Tensors}
Let us introduce the notations involving tensors in the three-dimensional space. 
An $n$-th order tensor $U$ is an element in the space $\underbrace{\mathbb{R}^3\otimes\ldots\otimes \mathbb{R}^3}_{n\text{ times}}$, 
of which a basis 
can be given by 
\begin{equation}\label{basis0}
\bm{e}_{i_1}\otimes\ldots\otimes\bm{e}_{i_n}, \quad i_1,\ldots,i_n\in\{1,2,3\}. 
\end{equation}
We can write an $n$-th order tensor $U$ as a linear combination of the basis, 
\begin{equation}\label{expnd0}
  U=U_{i_1\ldots i_n}\bm{e}_{i_1}\otimes\ldots\otimes\bm{e}_{i_n}, 
\end{equation}
where the multidimensional array $U_{i_1\ldots i_n}$ is called the coordinates, or components, of the tensor $U$ under the basis \eqref{basis0}. 
In the above, we assume the Einstein summation convention on repeated indices that will be used throughout the paper.

An important tensor is the second order identity tensor.
If written in matrix, it coincides with the identity $\mathfrak{i}$ in $SO(3)$, so we also use the notation $\mathfrak{i}$ for the identity tensor. 

The dot product can be defined for two tensors of the same order. 
If $U_1$ and $U_2$ are both $n$-th order tensors, their dot product is defined as summing up the product of the corresponding coordinates under an orthonormal basis, 
\begin{equation}
  U_1\cdot U_2=(U_1)_{i_1\ldots i_n}(U_2)_{i_1\ldots i_n}. 
\end{equation}
Note that the definition is independent of the orthonormal basis we choose. 
The coordinates of the tensor can then be expressed as the dot product with the basis \eqref{basis0}, 
\begin{equation}
  U_{i_1\ldots i_n}=U\cdot\bm{e}_{i_1}\otimes\ldots\otimes\bm{e}_{i_n}. 
\end{equation}

A tensor $U$ is called a symmetric tensor, if for arbitrary $i_{\sigma_1}$ and $i_{\sigma_2}$, it satisfies 
  $U_{\ldots i_{\sigma_1}\ldots i_{\sigma_2}\ldots}=U_{\ldots i_{\sigma_2}\ldots i_{\sigma_1}\ldots}$. 
For a symmetric tensor, we can define its trace as the contraction of two indices. 
The trace transforms an $n$-th order symmetric tensor to an $(n-2)$-th order symmetric tensor: 
\begin{equation}\label{symtrlsdef}
  (\text{tr}U)_{i_1\ldots i_{n-2}}=U_{i_1\ldots i_{n-2}jj}. 
\end{equation}
If a symmetric tensor $U$ satisfies $\mathrm{tr}U=0$, we say that $U$ is symmetric traceless. 
We shall note that concepts of symmetric tensor and symmetric traceless tensor are also independent of the choice of orthonormal basis, although they are defined through the components. 

For a rotation $\mathfrak{p}\in SO(3)$, let us define how it is acted on a tensor. 
By the definition, the rotation of the frame $(\bm{e}_i)$ is given by $\mathfrak{p}\circ\bm{e}_i=\mathfrak{p}_{ji}\bm{e}_j=\bm{m}_i$. 
Thus, for any tensor $U$ written in the form \eqref{expnd0}, we could define the rotation by $\mathfrak{p}\in SO(3)$ on the tensor as follows, 
\begin{align}
  \mathfrak{p}\circ U=&U_{i_1\ldots i_n}\bm{m}_{i_1}\otimes\ldots\otimes\bm{m}_{i_n}
  .\label{rot0}
\end{align}
Thus, for any tensor $U$, we can view $U(\mathfrak{p})=\mathfrak{p}\circ U$ as a function of $\mathfrak{p}$. 
The tensor that is not rotated can also be viewed as rotated by the identity, which can be written as $U=U(\mathfrak{i})$. 

One can verify that the rotation does not rely on the choice of basis. 
Moreover, we can deduce that $U(\mathfrak{p}_1\mathfrak{p}_2)=\mathfrak{p}_1\circ U(\mathfrak{p}_2)$. 
As a result, it can be verified that for the dot product, 
\begin{equation}
  U_1(\mathfrak{sp}_1)\cdot U_2(\mathfrak{sp}_2)=U_1(\mathfrak{p}_1)\cdot U_2(\mathfrak{p}_2), \quad\forall \mathfrak{s},\mathfrak{p}_1,\mathfrak{p}_2\in SO(3). 
\end{equation}
Besides, the rotation keeps symmetric and symmetric traceless properties, 
because the original tensor and the rotated tensor have the same coordinates under different bases. 

\section{Symmetric traceless tensors\label{sym_tt}}
We would like to describe the anisotropy in a small volume containing many identical non-spherical rigid molecules. 
The anisotropy is generated by non-uniform orientational distribution of these rigid molecules. 
Recall that we fix an orthonormal frame $\mathfrak{p}=(\bm{m}_1,\bm{m}_2,\bm{m}_3)$ on each rigid molecule to represent the orientation. 
Therefore, the orientational distribution can be expressed by a density function $\rho(\mathfrak{p})$ in $SO(3)$. 
%
As a natural extension of the $Q$-tensor, we consider the moments of $\bm{m}_i$, i.e. the averages of the tensor products of several $\bm{m}_i$,
\begin{equation}
\langle \bm{m}_{i_1}\otimes\ldots\otimes\bm{m}_{i_n}\rangle = \int_{SO(3)}\bm{m}_{i_1}(\mathfrak{p})\otimes\ldots\otimes\bm{m}_{i_n}(\mathfrak{p})\rho(\mathfrak{p})\md \mathfrak{p},\quad i_1\ldots,i_n=1,2,3.
\label{moment0}
\end{equation}
Hereafter, we use $\langle\ldots\rangle$ to denote the average of a function on $SO(3)$ about $\rho(\mathfrak{p})$. 
However, these moments, or equivalently the integrands, are not linearly independent. 
The rest of this section is dedicated to figuring out the linearly independent quantities in these moments.
The results turn out to be fundamental when we discuss molecular symmetries in the next section. 

First, we notice from the cross product relation $\bm{m}_1\times\bm{m}_2=\bm{m}_3$ that the components of $\bm{m}_1\otimes\bm{m}_2-\bm{m}_2\otimes \bm{m}_1$ are given by those of $\bm{m}_3$. 
This guides us to focus on symmetric tensors. 
For a $n$-th order tensor $U$, we define its permutational average as $U_{\mathrm{sym}}$, 
\begin{equation}\label{permavg}
(U_{\mathrm{sym}})_{i_1\ldots i_n}=\frac{1}{n!}\sum_{\sigma}U_{i_{\sigma(1)}\ldots i_{\sigma(n)}},
\end{equation}
where the sum is taken over all the permutations $\sigma$ of $\{1,\ldots,n\}$.
It is clear that $U_{\mathrm{sym}}$ is a symmetric tensor. 
Then, the tensor $U$ is decomposed into the symmetric part $U_{\mathrm{sym}}$ and the antisymmetric part $U-U_{\mathrm{sym}}$. 
For the antisymmetric part of the tensor $\bm{m}_{i_1}\otimes\ldots\otimes\bm{m}_{i_n}$, the components are given by the components of some $(n-1)$-th order tensors.
One can repeat such a decomposition for the resulting $(n-1)$-th order tensors. 
Hence, when seeking linearly independent components, we only need to consider symmetric tensors. 

To express the symmetric tensors conveniently, we introduce the monomial notation, 
\begin{align}
  \bm{m}_1^{k_1}\bm{m}_2^{k_2}\bm{m}_3^{k_3}
  =(\underbrace{\bm{m}_1\otimes\ldots\otimes\bm{m}_1}_{k_1}\otimes
  \underbrace{\bm{m}_2\otimes\ldots\otimes\bm{m}_2}_{k_2}\otimes
  \underbrace{\bm{m}_3\otimes\ldots\otimes\bm{m}_3}_{k_3}
  )_{\mathrm{sym}}.
\label{monomial0}
\end{align}
It is then straightforward to regard a polynomial of $\bm{m}_i$ as a symmetric tensor, if every term has the same order. 
Note that for any orthonormal frame $(\bm{m}_1,\bm{m}_2,\bm{m}_3)$, we have 
\begin{equation}
\mathfrak{i}=\bm{m}_1^2+\bm{m}_2^2+\bm{m}_3^2, \label{i_m}
\end{equation}
where $\mathfrak{i}$ is the identity matrix. 
So, the identity matrix $\mathfrak{i}$ can be regarded as a polynomial of $\bm{m}_i$. 
We also define the product of $\mathfrak{i}^l$ and a symmetric tensor $U$ as 
\begin{equation}
  \mathfrak{i}^lU=(\underbrace{\mathfrak{i}\otimes\ldots \otimes\mathfrak{i}}_{l}\otimes U)_{\mathrm{sym}}. 
\end{equation}

For the integrand in \eqref{moment0}, we only need to look at its symmetric part $\bm{m}_1^{k_1}\bm{m}_2^{k_2}\bm{m}_3^{k_3}$ to find out linearly independent tensors. 
However, there are still some linear relations by noticing \eqref{i_m}. 
If $2k\le k_3\le 2k+1$, we could write 
\begin{equation}\label{m3}
\bm{m}_1^{k_1}\bm{m}_2^{k_2}\bm{m}_3^{k_3}=\bm{m}_1^{k_1}\bm{m}_2^{k_2}(\mathfrak{i}-\bm{m}_1^2-\bm{m}_2^2)^k\bm{m}_3^{k_3-2k}. 
\end{equation}
Since the coordinates of $\mathfrak{i}^lU$ are actually given by those of $U$, 
\eqref{m3} actually gives linear relations between the coordinates of tensors of different orders. 
%

It turns out that we could arrive at clearer relations by investigating symmetric traceless tensors.
Obviously, the symmetric traceless tensors of a certain order $n$ form a linear space. 
For this linear space, we write down two bases explicitly, 
one constructed from monomials, followed by an orthonogonal basis. 
%
%
%
%
The explicit expressions will be given, which are essential to our discussion on symmetries afterwards. 
When discussing bases, we could use any orthonormal frame in $\mathbb{R}^3$. 
So, instead of using the frame $(\bm{m}_i)$ in the above, we will discuss in the frame $(\bm{e}_i)$. 
In general, by using $(\bm{m}_i)$ we are emphasizing quantities as functions of $\mathfrak{p}$, and by using $(\bm{e}_i)$ we are not emphasizing this.

\subsection{Basis constructed from monomials}

Let us begin with a lemma explaining how we calculate the trace. 
\begin{lemma}\label{trI}
Suppose $V$ is an $(n-2k)$-th order symmetric tensor.  Then we have 
\begin{equation}\label{traceI}
  \mathrm{tr}(\mathfrak{i}^kV)=\frac{2k(2n-2k+1)}{n(n-1)}\mathfrak{i}^{k-1}V+\frac{(n-2k)(n-2k-1)}{n(n-1)}\mathfrak{i}^k\mathrm{tr}V. 
\end{equation}
Here, we take $\mathrm{tr}V=0$ if $V$ is a zeroth or first order tensor. 
\end{lemma}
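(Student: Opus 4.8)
The plan is to represent symmetric tensors by homogeneous polynomials and thereby turn the trace into the Laplacian, which reduces the whole computation to a one-line product rule instead of a permutation count. To a symmetric $n$-th order tensor $T$ I associate the degree-$n$ homogeneous polynomial $p_T(\bm{x})=T_{i_1\ldots i_n}x_{i_1}\cdots x_{i_n}$ on $\mathbb{R}^3$. This is a linear bijection between symmetric tensors and homogeneous polynomials (the dimensions agree), and it is insensitive to symmetrization, since contracting against the symmetric $x_{i_1}\cdots x_{i_n}$ already symmetrizes. In particular $p_{\mathfrak{i}}=|\bm{x}|^2=:r^2$, and because $\mathfrak{i}^kV=(\mathfrak{i}^{\otimes k}\otimes V)_{\mathrm{sym}}$, I get $p_{\mathfrak{i}^kV}=r^{2k}\,p_V$ for any symmetric $V$.

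The key point is that taking a trace corresponds, up to a scalar, to applying the Laplacian. Differentiating $p_T$ twice and using the symmetry of $T$ gives $\Delta p_T=n(n-1)\,p_{\mathrm{tr}T}$, so that $p_{\mathrm{tr}T}=\frac{1}{n(n-1)}\Delta p_T$. Applying this with $T=\mathfrak{i}^kV$ reduces the lemma to computing $\Delta(r^{2k}q)$, where $q:=p_V$ is homogeneous of degree $n-2k$. I would then expand by $\Delta(fg)=(\Delta f)g+2\nabla f\cdot\nabla g+f\,\Delta g$ with $f=r^{2k}$, $g=q$, using three standard facts: $\Delta r^{2k}=2k(2k+1)r^{2k-2}$ (the radial Laplacian in dimension three); $\nabla r^{2k}=2k\,r^{2k-2}\bm{x}$ together with Euler's relation $x_i\partial_iq=(n-2k)q$, giving $\nabla r^{2k}\cdot\nabla q=2k(n-2k)r^{2k-2}q$; and $\Delta q=(n-2k)(n-2k-1)\,p_{\mathrm{tr}V}$ from the same trace-Laplacian correspondence one order down. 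Collecting the two $r^{2k-2}q$ contributions yields the coefficient $2k[(2k+1)+2(n-2k)]=2k(2n-2k+1)$, so that $\Delta(r^{2k}q)=2k(2n-2k+1)r^{2k-2}q+(n-2k)(n-2k-1)r^{2k}p_{\mathrm{tr}V}$.

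To finish, I recognize $r^{2k-2}q=p_{\mathfrak{i}^{k-1}V}$ and $r^{2k}p_{\mathrm{tr}V}=p_{\mathfrak{i}^k\mathrm{tr}V}$, divide by $n(n-1)$, and pass back through the bijection, which gives exactly \eqref{traceI}. I would also check the degenerate cases covered by the stated convention: when $n-2k$ is $0$ or $1$ the polynomial $q$ is constant or linear, so $\Delta q=0$ and $p_{\mathrm{tr}V}$ plays no role, consistent with the fact that the prefactor $(n-2k)(n-2k-1)$ vanishes there anyway.

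The main obstacle is essentially bookkeeping rather than conceptual: one must justify that the polynomial bijection is insensitive to the symmetrization in $\mathfrak{i}^kV$, and must correctly account for the dimension-three constant in $\Delta r^{2k}$, which is precisely where the factor $2n-2k+1$ originates. The alternative, a direct combinatorial proof that counts how the two contracted indices distribute among the $k$ copies of $\mathfrak{i}$ and the block $V$ inside the symmetrization, reaches the same answer but is far more error-prone; the polynomial route is designed to sidestep exactly that count.
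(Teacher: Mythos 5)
Your proof is correct, and it takes a genuinely different route from the paper. The paper establishes the lemma by expanding $\mathfrak{i}^kV$ via the permutational average \eqref{permavg} and counting, among the $n(n-1)$ ordered choices for the two contracted indices, the four cases according to whether each index lands in a copy of $\mathfrak{i}$ or in $V$: the same-$\mathfrak{i}$ case contributes $2k\cdot 3$ (the factor $3=\delta_{ii}$ being the dimension), the distinct-$\mathfrak{i}$ case $2k(2k-2)$, the mixed case $4k(n-2k)$, summing to $2k(2n-2k+1)$, and the both-in-$V$ case $(n-2k)(n-2k-1)$ producing the $\mathfrak{i}^k\mathrm{tr}V$ term. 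You instead pass through the linear bijection $T\mapsto p_T(\bm{x})=T_{i_1\ldots i_n}x_{i_1}\cdots x_{i_n}$ between symmetric tensors and homogeneous polynomials, under which the trace becomes $\frac{1}{n(n-1)}\Delta$ and the identity tensor becomes $r^2$, so the whole computation collapses to the product rule for $\Delta(r^{2k}q)$ together with Euler's relation; the dimension-three constant that the paper sees as $\delta_{ii}=3$ appears for you in $\Delta r^{2k}=2k(2k+1)r^{2k-2}$, and both routes yield the identical coefficient $2k(2n-2k+1)$. Your approach buys a cleaner, less error-prone derivation and makes the dimension dependence transparent (in $\mathbb{R}^d$ the coefficient would read $2k(2n-2k+d-2)$), at the modest cost of setting up the polynomial correspondence and verifying $p_{U_{\mathrm{sym}}}=p_U$ and $\Delta p_T=n(n-1)p_{\mathrm{tr}T}$, both of which you justify correctly; the paper's count is more elementary and stays entirely within the tensor formalism already introduced. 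Your handling of the degenerate cases $n-2k\in\{0,1\}$ is also consistent with the stated convention $\mathrm{tr}V=0$.
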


The lemma can be established by expanding the tensor using \eqref{permavg}. 
Similar to the above lemma, we can deduce that 
\begin{align}
  \mathrm{tr}(\bm{e}_1^{k_1}\bm{e}_2^{k_2}\bm{e}_3^{k_3})=&\frac{1}{(k_1+k_2+k_3)(k_1+k_2+k_3-1)}\Big(k_1(k_1-1)\bm{e}_1^{k_1-2}\bm{e}_2^{k_2}\bm{e}_3^{k_3}\nonumber\\
  &+k_2(k_2-1)\bm{e}_1^{k_1}\bm{e}_2^{k_2-2}\bm{e}_3^{k_3}+k_3(k_3-1)\bm{e}_1^{k_1}\bm{e}_2^{k_2}\bm{e}_3^{k_3-2}\Big). \label{tracem}
\end{align}

With the above formulae, we derive the expression of tensors generated by monomials $\bm{e}_1^{k_1}\bm{e}_2^{k_2}\bm{e}_3^{k_3}$ for $k_3=0,1$. 
The equations \eqref{traceI} and \eqref{tracem} prompt us to consider the tensor of the following form, 
\begin{equation}
(\bm{e}_1^{k_1}\bm{e}_2^{k_2}\bm{e}_3^{k_3})_0=\sum_{2j_1\le k_1,2j_2\le k_2} a^{k_1,k_2,k_3}_{j_1,j_2}(\bm{e}_1^{k_1-2j_1}\bm{e}_2^{k_2-2j_2}\bm{e}_3^{k_3}\mathfrak{i}^{j_1+j_2}),\quad k_3=0,1. \label{SymTrls_monomial}
\end{equation}
Let $n=k_1+k_2+k_3$.
We calculate the trace of this tensor and let it be zero, leading to the recursive formula, 
\begin{align}
&a^{k_1,k_2,k_3}_{j_1-1,j_2}(k_1-2j_1+1)(k_1-2j_1+2)
+a^{k_1,k_2,k_3}_{j_1,j_2-1}(k_2-2j_2+1)(k_2-2j_2+2)\nonumber\\
&~~+a^{k_1,k_2,k_3}_{j_1,j_2}2(j_1+j_2)(2n+1-2j_1-2j_2)=0. \nonumber
\end{align}
Let $a^{k_1,k_2,k_3}_{0,0}=1$. We solve that
\begin{equation}
a^{k_1,k_2,k_3}_{j_1,j_2}=(-1)^{j_1+j_2}{j_1+j_2 \choose j_1}
\frac{k_1!k_2!(2n-1-2j_1-2j_2)!!}
{(k_1-2j_1)!(k_2-2j_2)!(2n-1)!!(2j_1+2j_2)!!}.  \label{TrZeroCoef}
\end{equation}

For $k_3\ge 2$, let $j$ be the integer such that $2j\le k_3\le 2j+1$. 
The symmetric traceless tensors are defined by
\begin{equation}
  (\bm{e}_1^{k_1}\bm{e}_2^{k_2}\bm{e}_3^{k_3})_0=(\bm{e}_1^{k_1}\bm{e}_2^{k_2}(-\bm{e}_1^2-\bm{e}_2^2)^j\bm{e}_3^{k_3-2j})_0. \label{SymTrls_monomial2}
\end{equation}
By this definition, we shall see that the above tensor can also be written in the form $\bm{e}_1^{k_1}\bm{e}_2^{k_2}\bm{e}_3^{k_3}-\mathfrak{i}V$, because we have 
\begin{align*}
(\bm{e}_1^{k_1}\bm{e}_2^{k_2}(-\bm{e}_1^2-\bm{e}_2^2)^j\bm{e}_3^{k_3-2j})_0
=&\bm{e}_1^{k_1}\bm{e}_2^{k_2}(-\bm{e}_1^2-\bm{e}_2^2)^j\bm{e}_3^{k_3-2j}-\mathfrak{i}V_1\\
=&(\bm{e}_1^{k_1}\bm{e}_2^{k_2}(\mathfrak{i}-\bm{e}_1^2-\bm{e}_2^2)^j\bm{e}_3^{k_3-2j}-\mathfrak{i}V_2)-\mathfrak{i}V_1\\ 
=&\bm{e}_1^{k_1}\bm{e}_2^{k_2}\bm{e}_3^{k_3}-\mathfrak{i}(V_1+V_2). 
\end{align*}

One could notice that we can also derive symmetric traceless tensors by constraining $k_1=0,1$ or $k_2=0,1$, following the same route above. 
We shall point out in the following proposition that this leads to the same results. 
\begin{proposition}\label{st_unique}
  For any $n$-th order symmetric tensor $U$, suppose an $(n-2)$-th order symmetric tensor $V$ makes $U-\mathfrak{i}V$ traceless, then $V$ is unique. 
\end{proposition}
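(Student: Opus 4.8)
The plan is to recast the uniqueness of $V$ as the injectivity of a single linear map, and then to exploit that multiplication by $\mathfrak{i}$ and taking the trace are adjoint with respect to the dot product. Suppose $V_1$ and $V_2$ are both $(n-2)$-th order symmetric tensors for which $U-\mathfrak{i}V_1$ and $U-\mathfrak{i}V_2$ are traceless. Setting $W=V_1-V_2$ and subtracting, I get $\mathrm{tr}(\mathfrak{i}W)=0$. Hence it suffices to show that the only $(n-2)$-th order symmetric tensor $W$ with $\mathrm{tr}(\mathfrak{i}W)=0$ is $W=0$; equivalently, that the linear map $W\mapsto\mathrm{tr}(\mathfrak{i}W)$ on symmetric tensors is injective. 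This reduction is immediate, and the real work is confined to the injectivity claim.

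The key identity I would establish first is the adjoint relation
\begin{equation}
(\mathfrak{i}V)\cdot U = V\cdot\mathrm{tr}U \nonumber
\end{equation}
valid for every $n$-th order symmetric $U$ and $(n-2)$-th order symmetric $V$. This holds because the dot product of a symmetrized tensor with a symmetric tensor is insensitive to the symmetrization: writing $\mathfrak{i}V=(\mathfrak{i}\otimes V)_{\mathrm{sym}}$ and pairing with the symmetric $U$, one may drop the permutation average in \eqref{permavg} (each of the $n!$ relabeled terms coincides once $U$'s indices are symmetric) and simply contract the two $\mathfrak{i}$-indices against $U$, which is exactly forming $\mathrm{tr}U$. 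This is the main piece of bookkeeping and the step I would be most careful about, since it is where the definition of $\mathfrak{i}V$ and the symmetry of $U$ interact; once it is secured, nothing else requires effort.

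With the adjoint relation in hand, I would apply Lemma \ref{trI} with $k=1$ to expand
\begin{equation}
\mathrm{tr}(\mathfrak{i}W)=\frac{2(2n-1)}{n(n-1)}\,W+\frac{(n-2)(n-3)}{n(n-1)}\,\mathfrak{i}\,\mathrm{tr}W. \nonumber
\end{equation}
Taking the dot product of this equation with $W$, using $\mathrm{tr}(\mathfrak{i}W)=0$ on the left, and applying the adjoint relation to the cross term in the form $W\cdot(\mathfrak{i}\,\mathrm{tr}W)=\mathrm{tr}W\cdot\mathrm{tr}W$, I obtain
\begin{equation}
0=\frac{2(2n-1)}{n(n-1)}\,(W\cdot W)+\frac{(n-2)(n-3)}{n(n-1)}\,(\mathrm{tr}W\cdot\mathrm{tr}W). \nonumber
\end{equation}
For $n\ge2$ the first coefficient is strictly positive and the second is nonnegative, while $W\cdot W$ and $\mathrm{tr}W\cdot\mathrm{tr}W$ are nonnegative by positive-definiteness of the dot product; hence each term vanishes, so $W\cdot W=0$ and therefore $W=0$, giving $V_1=V_2$. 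The low-order conventions are consistent with this: when $W$ has order $0$ or $1$ one sets $\mathrm{tr}W=0$, the second term drops out, and $W=0$ follows at once from the strictly positive leading coefficient, so no induction on $n$ is needed.
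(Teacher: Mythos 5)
Your proof is correct, and it takes a genuinely different route from the paper's. The paper argues by contradiction: it writes $V_1-V_2=\mathfrak{i}^{k-1}W$ with $W$ chosen so that it cannot be factored further by $\mathfrak{i}$, applies Lemma \ref{trI} to $\mathrm{tr}(\mathfrak{i}^kW)=0$, and concludes that $W$ would nevertheless have to equal a multiple of $\mathfrak{i}\,\mathrm{tr}W$ --- contradicting the choice of $W$. That argument quietly relies on cancelling the factor $\mathfrak{i}^{k-1}$ from the displayed identity (i.e.\ on injectivity of multiplication by $\mathfrak{i}$ on symmetric tensors), a point the paper does not address. Your argument sidesteps both the maximal factorization and that cancellation: you use Lemma \ref{trI} only once, with $k=1$, and close the argument with the adjointness identity $(\mathfrak{i}V)\cdot U=V\cdot\mathrm{tr}U$ together with positive-definiteness of the dot product. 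Your adjoint relation is correctly justified (the permutation average in \eqref{permavg} can be dropped when pairing against a symmetric tensor, and the remaining contraction of the two $\mathfrak{i}$-indices is exactly $\mathrm{tr}U$; the fact that the paper's trace contracts the \emph{last} two indices is harmless since $U$ is symmetric), the sign analysis of the two coefficients is right for all $n\ge 2$ including the degenerate cases $n=2,3$ where the second coefficient vanishes, and no induction is needed. What your approach buys is a self-contained, gap-free proof at the cost of one extra bookkeeping lemma; what the paper's buys is brevity, at the cost of the implicit injectivity step. Your adjoint identity would in fact also patch the paper's proof, since it yields the injectivity of $W\mapsto\mathfrak{i}W$ directly.
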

\begin{proof}
  Suppose there are two such tensors $V_1\ne V_2$. Then, we deduce that $\mathfrak{i}(V_1-V_2)$ is traceless. We could write $V_1-V_2=\mathfrak{i}^{k-1}W$ where $W\ne 0$ cannot be written as the product of $\mathfrak{i}$ and another tensor. 

  However, the above assumption on $W$ leads to contradiction. 
  Indeed, using \eqref{traceI}, we deduce that 
  $$
  \mathrm{tr}(\mathfrak{i}^kW)=\mathfrak{i}^{k-1}\left(\frac{2k(2n-2k+1)}{n(n-1)}W+\frac{(n-2k)(n-2k-1)}{n(n-1)}\mathfrak{i}\mathrm{tr}W\right)=0, 
  $$
  indicating that $W=-\frac{(n-2k)(n-2k-1)}{2k(2n-2k+1)}\mathfrak{i}\mathrm{tr}W$. 
\end{proof}
Thus, the above derivation actually gives a way to construct symmetric traceless tensors from any symmetric tensor $U$ with the form $U-\mathfrak{i}V$. 
\begin{definition}\label{u0}
For any symmetric tensor $U$, we define $(U)_0$ as the symmetric traceless tensor constructed by $U-\mathfrak{i}W$. It can be done by expressing $U$ as a linear combination of $\bm{e}_1^{k_1}\bm{e}_2^{k_2}\bm{e}_3^{k_3}$. 
\end{definition}

\begin{proposition}
The $2n+1$ symmetric traceless tensors given by \eqref{SymTrls_monomial} and \eqref{TrZeroCoef} are linearly independent. Thus, they give a basis for $n$-th order symmetric traceless tensors, whose dimension is $2n+1$. 
\end{proposition}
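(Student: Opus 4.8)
The plan is to translate the linear-independence question into one about homogeneous polynomials, and then annihilate all the correction terms by restricting to the complex isotropic cone. I would introduce a formal vector variable $\bm{x}=x_i\bm{e}_i\in\mathbb{C}^3$ and associate to each symmetric $n$-tensor $U$ the homogeneous polynomial $U\cdot\bm{x}^{\otimes n}$, where $\bm{x}^{\otimes n}$ is the $n$-fold tensor power. This assignment is linear, so any linear relation among the tensors $(\bm{e}_1^{k_1}\bm{e}_2^{k_2}\bm{e}_3^{k_3})_0$ produces the same relation among the polynomials $(\bm{e}_1^{k_1}\bm{e}_2^{k_2}\bm{e}_3^{k_3})_0\cdot\bm{x}^{\otimes n}$. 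The key observation is that, since $\bm{x}^{\otimes n}$ is already symmetric, a symmetrized monomial contracts in a completely factorized way,
\begin{equation*}
(\bm{e}_1^{a}\bm{e}_2^{b}\bm{e}_3^{c}\mathfrak{i}^{l})\cdot\bm{x}^{\otimes n}=x_1^{a}x_2^{b}x_3^{c}\,(x_1^2+x_2^2+x_3^2)^{l},\qquad a+b+c+2l=n,
\end{equation*}
where the factor $x_1^2+x_2^2+x_3^2$ arises from $\mathfrak{i}$ via \eqref{i_m}. Applying this to \eqref{SymTrls_monomial}, every term with $j_1+j_2>0$ carries a positive power of $x_1^2+x_2^2+x_3^2$, so on the cone $\mathcal{C}=\{\bm{x}\in\mathbb{C}^3:x_1^2+x_2^2+x_3^2=0\}$ only the $j_1=j_2=0$ term survives, with coefficient $a^{k_1,k_2,k_3}_{0,0}=1$; hence $(\bm{e}_1^{k_1}\bm{e}_2^{k_2}\bm{e}_3^{k_3})_0\cdot\bm{x}^{\otimes n}=x_1^{k_1}x_2^{k_2}x_3^{k_3}$ for every $\bm{x}\in\mathcal{C}$.

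With this reduction the independence proof is short. Suppose $\sum c_{k_1,k_2,k_3}(\bm{e}_1^{k_1}\bm{e}_2^{k_2}\bm{e}_3^{k_3})_0=0$, the sum running over $k_3\in\{0,1\}$ with $k_1+k_2+k_3=n$. Contracting with $\bm{x}^{\otimes n}$ and restricting to $\mathcal{C}$ gives the scalar identity $\sum c_{k_1,k_2,k_3}x_1^{k_1}x_2^{k_2}x_3^{k_3}=0$ on $\mathcal{C}$. Since $\mathcal{C}$ is invariant under $x_3\mapsto-x_3$, the even part ($k_3=0$) and the odd part ($k_3=1$) must vanish on $\mathcal{C}$ separately. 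The $k_3=0$ terms involve no $x_3$, and every $(x_1,x_2)\in\mathbb{C}^2$ extends to a point of $\mathcal{C}$ by choosing $x_3$ with $x_3^2=-(x_1^2+x_2^2)$; thus $\sum_{k_1+k_2=n}c_{k_1,k_2,0}x_1^{k_1}x_2^{k_2}$ vanishes identically on $\mathbb{C}^2$, forcing all $c_{k_1,k_2,0}=0$ by independence of the monomials in two variables. For the $k_3=1$ terms I would factor out $x_3$ and repeat the argument on the Zariski-dense locus $x_1^2+x_2^2\neq0$, giving all $c_{k_1,k_2,1}=0$.

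To upgrade independence to a basis of dimension $2n+1$ I would prove spanning. The monomials $\bm{e}_1^{k_1}\bm{e}_2^{k_2}\bm{e}_3^{k_3}$ with $k_1+k_2+k_3=n$ span all symmetric $n$-tensors, and by Proposition~\ref{st_unique} the map $U\mapsto(U)_0$ of Definition~\ref{u0} is linear and fixes every symmetric traceless tensor; applying it to an arbitrary symmetric traceless $U$ written in monomials expresses $U$ as a combination of the $(\bm{e}_1^{k_1}\bm{e}_2^{k_2}\bm{e}_3^{k_3})_0$, and the terms with $k_3\ge2$ are reduced to $k_3\in\{0,1\}$ through \eqref{SymTrls_monomial2}. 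A direct count gives $(n+1)+n=2n+1$ such tensors, so with independence they form a basis. I expect the main obstacle to be the polynomial-reduction step: verifying cleanly that all correction terms carry the factor $x_1^2+x_2^2+x_3^2$, so the cone restriction collapses each tensor to a bare monomial, and in particular handling the $k_3=1$ family, where one must divide by $x_3$ and invoke density rather than a direct substitution. Working over $\mathbb{C}$ (so that $\mathcal{C}$ is nontrivial) is essential and should be stated explicitly.
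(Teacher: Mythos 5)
Your proof is correct, but it takes a genuinely different route from the paper's. The paper argues entirely at the level of tensor components: by Definition \ref{u0}, a vanishing linear combination of the $(\bm{e}_1^{k_1}\bm{e}_2^{k_2}\bm{e}_3^{k_3})_0$ forces the corresponding combination of bare monomials to equal $\mathfrak{i}W$ for some symmetric $W$, and since the left-hand side, expanded as in \eqref{expnd0}, has nonzero components only when at most one index equals $3$, while $\mathfrak{i}W$ with $W\neq 0$ cannot have that property, it concludes $W=0$ and hence all coefficients vanish. You instead pass to the associated homogeneous polynomials $U\cdot\bm{x}^{\otimes n}$ and restrict to the complex isotropic cone $x_1^2+x_2^2+x_3^2=0$, which kills every correction term carrying a factor of $\mathfrak{i}$ and collapses each basis tensor to the bare monomial $x_1^{k_1}x_2^{k_2}x_3^{k_3}$; the parity split in $x_3$ and the Zariski-density argument then finish the job. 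This is the classical harmonic-polynomial technique; it costs you the complexification (which you rightly flag as essential) but buys a cleaner mechanism than index counting, and it generalizes readily. Both arguments ultimately exploit the same structural facts --- the corrections are multiples of $\mathfrak{i}$ and the leading monomials have $\bm{e}_3$-degree at most one --- so the proofs are cousins rather than strangers. One genuine plus of your write-up: you prove the spanning claim explicitly (via linearity of $U\mapsto(U)_0$, Proposition \ref{st_unique}, and the reduction \eqref{SymTrls_monomial2} of the $k_3\ge 2$ monomials), whereas the paper's printed proof establishes only linear independence and leaves the dimension count implicit.
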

\begin{proof}
Suppose some constants $a_{(k_1k_2k_3)}$ make 
$$
\sum_{\substack{k_1+k_2+k_3=n\\k_3=0,1}}a_{(k_1k_2k_3)}(\bm{e}_1^{k_1}\bm{e}_2^{k_2}\bm{e}_3^{k_3})_0=0. 
$$
According to Definition \ref{u0},
there exists a tensor $W$ such that 
$$
\sum_{\substack{k_1+k_2+k_3=n\\k_3=0,1}}a_{(k_1k_2k_3)}\bm{e}_1^{k_1}\bm{e}_2^{k_2}\bm{e}_3^{k_3}=\mathfrak{i}W. 
$$
Let us expand the above in the form of \eqref{expnd0}. On the left-hand side, only the components with at most one $\bm{e}_3$ can be nonzero, so $W$ can only take zero. 
It means that the left-hand side is also zero, leading to $a_{(k_1k_2k_3)}=0$. 
\end{proof}

\subsection{Orthogonal basis}
The construction of orthogonal basis is closely related to group representation \cite{SpecFun}. 

We denote by $\sqrt{-1}$ the imaginary unit. 
First, we consider $(\bm{e}_2+\sqrt{-1}\bm{e}_3)^k$. 
We can calculate directly that 
$$
\mathrm{tr}(\bm{e}_2+\sqrt{-1}\bm{e}_3)^2=\mathrm{tr}(\bm{e}_2^2-\bm{e}_3^2+2\sqrt{-1}\bm{e}_2\bm{e}_3)=0. 
$$
Therefore, $(\bm{e}_2+\sqrt{-1}\bm{e}_3)^k$ is a symmetric traceless tensor. 
We expand this tensor and look at its real and imaginary parts, which are both symmetric traceless. 
In particular, we would like to express the two tensors by $\bm{e}_1^{k_1}\bm{e}_2^{k_2}\bm{e}_3^{k_3}$ with $k_3=0,1$. 
We shall see that 
\begin{align}
  &(\bm{e}_2+\sqrt{-1}\bm{e}_3)^n=\sum_{2k\le n}(-1)^k{n\choose 2k}\bm{e}_2^{n-2k}\bm{e}_3^{2k}+\sqrt{-1}\sum_{2k+1\le n}(-1)^k{n\choose 2k+1}\bm{e}_2^{n-2k-1}\bm{e}_3^{2k+1}\nonumber\\
  =&\sum_{2k\le n}(-1)^k{n\choose 2k}\bm{e}_2^{n-2k}(\mathfrak{i}-\bm{e}_1^2-\bm{e}_2^2)^{k}
  +\sqrt{-1}\sum_{2k+1\le n}(-1)^k{n\choose 2k+1}\bm{e}_2^{n-2k-1}(\mathfrak{i}-\bm{e}_1^2-\bm{e}_2^2)^k\bm{e}_3\nonumber\\
  =&\tilde{T}_n(\bm{e}_2,\mathfrak{i}-\bm{e}_1^2)+\sqrt{-1}\tilde{U}_{n-1}(\bm{e}_2,\mathfrak{i}-\bm{e}_1^2)\bm{e}_3.
  \label{chebyshev0}
\end{align}
We could check the coefficients and find that $\tilde{T}_n$ and $\tilde{U}_{n-1}$ are defined from Chebyshev polynomials, 
$$
\tilde{T}_n(y,z)=z^{n/2}T_n(y/\sqrt{z}),\quad \tilde{U}_n(y,z)=z^{n/2}U_n(y/\sqrt{z}), 
$$ 
where $T_n(\cos\theta)=\cos n\theta$ and $U_n(\cos\theta)\sin\theta=\sin(n+1)\theta$ are Chebyshev polynomials of the first and the second kind, respectively. 

Based on the above two tensors, we can derive an orthogonal basis of $n$-th order symmetric traceless tensors. 
Consider the tensors in the following form, 
\begin{equation}
  \bigg(a_{k,n}\bm{e}_1^k+\sum_{j=1}^{[k/2]}a_{j;k,n}\bm{e}_1^{k-2j}\mathfrak{i}^j\bigg)V,
\end{equation}
where $V=\tilde{T}_{n-k}(\bm{e}_2,\mathfrak{i}-\bm{e}_1^2)$ or $\tilde{U}_{n-k-1}(\bm{e}_2,\mathfrak{i}-\bm{e}_1^2)\bm{e}_3$. 
We require the above tensor to be traceless. 
Similar to the derivation of $(\bm{e}_1^{k_1}\bm{e}_2^{k_2}\bm{e}_3^{k_3})_0$, 
by calculating the trace using \eqref{traceI}, we solve that 
\begin{equation}
  a_{j;k,n}=(-1)^j\frac{k!(2n-2j-1)!!}{2^j j! (k-2j)!(2n-1)!!}a_{k,n}.
\end{equation}
We recognize that the coefficients are proportional to those of the Jacobi polynomials with two identical indices. 
Define $\tilde{P}_k^{(\mu,\mu)}(y,z)=z^{k/2}P_k^{(\mu,\mu)}(y/\sqrt{z})$,
where $P_k^{(\mu,\mu)}(x)$ is the Jacobi polynomial with the indices $(\mu,\mu)$.
The tensors derived above give $2n+1$ $n$-th order symmetric traceless tensors, 
\begin{align}
  &
  \tilde{P}_k^{(n-k,n-k)}(\bm{e}_1,\mathfrak{i})\tilde{T}_{n-k}(\bm{e}_2,\mathfrak{i}-\bm{e}_1^2),\quad k=0,\ldots,n,\nonumber\\
  &
  \tilde{P}_k^{(n-k,n-k)}(\bm{e}_1,\mathfrak{i})\tilde{U}_{n-k-1}(\bm{e}_2,\mathfrak{i}-\bm{e}_1^2)\bm{e}_3,\quad k=0,\ldots,n-1.  \label{SymTrls_gamma}
\end{align}
In the special case $k=n$, we obtain the tensor $(\bm{e}_1^n)_0$ (cf. \eqref{TrZeroCoef}). 

\begin{proposition}
The $2n+1$ tensors given in \eqref{SymTrls_gamma} form an orthogonal basis of $n$-th order symmetric traceless tensors. 
\end{proposition}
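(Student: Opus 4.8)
The dimension of the space of $n$-th order symmetric traceless tensors is already $2n+1$ by the preceding proposition, and the list \eqref{SymTrls_gamma} contains exactly $2n+1$ tensors. Hence it suffices to show that these tensors are pairwise orthogonal under the dot product and that none of them vanishes: orthogonality together with nonvanishing forces linear independence, and a linearly independent set of the correct cardinality is automatically a basis. The plan is therefore to reduce the entire statement to the classical orthogonality of spherical harmonics.

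The bridge I would use is the map sending a symmetric tensor $U$ to the homogeneous polynomial $p_U(\bm{x})=U_{i_1\ldots i_n}x_{i_1}\cdots x_{i_n}=U\cdot\bm{x}^{\otimes n}$. For symmetric $U$ this map is injective (the polynomial recovers the tensor by polarization), and tracelessness of $U$ is exactly the condition that $p_U$ be harmonic; thus $U\mapsto p_U$ is a linear isomorphism from $n$-th order symmetric traceless tensors onto the degree-$n$ solid harmonics, whose restrictions to $S^2$ are the degree-$n$ spherical harmonics. Both the tensor dot product $U\cdot W$ and the $L^2$ pairing $\int_{S^2}p_U p_W\,\md\bm{x}$ are $SO(3)$-invariant inner products on this space (rotation invariance of the dot product is the $\mathfrak{p}_1=\mathfrak{p}_2=\mathfrak{i}$ case of the identity established in Section \ref{notation}). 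Since the degree-$n$ spherical harmonics form an irreducible $SO(3)$-representation, Schur's lemma yields a single positive constant $c_n$ with $U\cdot W=c_n\int_{S^2}p_U p_W\,\md\bm{x}$ for all such $U,W$. Consequently orthogonality and nonvanishing of the tensors in \eqref{SymTrls_gamma} are equivalent to the same properties of their associated polynomials on the sphere.

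It then remains to identify these polynomials. Contracting with $\bm{x}^{\otimes n}$ and using $\bm{e}_1\cdot\bm{x}=x_1$, $\mathfrak{i}\cdot\bm{x}^{\otimes2}=|\bm{x}|^2$ and $(\mathfrak{i}-\bm{e}_1^2)\cdot\bm{x}^{\otimes2}=x_2^2+x_3^2$, the factor $\tilde{P}_k^{(n-k,n-k)}(\bm{e}_1,\mathfrak{i})$ becomes $P_k^{(n-k,n-k)}(x_1)$ on $S^2$, while $\tilde{T}_{n-k}(\bm{e}_2,\mathfrak{i}-\bm{e}_1^2)$ and $\tilde{U}_{n-k-1}(\bm{e}_2,\mathfrak{i}-\bm{e}_1^2)\bm{e}_3$ become the real and imaginary parts of $(x_2+\sqrt{-1}x_3)^{n-k}$ by the very definitions feeding \eqref{chebyshev0}. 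Writing $\bm{x}=(\cos\theta,\sin\theta\cos\phi,\sin\theta\sin\phi)$, the two families thus evaluate to $P_k^{(n-k,n-k)}(\cos\theta)\,\sin^{n-k}\theta\,\cos((n-k)\phi)$ and $P_k^{(n-k,n-k)}(\cos\theta)\,\sin^{n-k}\theta\,\sin((n-k)\phi)$. By the classical relation between Jacobi polynomials $P_k^{(m,m)}$ and associated Legendre functions $P_n^{m}$, these are precisely nonzero multiples of the real spherical harmonics of degree $n$ and order $m=n-k$. Pairwise orthogonality now follows from the $\phi$-integration: $\int_0^{2\pi}\cos m\phi\cos m'\phi\,\md\phi$ and $\int_0^{2\pi}\sin m\phi\sin m'\phi\,\md\phi$ vanish for $m\neq m'$, while $\int_0^{2\pi}\cos m\phi\sin m'\phi\,\md\phi=0$ always; nonvanishing is immediate.

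I expect the main obstacle to be the special-function bookkeeping in the last step: one must confirm that the weights produced by the Jacobi coefficients genuinely match the associated Legendre functions, and be careful that the order $m=0$ (i.e. $k=n$) is represented only through the $\cos$ family — which is exactly why the second list in \eqref{SymTrls_gamma} stops at $k=n-1$ — so that the count stays at $2n+1$ distinct harmonics. The representation-theoretic step is clean once the isomorphism $U\mapsto p_U$ and the $SO(3)$-invariance of both pairings are in place; alternatively, if one prefers to avoid Schur's lemma, the proportionality $U\cdot W=c_n\int_{S^2}p_U p_W\,\md\bm{x}$ can be checked directly on the monomial basis, at the cost of a longer computation.
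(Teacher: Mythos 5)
Your proof is correct, but it takes a genuinely different route from the paper's. The paper's argument is a short, self-contained coordinate computation: assuming $k>k'$, it uses the tracelessness of the second tensor to replace $\tilde{P}_k^{(n-k,n-k)}(\bm{e}_1,\mathfrak{i})$ by its leading term $b_k^{(n-k,n-k)}\bm{e}_1^k$, and then observes that the resulting first tensor has nonzero coordinates only on monomials $\bm{e}_1^{k_1}\bm{e}_2^{k_2}\bm{e}_3^{k_3}$ with $k_1=k$, while the second tensor involves only monomials with $k_1\le k'<k$; the supports are disjoint, so the dot product vanishes. You instead transport the whole question to $L^2(S^2)$ via $U\mapsto U\cdot\bm{x}^{\otimes n}$, identify the basis elements with the real spherical harmonics of degree $n$ and azimuthal order $m=n-k$, and conclude from the elementary $\phi$-orthogonality. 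What your route buys is conceptual clarity: it explains \emph{why} the basis is orthogonal (it realizes the decomposition by azimuthal order) and connects naturally to the Wigner D-function material in the next subsection; the cost is importing the isomorphism between symmetric traceless tensors and harmonic polynomials and a Schur-type uniqueness of invariant inner products. Two small points of care: over $\mathbb{R}$ the proportionality of the two invariant inner products should be argued via the self-adjoint intertwiner $A$ with $\langle u,v\rangle_2=\langle Au,v\rangle_1$ (its real eigenspaces are invariant, so $A$ is scalar by irreducibility), rather than by the bare division-algebra form of Schur's lemma; and since each pair (order $m$, $\cos$ or $\sin$ type) occurs exactly once in \eqref{SymTrls_gamma}, your $\phi$-integration alone already settles all the orthogonality, so the bookkeeping matching Jacobi coefficients to associated Legendre functions that you flag as the main obstacle can in fact be skipped entirely.
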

\begin{proof}
Because we have known the dimension is $2n+1$, we only need to show the orthogonality. 
Assume $k>k'$. 
We calculate the dot product of the following two tensors, 
$$
\Big(\tilde{P}_k^{(n-k,n-k)}(\bm{e}_1,\mathfrak{i})\tilde{T}_{n-k}(\bm{e}_2,\mathfrak{i}-\bm{e}_1^2)\Big)\cdot \Big(\tilde{P}_{k'}^{(n-k',n-k')}(\bm{e}_1,\mathfrak{i})\tilde{T}_{n-k'}(\bm{e}_2,\mathfrak{i}-\bm{e}_1^2)\Big). 
$$
Using the fact that the second tensor is traceless, we could eliminate all the $\mathfrak{i}$ in $\tilde{P}_k^{(n-k,n-k)}$, so that it equals to 
$$
\Big(b_k^{(n-k,n-k)}\bm{e}_1^k\tilde{T}_{n-k}(\bm{e}_2,\mathfrak{i}-\bm{e}_1^2)\Big)\cdot \Big(\tilde{P}_{k'}^{(n-k',n-k')}(\bm{e}_1,\mathfrak{i})\tilde{T}_{n-k'}(\bm{e}_2,\mathfrak{i}-\bm{e}_1^2)\Big), 
$$
where $b_k^{(n-k,n-k)}$ is the leading coefficient of $\tilde{P}_k^{(n-k,n-k)}$. 
Note that $\tilde{T}_{n-k'}$ can be written as a linear combination of $\bm{e}_2^{n-k'-2j}\bm{e}_3^{2j}$, so the second tensor can be written as linear combination of $\bm{e}_1^{k_1}\bm{e}_2^{k_2}\bm{e}_3^{k_3}$ with $k_1\le k'<k$. 
Thus, certain coordinate of the two tensors cannot be nonzero simultaneously, therefore the dot product is zero.
The orthogonality of other pairs of tensors can be shown similarly. 
\end{proof}

The above derivation is suitable for the rotation of the symmetric traceless tensors by any $\mathfrak{p}\in SO(3)$, i.e. $\mathfrak{p}\circ (\bm{e}_1^{k_1}\bm{e}_2^{k_2}\bm{e}_3^{k_3})_0=(\bm{m}_1^{k_1}\bm{m}_2^{k_2}\bm{m}_3^{k_3})_0$. 
We will see shortly that the two bases of symmetric traceless tensors derived in this section are convenient for us to derive nonvanishing tensors for each symmetry. 

\subsection{Components of symmetric traceless tensors and Wigner D-functions}
For the space of $n$-th order symmetric traceless tensors, let us choose any basis $X^n_1,\ldots, X^n_{2n+1}$, where the superscript $n$ represents the tensor order. 
For each $X^n_{i}(\mathfrak{p})$ as a function of $\mathfrak{p}$, we could express it by the basis $X^n_{j}=X^n_{j}(\mathfrak{i})$. So, the components of $X^n_{i}(\mathfrak{p})$ give at most $2n+1$ linearly independent scalar functions of $\mathfrak{p}$, for which we can choose $X^n_{j}(\mathfrak{i})\cdot X^n_{i}(\mathfrak{p})$ for $1\le j\le 2n+1$. Since $i$ also ranges from $1$ to $2n+1$, we have written down $(2n+1)^2$ scalar functions of $\mathfrak{p}$. 
Define $\mathbb{T}^n$ as the function space formed by the components of the tensors $U(\mathfrak{p})$ where the order of $U$ is not greater than $n$. 
Besides, we define $\mathbb{T}_{\mathrm{sym},0}^n$ as the function space spanned by the components of symmetric traceless tensors $\big(U(\mathfrak{p})\big)_0$ of the order $n$, or equivalently those spanned by $X^n_{j}(\mathfrak{i})\cdot X^n_{i}(\mathfrak{p})$. 
By our discussion above, we have 
\begin{align}
  &\mathbb{T}^n=\sum_{j=0}^n\mathbb{T}_{\mathrm{sym},0}^j,\quad
  \mathrm{dim}\mathbb{T}_{\mathrm{sym},0}^n\le (2n+1)^2. \label{tsum}
\end{align}

We point out that the functions in $\mathbb{T}_{\mathrm{sym},0}^n$ are linearly equivalent to Wigner D-functions with the major index $n$, so that they give a complete expansion for functions on $SO(3)$. 
This statement is also related to group representation, 
which we do not attempt to introduce here. 
Instead, we will write down a quick derivation to verify directly. 
To introduce Wigner D-functions, we state them as eigenfunctions of differential operators in $SO(3)$.
%
Denote by $L_i$ the derivatives along the infinitesimal rotation about $\bm{m}_i$, which satisfy (see, for example, \cite{wigner_SXZ})
\begin{align}
  L_1\bm{m}_{2}=\bm{m}_{3},\ L_2\bm{m}_{3}=\bm{m}_{1},\ L_3\bm{m}_{1}=\bm{m}_{2},\ L_1\bm{m}_{3}=-\bm{m}_{2},\ L_2\bm{m}_{1}=-\bm{m}_{3},\ L_3\bm{m}_{2}=-\bm{m}_{1}. 
  \label{diffL}
\end{align}
The Wigner D-functions can be written by Euler angles as 
\begin{equation}
  D_{mm'}^n\big(\mathfrak{p}\big)=\exp(-\sqrt{-1}m\beta)d_{mm'}^n(\alpha)\exp(-\sqrt{-1}m'\gamma),\quad m,m'=-n,-n+1,\ldots,n. 
\end{equation}
The function $d_{mm'}^n(\alpha)$, is a trigonometric polynomial given in the following form, 
\begin{equation}
  d_{mm'}^n(\alpha)=c_{mm'}^n\left(\sin\frac{\alpha}{2}\right)^a
  \left(\cos\frac{\alpha}{2}\right)^b
  P_k^{(a,b)}(\cos\alpha), 
\end{equation}
where $k=n-\max (|m|,|m'|)$, $a=|m-m'|$, $b=|m+m'|$, $c_{mm'}^n$ are some constants, 
and $P_k^{(a,b)}$ is the Jacobi polynomial. 
When we fix $n$, the functions $D_{mm'}^n$ give linearly independent eigenfunctions of the Laplacian $L^2=L_1^2+L_2^2+L_3^2$, a self-adjoint operator, with the eigenvalue $-n(n+1)$. 

\begin{proposition}\label{eigen}
The functions in $\mathbb{T}_{\mathrm{sym},0}^n$ are eigenfunctions of $L^2$ with the eigenvalue $-n(n+1)$. 
\end{proposition}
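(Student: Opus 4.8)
The plan is to reduce the statement to a single componentwise identity for the action of the operator $L^2$ on a rotated symmetric traceless tensor, and then to verify that identity by a direct index computation in which both the symmetry and the tracelessness are used. A spanning function of $\mathbb{T}_{\mathrm{sym},0}^n$ has the form $X^n_j(\mathfrak{i})\cdot X^n_i(\mathfrak{p})$, where $X^n_j(\mathfrak{i})$ is a constant tensor and $U(\mathfrak{p}):=X^n_i(\mathfrak{p})=\mathfrak{p}\circ X^n_i$ is an $n$-th order symmetric traceless tensor-valued function of $\mathfrak{p}$. Since each $L_i$ is a first-order differential operator acting on the scalar components (functions of $\mathfrak{p}$) and the entries of $X^n_j(\mathfrak{i})$ are constants, $L^2$ commutes with contraction against $X^n_j(\mathfrak{i})$. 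Hence it suffices to prove the componentwise identity $L^2 U(\mathfrak{p})=-n(n+1)U(\mathfrak{p})$, which then transfers the eigenvalue to every scalar spanning function.

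Next I would use that $L_i$ is a derivation. Writing $U(\mathfrak{p})=U_{a_1\ldots a_n}\bm{m}_{a_1}\otimes\cdots\otimes\bm{m}_{a_n}$ with constant components $U_{a_1\ldots a_n}$, and reading off from \eqref{diffL} that $L_i\bm{m}_j=\epsilon_{ijk}\bm{m}_k$ (so in particular $L_i\bm{m}_i=0$), the Leibniz rule gives $L_i=\sum_l L_i^{(l)}$, where $L_i^{(l)}$ acts on the $l$-th tensor factor only. Squaring, $L^2=\sum_l\sum_i (L_i^{(l)})^2+\sum_{l\ne m}\sum_i L_i^{(l)}L_i^{(m)}$. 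A short $\epsilon$-$\delta$ computation gives the single-factor result $\sum_i (L_i)^2\bm{m}_j=-2\bm{m}_j$, so the diagonal part contributes $-2n\,U(\mathfrak{p})$.

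The crux is the off-diagonal cross terms. For a fixed pair $l\ne m$, the operator $\sum_i L_i^{(l)}L_i^{(m)}$ replaces the two factors $\bm{m}_{a_l}\otimes\bm{m}_{a_m}$ by $\sum_i \epsilon_{ia_lk}\epsilon_{ia_mk'}\bm{m}_k\otimes\bm{m}_{k'}=\delta_{a_la_m}\mathfrak{i}-\bm{m}_{a_m}\otimes\bm{m}_{a_l}$, using $\mathfrak{i}=\sum_k\bm{m}_k\otimes\bm{m}_k$ from \eqref{i_m}. Contracting the first term against $U_{a_1\ldots a_n}$ produces a pairwise trace of $U$, which vanishes because $U$ is traceless; contracting the second term and relabelling the dummy indices $a_l\leftrightarrow a_m$ returns $-U(\mathfrak{p})$ because $U$ is symmetric. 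Thus each of the $n(n-1)$ ordered pairs contributes $-U(\mathfrak{p})$, the off-diagonal sum equals $-n(n-1)U(\mathfrak{p})$, and adding the diagonal part yields $L^2U(\mathfrak{p})=-(2n+n(n-1))U(\mathfrak{p})=-n(n+1)U(\mathfrak{p})$, as claimed.

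I expect the main obstacle to be precisely this off-diagonal bookkeeping: carrying out the $\epsilon$-$\delta$ contraction correctly and recognizing that it splits into exactly two pieces, one annihilated by tracelessness and one reduced to the identity by symmetry. This is the single place where both defining properties of symmetric traceless tensors are essential, whereas the rest is only the Leibniz rule and linearity. A cleaner but less self-contained alternative would be to invoke that the $n$-th order symmetric traceless tensors carry the spin-$n$ irreducible representation of the rotation algebra and that $L^2$ is its Casimir operator with eigenvalue $-n(n+1)$; I would nonetheless prefer the direct verification above to stay consistent with the elementary style of the section.
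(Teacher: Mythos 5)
Your proof is correct. It differs from the paper's in how the computation is organized: the paper applies $L^2$ to the individual monomials $\bm{m}_1^{k_1}\bm{m}_2^{k_2}\bm{m}_3^{k_3}\mathfrak{i}^{\,n-k_1-k_2-k_3}$, obtaining $-n(n+1)$ times the monomial plus correction terms of the form $k_i(k_i-1)\,(\cdots)\,\mathfrak{i}$, and then relies on the explicit traceless combination \eqref{SymTrls_monomial} (with the coefficients \eqref{TrZeroCoef}) to make those corrections cancel; you instead work basis-free on an arbitrary $n$-th order symmetric traceless $U(\mathfrak{p})$, split $L^2$ via the Leibniz rule into diagonal and cross terms, and dispose of the cross terms with the $\epsilon$-$\delta$ identity, tracelessness killing the $\delta_{a_la_m}\mathfrak{i}$ piece and symmetry reducing the transposition piece to $-U$. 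The two computations are the same at their core (your $\delta_{a_la_m}\mathfrak{i}$ terms are exactly the paper's $k_i(k_i-1)$ corrections when specialized to monomials), but your organization makes the roles of symmetry and tracelessness explicit and avoids any appeal to the particular basis or its coefficients, so it applies verbatim to every element of $\mathbb{T}_{\mathrm{sym},0}^n$ without the final substitution step; the paper's version is shorter only because the machinery of Section 3.1 is already in place. Your preliminary reduction (that $L^2$ commutes with contraction against the constant tensor $X^n_j(\mathfrak{i})$) is a point the paper leaves implicit and is worth stating as you do.
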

\begin{proof}
We calculate the Laplacian of monomials using \eqref{diffL}, 
\begin{align}
&-L^2(\bm{m}_1^{k_1}\bm{m}_2^{k_2}\bm{m}_3^{k_3}\mathfrak{i}^{n-k_1-k_2-k_3})\nonumber\\
=&n(n+1)\bm{m}_1^{k_1}\bm{m}_2^{k_2}\bm{m}_3^{k_3}\mathfrak{i}^{n-k_1-k_2-k_3}-k_1(k_1-1)\bm{m}_1^{k_1-2}\bm{m}_2^{k_2}\bm{m}_3^{k_3}\mathfrak{i}^{n-k_1-k_2-k_3+2}\nonumber\\
&-k_2(k_2-1)\bm{m}_1^{k_1}\bm{m}_2^{k_2-2}\bm{m}_3^{k_3}\mathfrak{i}^{n-k_1-k_2-k_3+2}-k_3(k_3-1)\bm{m}_1^{k_1}\bm{m}_2^{k_2}\bm{m}_3^{k_3-2}\mathfrak{i}^{n-k_1-k_2-k_3+2}. 
\end{align}
Take it into \eqref{SymTrls_monomial} to obtain the result. 
\end{proof}

\begin{proposition}\label{wt}
  The function space spanned by $D_{mm'}^n$ with fixed $n$ satisfies 
  \begin{equation}
    \mathrm{span}\{D_{mm'}^n\}= \mathbb{T}_{\mathrm{sym},0}^n. 
  \end{equation}
\end{proposition}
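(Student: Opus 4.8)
The plan is to exploit that both $\mathbb{T}_{\mathrm{sym},0}^n$ and $\mathrm{span}\{D_{mm'}^n\}$ live inside a single eigenspace of the self-adjoint operator $L^2$, and then to identify that eigenspace. By Proposition \ref{eigen} every function in $\mathbb{T}_{\mathrm{sym},0}^n$ is an eigenfunction of $L^2$ with eigenvalue $-n(n+1)$, and each $D_{mm'}^n$ is such an eigenfunction by construction; hence both spaces are contained in $E_n=\ker\big(L^2+n(n+1)\idn\big)$. Since the $(2n+1)^2$ functions $D_{mm'}^n$ are linearly independent, $\dim\mathrm{span}\{D_{mm'}^n\}=(2n+1)^2$.

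First I would show $E_n=\mathrm{span}\{D_{mm'}^n\}$. Because $L^2$ is self-adjoint on $L^2(SO(3))$, eigenfunctions belonging to distinct eigenvalues $-j(j+1)$ are mutually orthogonal. Using that the full family $\{D_{mm'}^j:j\ge 0\}$ is complete (it gives the complete expansion of functions on $SO(3)$, as quoted just before the statement), any $f\in E_n$ expands in this basis; applying $L^2$ term by term and matching eigenvalues forces every coefficient with $j\ne n$ to vanish, so $f\in\mathrm{span}\{D_{mm'}^n\}$. Together with the trivial reverse inclusion this gives $E_n=\mathrm{span}\{D_{mm'}^n\}$, and in particular $\mathbb{T}_{\mathrm{sym},0}^n\subseteq\mathrm{span}\{D_{mm'}^n\}$.

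It remains to upgrade this inclusion to equality, for which it suffices to prove $\dim\mathbb{T}_{\mathrm{sym},0}^n=(2n+1)^2$; the bound $\le$ is already \eqref{tsum}, so the content is the matching lower bound. Fixing an orthonormal basis $X_1^n,\ldots,X_{2n+1}^n$ of the $n$-th order symmetric traceless tensors, the spanning functions $X_j^n(\mathfrak{i})\cdot X_i^n(\mathfrak{p})$ are exactly the entries $R_{ji}(\mathfrak{p})$ of the $(2n+1)\times(2n+1)$ matrix representing the rotation action $U\mapsto\mathfrak{p}\circ U$ on this space. Their linear independence is equivalent to the linear span of $\{R(\mathfrak{p})\}$ filling the whole space of endomorphisms, i.e. to the irreducibility of this representation. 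This is the \emph{main obstacle}: it is the one genuinely representation-theoretic input, and it cannot be read off from the eigenspace argument alone.

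To clear it without invoking abstract machinery, I would instead establish the reverse inclusion $\mathrm{span}\{D_{mm'}^n\}\subseteq\mathbb{T}_{\mathrm{sym},0}^n$ directly from the explicit orthogonal basis \eqref{SymTrls_gamma}. Taking dot products of these tensors (viewed as functions of $\mathfrak{p}$ through $\bm{e}_i\mapsto\bm{m}_i$) against fixed complex tensors built from $(\bm{e}_2+\sqrt{-1}\bm{e}_3)^k$ as in \eqref{chebyshev0}, and substituting the Euler-angle form \eqref{EulerRep} of $\mathfrak{p}$, should reproduce the factorized expression $\exp(-\sqrt{-1}m\beta)\,d_{mm'}^n(\alpha)\,\exp(-\sqrt{-1}m'\gamma)$ together with the Jacobi-polynomial dependence on $\alpha$, thereby exhibiting each $D_{mm'}^n$ as a component of a symmetric traceless $n$-tensor. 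This simultaneously yields the reverse inclusion and confirms $\dim\mathbb{T}_{\mathrm{sym},0}^n=(2n+1)^2$, completing the proof. The expected difficulty here is purely bookkeeping—aligning the Chebyshev/Jacobi families of \eqref{chebyshev0}--\eqref{SymTrls_gamma} with the Wigner conventions and tracking the rotation factors—rather than any conceptual gap.
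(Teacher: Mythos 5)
Your endgame coincides with the paper's: once $D^n_{mm'}\in\mathbb{T}_{\mathrm{sym},0}^n$ is established, the linear independence of the $(2n+1)^2$ functions $D^n_{mm'}$ together with the bound $\mathrm{dim}\,\mathbb{T}_{\mathrm{sym},0}^n\le(2n+1)^2$ from \eqref{tsum} forces equality, and this is exactly how the paper closes. Given that, your first two paragraphs (identifying the full eigenspace $E_n$ via completeness of the Wigner family) are redundant, and they import the completeness of $\{D^j_{mm'}\}$ as an external input, which the paper is deliberately avoiding --- the whole point of this proposition is to \emph{verify} the completeness claim directly rather than quote it. The genuine divergence is in how you get the inclusion $D^n_{mm'}\in\mathbb{T}_{\mathrm{sym},0}^n$. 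You propose a term-by-term identification of each $D^n_{mm'}$ with a component of a tensor from the orthogonal basis \eqref{SymTrls_gamma}, and call the remaining work ``purely bookkeeping.'' That underestimates the difficulty: the angular part $d^n_{mm'}$ involves Jacobi polynomials $P_k^{(a,b)}$ with $a=|m-m'|\ne b=|m+m'|$ in general, while \eqref{SymTrls_gamma} only contains equal-index polynomials $P_k^{(n-k,n-k)}$; the matching is therefore not a correspondence of basis elements but requires identifying $D^n_{mm'}$ with contractions $X(\mathfrak{i})\cdot Y(\mathfrak{p})$ for suitably chosen complex combinations $X$, which is a real computation, not alignment of conventions. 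The paper sidesteps exactly this: it only shows that $D^n_{mm'}$ is a polynomial of degree at most $n$ in the entries $\mathfrak{p}_{ij}$ (via the explicit quantities $(\mathfrak{p}_{22}+\mathfrak{p}_{33})\pm\sqrt{-1}(\mathfrak{p}_{32}-\mathfrak{p}_{23})$, etc.), hence lies in $\mathbb{T}^n=\sum_{j\le n}\mathbb{T}_{\mathrm{sym},0}^j$, and then uses Proposition \ref{eigen} --- the summands are eigenspaces of $L^2$ with distinct eigenvalues, so the sum is direct --- together with $L^2D^n_{mm'}=-n(n+1)D^n_{mm'}$ to isolate the top graded piece. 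I would strongly recommend adopting that filtering step in place of your direct matching: it is the one idea your write-up is missing, and it turns the ``main obstacle'' you identify into a two-line argument.
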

\begin{proof}
  First, we shall show
  \begin{equation}
    D_{mm'}^n\in \mathbb{T}_{\mathrm{sym},0}^n. 
  \end{equation}
  Let us express $D_{mm'}^n$ by the polynomials of $\mathfrak{p}_{ij}$. 
  By the Euler angles representation, we have
  \begin{align*}
    &2v_{1,\pm}=(\mathfrak{p}_{22}+\mathfrak{p}_{33})\pm \sqrt{-1}(\mathfrak{p}_{32}-\mathfrak{p}_{23})=(1+\cos\alpha)\exp\big(\pm\sqrt{-1}(\beta+\gamma)\big),\\ 
    &2v_{2,\pm}=(\mathfrak{p}_{33}-\mathfrak{p}_{22})\mp \sqrt{-1}(\mathfrak{p}_{23}+\mathfrak{p}_{32})=(1-\cos\alpha)\exp\big(\pm\sqrt{-1}(\beta-\gamma)\big),
  \end{align*}
  Without loss of generality, we assume $m\ge |m'|\ge 0$. 
  \begin{enumerate}
  \item If $m-m'$ is even, we have 
    \begin{align*}
      D_{mm'}^n=&c_{mm'}^n\exp(-\sqrt{-1}m\beta)\exp(-\sqrt{-1}m'\gamma)
      \left(\sin\frac{\alpha}{2}\right)^{m-m'}
      \left(\cos\frac{\alpha}{2}\right)^{m+m'}
      P_k^{(a,b)}(\cos\alpha)\\
      =&c_{mm'}^n
      v_{1,-}^{\frac{m-m'}{2}}
      v_{2,-}^{\frac{m+m'}{2}}
      P_k^{(a,b)}(\mathfrak{p}_{11}). 
    \end{align*}
  \item If $m-m'$ is odd, we have $m-1\ge|m'|$. Thus, 
    \begin{align*}
      D_{mm'}^n=&c_{mm'}^n\exp(-\sqrt{-1}(m-1)\beta)\exp(-\sqrt{-1}m'\gamma)
      \left(\sin\frac{\alpha}{2}\right)^{m-m'-1}
      \left(\cos\frac{\alpha}{2}\right)^{m+m'-1}\\
      &\exp(-\sqrt{-1}\beta)\sin\frac{\alpha}{2}\cos\frac{\alpha}{2}
      P_k^{(a,b)}(\cos\alpha)\\
      =&c_{mm'}^n
      v_{1,-}^{\frac{m-m'-1}{2}}
      v_{2,-}^{\frac{m+m'-1}{2}}
      \frac{\mathfrak{p}_{21}-\sqrt{-1}\mathfrak{p}_{31}}{2}
      P_k^{(a,b)}(\mathfrak{p}_{11}). 
    \end{align*}
  \end{enumerate}
  So we have 
  \begin{equation}
    D_{mm'}^n\in \mathbb{T}^n=\sum_{j=1}^n\mathbb{T}_{\mathrm{sym},0}^j, 
  \end{equation}
  where we use \eqref{tsum}.
  Proposition \ref{eigen} implies that the sum in the above is a direct sum.
  Since $D_{mm'}^n$ is an eigenfunction of $L^2$ with the eigenvalue $-n(n+1)$, we obtain $D_{mm'}^n\in \mathbb{T}_{\mathrm{sym},0}^n$.

  On the other hand, the dimension of the set $\mathrm{span}\{D_{mm'}^n\}$ equals to $(2n+1)^2$. Together with the inequality in \eqref{tsum}, we conclude the proof. 
\end{proof}

\section{Molecular symmetries and invariant tensors\label{mol_sym}}
The molecular symmetry is characterized by the rotations leaving the molecule invariant. 
When looking at the body-fixed frame $\mathfrak{p}$, such a rotation is given by $\mathfrak{s}\in SO(3)$ that transforms $\mathfrak{p}$ to $\mathfrak{ps}$. 
To comprehend this transformation, one could pose the body-fixed frame coincide with the reference frame $(\bm{e}_i)$, then rotate it by $\mathfrak{s}$, followed by $\mathfrak{p}$, resulting in $\mathfrak{ps}$. 

Let us denote by $\mathcal{G}$ the set of all such $\mathfrak{s}$ leaving the molecule invariant.
It is easy to see that
$\mathcal{G}$ is a subgroup of $SO(3)$.
Since for $\mathfrak{s}\in\mathcal{G}$, the orientation $\mathfrak{ps}$ does not differ from $\mathfrak{p}$. 
Therefore, the density function satisfies
\begin{equation}
  \rho(\mathfrak{ps})=\rho(\mathfrak{p}). \label{molsym}
\end{equation}
It shall be clarified that a rigid molecule might also allow improper rotations.
This is, however, beyond the scope of this paper, since an improper rotation brings the body-fixed frame into a left-handed one. 
But the density function $\rho(\mathfrak{p})$, about which the tensors are averaged, is defined on $SO(3)$. 
Therefore, the improper rotations cannot be reflected in $\rho(\mathfrak{p})$. 
The improper rotations will play a role when considering interaction between rigid molecules. 

When symmetric traceless tensors are averaged about $\rho(\mathfrak{p})$, \eqref{molsym} gives some relations for them. 
To simplify the presentation, let us assume $\mathcal{G}$ is finite and illustrate the relations. 
For any tensor $T(\mathfrak{p})$, using \eqref{invariantP}, its average equals to 
\begin{align}
  \langle T(\mathfrak{p})\rangle=\int T(\mathfrak{p})\rho(\mathfrak{p})\md \mathfrak{p}=&\frac{1}{\#\mathcal{G}}\int \sum_{\mathfrak{s}\in \mathcal{G}}T(\mathfrak{p})\rho(\mathfrak{ps})\md \mathfrak{p} 
  =\frac{1}{\#\mathcal{G}}\int \sum_{\mathfrak{s}\in \mathcal{G}}T(\mathfrak{ps})\rho(\mathfrak{p})\md \mathfrak{p}. \label{tg0}
\end{align}
Denote by $T^{\mathcal{G}}$ the average of $T(\mathfrak{p})$ over ${\mathcal{G}}$, 
\begin{equation}
T^{\mathcal{G}}(\mathfrak{p})=\frac{1}{\#\mathcal{G}}\sum_{\mathfrak{s}\in \mathcal{G}}T(\mathfrak{ps}). \label{tg}
\end{equation}
Then, \eqref{tg0} indicates that 
$
\langle T(\mathfrak{p})\rangle=\langle T^{\mathcal{G}}(\mathfrak{p})\rangle. 
$
Therefore, when considering the moments, we only need to focus on the set 
$$
\mathbb{A}^{\mathcal{G},n}=\{T^{\mathcal{G}}(\mathfrak{p}): T \text{ is an $n$-th order symmetric traceless tensor}\}. 
$$

\begin{proposition}\label{inv_van}
  The space of $n$-th order symmetric traceless tensors can be decomposed as the direct sum of two linear subspaces that are mutually orthogonal. 
  One consists of all the tensors that are invariant under $\mathcal{G}$, 
  \begin{equation}
    \mathbb{A}^{\mathcal{G},n}=\{T(\mathfrak{p}): T(\mathfrak{ps})=T(\mathfrak{p}),\ \forall \mathfrak{s}\in \mathcal{G}\}.\label{tgset}
  \end{equation}
  The other consists of tensors that are vanishing when averaged over $\mathcal{G}$, 
  \begin{equation}
    (\mathbb{A}^{\mathcal{G},n})^{\perp}=\{T(\mathfrak{p}): T^{\mathcal{G}}(\mathfrak{p})=0\}. \label{tgcompset}
  \end{equation}
\end{proposition}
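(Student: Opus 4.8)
The plan is to recast the group-averaging $T\mapsto T^{\mathcal{G}}$ as a single linear operator on the finite-dimensional space $V$ of $n$-th order symmetric traceless tensors, and to prove that this operator is an \emph{orthogonal} projection. The two summands asserted in the statement will then be exactly its image and its kernel, and the orthogonality of the decomposition will follow from self-adjointness of the operator.

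First I would set up the operator. Writing $T(\mathfrak{p})=\mathfrak{p}\circ T$ with $T=T(\mathfrak{i})$ and invoking the relation $U(\mathfrak{p}_1\mathfrak{p}_2)=\mathfrak{p}_1\circ U(\mathfrak{p}_2)$ from the preceding discussion, definition \eqref{tg} reads $T^{\mathcal{G}}(\mathfrak{p})=\mathfrak{p}\circ P(T)$, where
$$
  P(T)=\frac{1}{\#\mathcal{G}}\sum_{\mathfrak{s}\in\mathcal{G}}\mathfrak{s}\circ T.
$$
Since a rotation preserves the symmetric traceless property and $V$ is a linear space, each $\mathfrak{s}\circ T$ and hence $P(T)$ lies in $V$, so $P\colon V\to V$. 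I would then verify that $P$ is idempotent: expanding $P(P(T))$ as a double average over $\mathcal{G}\times\mathcal{G}$ and using that, for each fixed $\mathfrak{t}$, the map $\mathfrak{s}\mapsto\mathfrak{ts}$ permutes $\mathcal{G}$, the inner sum collapses to $\#\mathcal{G}\cdot P(T)$, giving $P^2=P$.

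Next I would identify the image. The fixed-point set $\{T:P(T)=T\}$ coincides with the invariant tensors: invariant tensors are fixed by $P$ trivially, and conversely, if $P(T)=T$ then for any $\mathfrak{t}\in\mathcal{G}$ the rearrangement $\mathfrak{s}\mapsto\mathfrak{ts}$ yields $\mathfrak{t}\circ T=\mathfrak{t}\circ P(T)=P(T)=T$, so $\mathfrak{t}\circ T=T$ for every $\mathfrak{t}$. Because $P$ is a projection, its image equals its fixed-point set; translating back through $T(\mathfrak{p})=\mathfrak{p}\circ T$ shows this set is precisely $\mathbb{A}^{\mathcal{G},n}$ in the form \eqref{tgset}, reconciling it with the image description stated just before the proposition. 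The kernel is by definition the set \eqref{tgcompset}, and since $P$ is a projection we already have the algebraic direct sum $V=\mathrm{Im}(P)\oplus\mathrm{Ker}(P)$.

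The one point that genuinely requires care, and that I expect to be the main obstacle, is upgrading this algebraic direct sum to an \emph{orthogonal} one, i.e. showing $\mathrm{Ker}(P)=(\mathbb{A}^{\mathcal{G},n})^{\perp}$; a general projection gives a complementary but not necessarily orthogonal kernel. For this I would prove that $P$ is self-adjoint for the dot product. Rotation preserves the dot product (the identity $U_1(\mathfrak{sp}_1)\cdot U_2(\mathfrak{sp}_2)=U_1(\mathfrak{p}_1)\cdot U_2(\mathfrak{p}_2)$ at $\mathfrak{p}_1=\mathfrak{p}_2=\mathfrak{i}$, with $U_2=\mathfrak{s}^{-1}\circ V$, gives $(\mathfrak{s}\circ U)\cdot V=U\cdot(\mathfrak{s}^{-1}\circ V)$), and $\mathfrak{s}\mapsto\mathfrak{s}^{-1}$ permutes $\mathcal{G}$, so
$$
  P(U)\cdot V=\frac{1}{\#\mathcal{G}}\sum_{\mathfrak{s}\in\mathcal{G}}U\cdot(\mathfrak{s}^{-1}\circ V)=U\cdot P(V).
$$
A self-adjoint idempotent is an orthogonal projection, whence $\mathrm{Ker}(P)$ is exactly the orthogonal complement of $\mathrm{Im}(P)=\mathbb{A}^{\mathcal{G},n}$, which completes the decomposition. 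The identical argument applies to an infinite compact $\mathcal{G}$ after replacing $\frac{1}{\#\mathcal{G}}\sum_{\mathfrak{s}}$ by the Haar-measure average, with invariance of the measure playing the role of the rearrangement of the finite sum.
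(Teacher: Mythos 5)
Your proof is correct and is essentially the paper's argument repackaged in operator language: the paper's key computation $T(\mathfrak{p})\cdot U^{\mathcal{G}}(\mathfrak{p})=T^{\mathcal{G}}(\mathfrak{p})\cdot U(\mathfrak{p})$ is exactly your self-adjointness identity $P(U)\cdot V=U\cdot P(V)$, and the group-rearrangement step you use for idempotency and for identifying the fixed-point set is the same one the paper uses to establish \eqref{tgset}. Framing the averaging map as an orthogonal projection is a clean way to organize the same ingredients, so no substantive difference.
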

\begin{proof}
Using the fact that $\mathcal{G}$ is a group, it is noted from \eqref{tg} that if $\mathfrak{s}\in\mathcal{G}$, the expression for $T^{\mathcal{G}}(\mathfrak{ps})$ and $T^{\mathcal{G}}(\mathfrak{p})$ are the same. So they are equal, indicating that $T^{\mathcal{G}}(\mathfrak{p})$ belongs to the set given in the right-hand side of \eqref{tgset}. 
On the other hand, if $T(\mathfrak{ps})=T(\mathfrak{p})$ for any $\mathfrak{s}\in\mathcal{G}$, then by \eqref{tg} we have $T^{\mathcal{G}}(\mathfrak{p})=T(\mathfrak{p})$, leading to $T(\mathfrak{p})\in \mathbb{A}^{\mathcal{G},n}$. 
Thus, we have \eqref{tgset}. 

Let us consider the orthogonal complement $(\mathbb{A}^{\mathcal{G},n})^{\perp}$. 
We shall show that $U^{\mathcal{G}}(\mathfrak{p})=0$ is equivalent to $U\in (\mathbb{A}^{\mathcal{G},n})^{\perp}$. 
Indeed, for any $T\in \mathbb{A}^{\mathcal{G},n}$, 
\begin{align*}
  T(\mathfrak{p})\cdot U^{\mathcal{G}}(\mathfrak{p})=&\frac{1}{\#\mathcal{G}}\sum_{\mathfrak{s}\in \mathcal{G}}T(\mathfrak{p})\cdot U(\mathfrak{ps})
  =\frac{1}{\#\mathcal{G}}\sum_{\mathfrak{s}\in \mathcal{G}}T(\mathfrak{ps}^{-1}\mathfrak{p}^{-1}\mathfrak{p})\cdot U(\mathfrak{ps}^{-1}\mathfrak{p}^{-1}\mathfrak{ps})\\
  =&\frac{1}{\#\mathcal{G}}\sum_{\mathfrak{s}\in \mathcal{G}}T(\mathfrak{ps}^{-1})\cdot U(\mathfrak{p})
  =\frac{1}{\#\mathcal{G}}\sum_{\mathfrak{s}\in \mathcal{G}}T(\mathfrak{ps})\cdot U(\mathfrak{p})\\
  =&T^{\mathcal{G}}(\mathfrak{p})\cdot U(\mathfrak{p})=T(\mathfrak{p})\cdot U(\mathfrak{p}). 
\end{align*}
So, $T(\mathfrak{p})\cdot U^{\mathcal{G}}(\mathfrak{p})=0$ is equivalent to $T(\mathfrak{p})\cdot U(\mathfrak{p})=0$. 
On the other hand, we have $U^{\mathcal{G}}\in \mathbb{A}^{\mathcal{G}}$. 
Thus, $T(\mathfrak{p})\cdot U^{\mathcal{G}}(\mathfrak{p})=0$ for any $T\in\mathbb{A}^{\mathcal{G},n}$ is equivalent to $U^{\mathcal{G}}(\mathfrak{p})=0$. 
\end{proof}

Based on the above proposition, we call those tensors in the space $\mathbb{A}^{\mathcal{G},n}$ invariant tensors of $\mathcal{G}$, and those tensors in the orthogonal complement $(\mathbb{A}^{\mathcal{G},n})^{\perp}$ vanishing tensors of $\mathcal{G}$. 
As we have mentioned, order parameters for each symmetry will be the averages of some tensors chosen from the space $\mathbb{A}^{\mathcal{G},n}$. 
Thus, our task is to write down the space $\mathbb{A}^{\mathcal{G},n}$. 

We state below a relation of the invariant tensors about subgroups that will help our discussion. 
\begin{proposition}\label{tensor_subgroup}
  If $\mathcal{H}$ is a subgroup of $\mathcal{G}$, then by invariance, $\mathbb{A}^{\mathcal{G},n}$ is a linear subspace of $\mathbb{A}^{\mathcal{H},n}$, thus $(\mathbb{A}^{\mathcal{H},n})^{\perp}$ is a linear subspace of $(\mathbb{A}^{\mathcal{G},n})^{\perp}$. 
\end{proposition}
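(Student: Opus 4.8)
The plan is to prove the two containments in turn, with the second being a purely formal consequence of the first together with the standard behaviour of orthogonal complements. Throughout I work in the ambient inner-product space of $n$-th order symmetric traceless tensors, equipped with the dot product, inside which (by Proposition \ref{inv_van}) all the complements are taken.

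First I would establish $\mathbb{A}^{\mathcal{G},n}\subseteq\mathbb{A}^{\mathcal{H},n}$ directly from the characterization \eqref{tgset} in Proposition \ref{inv_van}. A tensor $T$ lies in $\mathbb{A}^{\mathcal{G},n}$ precisely when $T(\mathfrak{ps})=T(\mathfrak{p})$ for every $\mathfrak{s}\in\mathcal{G}$. Since $\mathcal{H}$ is a subgroup of $\mathcal{G}$, every $\mathfrak{s}\in\mathcal{H}$ is in particular an element of $\mathcal{G}$, so this invariance holds a fortiori for all $\mathfrak{s}\in\mathcal{H}$; hence $T\in\mathbb{A}^{\mathcal{H},n}$. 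This is exactly the ``by invariance'' step named in the statement.

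For the complement, I would invoke the order-reversing property of orthogonal complements: for any two subspaces $W_1\subseteq W_2$ of the ambient space one has $W_2^{\perp}\subseteq W_1^{\perp}$, because $U\perp w$ for all $w\in W_2$ forces $U\perp w$ for all $w\in W_1\subseteq W_2$. Applying this with $W_1=\mathbb{A}^{\mathcal{G},n}$ and $W_2=\mathbb{A}^{\mathcal{H},n}$ gives $(\mathbb{A}^{\mathcal{H},n})^{\perp}\subseteq(\mathbb{A}^{\mathcal{G},n})^{\perp}$, as claimed.

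There is no substantive obstacle here; the statement is essentially formal, and the only point requiring care is that both orthogonal complements be formed inside the same ambient space so that the order-reversing inclusion applies verbatim. As an alternative to the abstract argument, one could instead verify the complement inclusion directly from \eqref{tgcompset}: decomposing $\mathcal{G}$ into left cosets $\mathfrak{g}_i\mathcal{H}$ and grouping the average \eqref{tg} accordingly yields $T^{\mathcal{G}}(\mathfrak{p})=(\#\mathcal{H}/\#\mathcal{G})\sum_i T^{\mathcal{H}}(\mathfrak{p}\mathfrak{g}_i)$, so that $T^{\mathcal{H}}\equiv 0$ immediately forces $T^{\mathcal{G}}\equiv 0$. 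I would keep the abstract route as the primary one, since it avoids the coset bookkeeping and makes transparent that the result is just the monotonicity of the invariant-tensor construction under passage to subgroups.
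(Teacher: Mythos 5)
Your proposal is correct and follows exactly the route the paper intends: the paper offers no separate proof, relying on the phrase ``by invariance'' together with the characterization \eqref{tgset} from Proposition \ref{inv_van}, which is precisely your first step, and the complement inclusion is the standard order-reversal of $\perp$ inside the fixed ambient space of $n$-th order symmetric traceless tensors. Your alternative coset computation for \eqref{tgcompset} is also valid but unnecessary; the primary argument suffices.
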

Therefore, when discussing the invariant tensors for certain group $\mathcal{G}$, we could examine some subgroups of $\mathcal{G}$ to help. 
Moreover, when verifying invariance under $\mathcal{G}$, we just need to verify for generating elements of $\mathcal{G}$. 

Since the rotation of the rigid molecule is regarded as a continuous map, we would only consider the closed subgroups of $SO(3)$, i.e. the point groups consisting of rotations. 
The point groups in $SO(3)$ have been fully classified (see, for example, \cite{Group_Cotton}). 
They include two axisymmetric groups $\mathcal{C}_{\infty}$ and $\mathcal{D}_{\infty}$; two classes of finite axial groups $\mathcal{C}_{n}$ and $\mathcal{D}_n$; three polyhedral groups $\mathcal{T}$, $\mathcal{O}$ and $\mathcal{I}$. 
In what follows, we clarify our convention on how to put the body-fixed frame $\bm{m}_i$, explain the generating elements in the point groups, and write down the invariant tensors for these groups. 

\subsection{Axial groups}

Let us introduce two rotations.
We use $\mathfrak{j}_{\theta}$ to represent the rotation round $\bm{m}_1$ for the angle $\theta$, and $\mathfrak{b}_2$ to represent the rotation round $\bm{m}_2$ for the angle $\pi$. 
The two matrices are written as follows, 
\begin{align}
  \mathfrak{j}_{\theta}=\left(
  \begin{array}{ccc}
    1 & 0 & 0\\
    0 & \cos\theta & -\sin\theta\\
    0 & \sin\theta & \cos\theta
  \end{array}
  \right), \quad
  \mathfrak{b}_2=\left(
  \begin{array}{ccc}
    -1 & 0 & 0\\
    0 & 1 & 0\\
    0 & 0 & -1
  \end{array}
  \right). \label{basicrot}
\end{align}
To see the rotations more clearly, let us write down how the frame $\mathfrak{p}=(\bm{m}_1,\bm{m}_2,\bm{m}_3)$ is rotated by $\mathfrak{j}_{\theta}$, for which we just apply the matrix multiplication to obtain 
\begin{align*}
  \mathfrak{pj}_{\theta}=\big(\bm{m}_1(\mathfrak{pj}_{\theta}),\bm{m}_2(\mathfrak{pj}_{\theta}),\bm{m}_3(\mathfrak{pj}_{\theta})\big)=(\bm{m}_1, \cos\theta\bm{m}_2+\sin\theta\bm{m}_3, -\sin\theta\bm{m}_2+\cos\theta\bm{m}_3). 
\end{align*}
We could calculate that 
\begin{align}
  \big(\bm{m}_2(\mathfrak{pj}_{\theta})+\sqrt{-1}\bm{m}_3(\mathfrak{pj}_{\theta})\big)^k
  =&e^{-\sqrt{-1}k\theta}(\bm{m}_2+\sqrt{-1}\bm{m}_3)^k. \label{m23rotate}
\end{align}
The rotation $\mathfrak{b}_2$ is written as $\mathfrak{pb}_2=(-\bm{m}_1,\bm{m}_2,-\bm{m}_3)$. 

We start from the cyclic group $\mathcal{C}_n$.
It allows the rotation round an axis with the angle $2\pi/n$, which is also the generating element. 
We pose the body-fixed frame so that $\bm{m}_1$ is the rotational axis.
Then, the generating element is $\mathfrak{j}_{2\pi/n}$. 
The group $\mathcal{C}_n$ can be written as
$$
\mathcal{C}_n=\{\mathfrak{i},\mathfrak{j}_{2\pi/n},\mathfrak{j}_{2\pi/n}^2,\ldots,\mathfrak{j}_{2\pi/n}^{n-1}\}. 
$$
%
Consider the average of $(\bm{m}_2+\sqrt{-1}\bm{m}_3)^k$ over the cyclic group $\mathcal{C}_n$. 
If $k$ is a multiple of $n$, then by \eqref{m23rotate}, 
$$
\big(\bm{m}_2(\mathfrak{pj}_{\theta})+\sqrt{-1}\bm{m}_3(\mathfrak{pj}_{\theta})\big)^k=e^{-2\pi\sqrt{-1}k/n}(\bm{m}_2+\sqrt{-1}\bm{m}_3)^k=(\bm{m}_2+\sqrt{-1}\bm{m}_3)^k
$$
is invariant. Otherwise, we could calculate that 
$$
\Big((\bm{m}_2+\sqrt{-1}\bm{m}_3)^k\Big)^{\mathcal{C}_n}=\frac{1}{n}\sum_{j=0}^{n-1}e^{-\sqrt{-1}\cdot 2jk\pi/n}(\bm{m}_2+\sqrt{-1}\bm{m}_3)^k=0, 
$$
is vanishing. 
Besides, the rotations in the group $\mathcal{C}_n$ keep $\bm{m}_1$ invariant. 
By noticing \eqref{chebyshev0}, we have found that $\mathbb{A}^{C_n,l}$ is given by 
\begin{align}
  \mathbb{A}^{\mathcal{C}_n,l}=\mathrm{span}\Big\{\tilde{P}_{l-jn}^{(jn,jn)}(\bm{m}_1,\mathfrak{i})\tilde{T}_{jn}(\bm{m}_2,\mathfrak{i}-\bm{m}_1^2),\ 
  &\tilde{P}_{l-jn}^{(jn,jn)}(\bm{m}_1,\mathfrak{i})\tilde{U}_{jn-1}(\bm{m}_2,\mathfrak{i}-\bm{m}_1^2)\bm{m}_3,\ jn\le l\Big\}. \label{Cn}
\end{align}

Next, we discuss the dihedral group $\mathcal{D}_n$.
This group contains the cyclic group $\mathcal{C}_n$ as its subgroup. 
Besides, it contains a rotation by the angle $\pi$ round an axis perpendicular to the axis of the $n$-fold rotation.
We pose the body-fixed frame such that $\bm{m}_1$ coincides with the $n$-fold axis, and $\bm{m}_2$ is the two-fold axis. 
The two generating elements are now given by $\mathfrak{j}_{2\pi/n}$ and $\mathfrak{b}_2$. 
Since $\mathcal{D}_n$ contains $\mathcal{C}_n$ as a subgroup, $\mathbb{A}^{\mathcal{D}_n,l}\subseteq \mathbb{A}^{\mathcal{C}_n,l}$. 
Furthermore, the tensors in $\mathbb{A}^{\mathcal{D}_n,l}$ shall be invariant under the subgroup $\{\mathfrak{i},\mathfrak{b}_2\}$.
Since $\mathfrak{pb}_2=(-\bm{m}_1,\bm{m}_2,-\bm{m}_3)$, the tensor is invariant only when the order of $\bm{m}_1$ and $\bm{m}_3$ are both odd or both even. 
So, it is easy to recognize that in \eqref{Cn}, the first tensor is invariant when $l-jn$ is even, and vanishing when $l-jn$ is odd; the second tensor is invariant when $l-jn$ is odd, and vanishing when $l-jn$ is even.
Thus, 
\begin{align}
  \mathbb{A}^{D_n,l}=\mathrm{span}&\left\{\tilde{P}_{l-jn}^{(jn,jn)}(\bm{m}_1,\mathfrak{i})\tilde{T}_{jn}(\bm{m}_2,\mathfrak{i}-\bm{m}_1^2),\ l-jn\ge 0 \text{ even};\right. \nonumber\\
  &\left.\tilde{P}_{l-jn}^{(jn,jn)}(\bm{m}_1,\mathfrak{i})\tilde{U}_{jn-1}(\bm{m}_2,\mathfrak{i}-\bm{m}_1^2)\bm{m}_3,\ l-jn\ge 0 \text{ odd} \right\}. \label{Dn}
\end{align}

\begin{theorem}
The invariant tensors for $\mathcal{C}_n$, $\mathcal{D}_n$ are given by \eqref{Cn} and \eqref{Dn}, respectively. 
\end{theorem}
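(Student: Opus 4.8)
The plan is to read off both results directly from the orthogonal basis \eqref{SymTrls_gamma}, written in the body-fixed frame $(\bm{m}_1,\bm{m}_2,\bm{m}_3)$ rather than in $(\bm{e}_i)$, and to decompose each order-$l$ symmetric traceless tensor according to how it transforms under the generators of the group. For $\mathcal{C}_n$ the only generator is $\mathfrak{j}_{2\pi/n}$, while for $\mathcal{D}_n$ we add $\mathfrak{b}_2$; since invariance need only be checked on generators, and since Proposition \ref{tensor_subgroup} gives $\mathbb{A}^{\mathcal{D}_n,l}\subseteq\mathbb{A}^{\mathcal{C}_n,l}$, it suffices to intersect the $\mathcal{C}_n$-invariant space with the $\mathfrak{b}_2$-invariant space. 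Throughout I would use Proposition \ref{inv_van} to identify vanishing directions with the orthogonal complement of the invariant space.

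For $\mathcal{C}_n$, I would note that each basis tensor of \eqref{SymTrls_gamma} factors as $\tilde{P}_k^{(l-k,l-k)}(\bm{m}_1,\mathfrak{i})$ times either $\tilde{T}_{l-k}(\bm{m}_2,\mathfrak{i}-\bm{m}_1^2)$ or $\tilde{U}_{l-k-1}(\bm{m}_2,\mathfrak{i}-\bm{m}_1^2)\bm{m}_3$. Because $\mathfrak{j}_{2\pi/n}$ fixes $\bm{m}_1$ and the identity tensor $\mathfrak{i}$, the first factor is fixed, and since rotation commutes with $\otimes$ and symmetrization, the action passes entirely to the second factor. By \eqref{chebyshev0} the two second factors are exactly the real and imaginary parts of $(\bm{m}_2+\sqrt{-1}\bm{m}_3)^{l-k}$, so \eqref{m23rotate} shows that $\mathfrak{j}_{2\pi/n}$ multiplies the complex tensor by $e^{-2\pi\sqrt{-1}(l-k)/n}$, i.e. acts on the pair as a planar rotation through the angle $2\pi(l-k)/n$. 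Such a rotation fixes a nonzero vector precisely when $n\mid(l-k)$, in which case both members of the pair are fixed; otherwise the pair has no invariant direction and its $\mathcal{C}_n$-average is zero, placing both members in $(\mathbb{A}^{\mathcal{C}_n,l})^{\perp}$ via \eqref{tgcompset}. Setting $l-k=jn$ then lists exactly the invariant basis tensors of \eqref{Cn}; I would also record the boundary bookkeeping, that $j=0$ contributes only the $\tilde{T}_0$-type tensor (there being no $\tilde{U}_{-1}$ term) and that $jn\le l$ is forced by $k\ge 0$.

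For $\mathcal{D}_n$, I would restrict to the basis of $\mathbb{A}^{\mathcal{C}_n,l}$ just found and test $\mathfrak{b}_2$, using $\mathfrak{p}\mathfrak{b}_2=(-\bm{m}_1,\bm{m}_2,-\bm{m}_3)$. The quantities $\bm{m}_2$ and $\mathfrak{i}-\bm{m}_1^2$ are fixed, so $\tilde{T}_{jn}$ and $\tilde{U}_{jn-1}$ are fixed; the factor $\tilde{P}_{l-jn}^{(jn,jn)}(\bm{m}_1,\mathfrak{i})$ picks up $(-1)^{l-jn}$ from the parity $P_k^{(\mu,\mu)}(-x)=(-1)^kP_k^{(\mu,\mu)}(x)$ of the Jacobi polynomials, and the trailing $\bm{m}_3$ contributes a further $-1$. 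Hence the $\tilde{T}$-type basis tensor is carried to $(-1)^{l-jn}$ times itself and the $\tilde{U}$-type to $(-1)^{l-jn+1}$ times itself; the $+1$-eigenvectors are the $\mathfrak{b}_2$-invariant ones and span $\mathbb{A}^{\mathcal{D}_n,l}$, while the $-1$-eigenvectors average to zero over $\{\mathfrak{i},\mathfrak{b}_2\}$ and lie in the complement. This produces exactly the parity conditions recorded in \eqref{Dn}, namely $l-jn$ even for the $\tilde{T}$-type and odd for the $\tilde{U}$-type.

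The main obstacle is the $\mathcal{C}_n$ step, and specifically the fact that the real orthogonal basis tensors are \emph{not} individually eigenvectors of $\mathfrak{j}_{2\pi/n}$: they mix in two-dimensional blocks, so the argument must be carried out block by block, showing that a nontrivial planar rotation leaves no invariant direction. I would also need to confirm that $\mathfrak{b}_2$ genuinely preserves $\mathbb{A}^{\mathcal{C}_n,l}$, so that the eigendecomposition used in the $\mathcal{D}_n$ step is legitimate; this follows because $\mathfrak{b}_2$ normalizes $\mathcal{C}_n$ (one checks $\mathfrak{b}_2\mathfrak{j}_{2\pi/n}\mathfrak{b}_2^{-1}=\mathfrak{j}_{2\pi/n}^{-1}$), which is also what makes checking only the generators sufficient. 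The remaining ingredients, the Jacobi parity and the phase in \eqref{m23rotate}, are routine once the block structure is in place.
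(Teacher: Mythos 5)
Your proposal is correct and follows essentially the same route as the paper: for $\mathcal{C}_n$ it reads the invariants off the orthogonal basis \eqref{SymTrls_gamma} by tracking the phase $e^{-2\pi\sqrt{-1}(l-k)/n}$ that $\mathfrak{j}_{2\pi/n}$ puts on $(\bm{m}_2+\sqrt{-1}\bm{m}_3)^{l-k}$ (your "no fixed direction under a nontrivial planar rotation" is the same fact the paper establishes by the explicit geometric-sum computation of the $\mathcal{C}_n$-average), and for $\mathcal{D}_n$ it imposes the extra $\mathfrak{b}_2$-parity condition on the $\mathcal{C}_n$-invariant basis, exactly as in the text. The only differences are cosmetic (Jacobi parity versus directly counting powers of $\bm{m}_1$ and $\bm{m}_3$, plus your explicit remark that $\mathfrak{b}_2$ normalizes $\mathcal{C}_n$, which the paper leaves implicit).
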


We then discuss the two axisymmetries, $\mathcal{C}_{\infty}$ and $\mathcal{D}_{\infty}$.
The former contains rotation with arbitrary angle round an axis.
The latter contains those rotations, as well as the two-fold rotations round any direction perpendicular to that axis. 
It is natural to put $\bm{m}_1$ as the axis.
Obviously, $\mathcal{C}_n$ is a subset of $\mathcal{C}_{\infty}$, so $\mathbb{A}^{C_{\infty},l}\subseteq \mathbb{A}^{C_n,l}$ for arbitrary $n$.
As a result, we must have 
$
\mathbb{A}^{C_{\infty},l}\subseteq \mathrm{span}\left\{\tilde{P}_{l}^{(0,0)}(\bm{m}_1,\mathfrak{i})\right\}. 
$
These tensors are all invariant under $\mathcal{C}_{\infty}$. 
Therefore, 
\begin{equation}
  \mathbb{A}^{\mathcal{C}_{\infty},l}=\mathrm{span}\left\{\tilde{P}_{l}^{(0,0)}(\bm{m}_1,\mathfrak{i})\right\}. \label{Cinf}
\end{equation}
The group $\mathcal{D}_{\infty}$ contains $\mathfrak{b}_2$, resulting in 
\begin{equation}
  \mathbb{A}^{\mathcal{D}_{\infty},l}=\mathrm{span}\left\{\tilde{P}_{l}^{(0,0)}(\bm{m}_1,\mathfrak{i}),\ l\text{ even}\right\}. \label{Dinf}
\end{equation}
\begin{theorem}
The invariant tensors for $\mathcal{C}_{\infty}$, $\mathcal{D}_{\infty}$ are \eqref{Cinf} and \eqref{Dinf}, respectively. 
\end{theorem}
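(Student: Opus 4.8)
The plan is to exploit the subgroup inclusions encoded in Proposition \ref{tensor_subgroup}, which convert the infinitely many invariance constraints imposed by a continuous group into the finite computations already carried out in \eqref{Cn} and \eqref{Dn}. For each of $\mathcal{C}_\infty$ and $\mathcal{D}_\infty$ I would first squeeze $\mathbb{A}^{\mathcal{G},l}$ from above using a well-chosen finite subgroup, and then verify that the surviving candidate tensor is genuinely invariant, invoking the remark after Proposition \ref{tensor_subgroup} that it suffices to test the generating rotations.

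For $\mathcal{C}_\infty$, I would begin from the observation that $\mathcal{C}_n\subseteq\mathcal{C}_\infty$ for every finite $n$, so Proposition \ref{tensor_subgroup} yields $\mathbb{A}^{\mathcal{C}_\infty,l}\subseteq\mathbb{A}^{\mathcal{C}_n,l}$ for all $n$. Reading off \eqref{Cn}, the spanning tensors of $\mathbb{A}^{\mathcal{C}_n,l}$ are indexed by $j$ subject to $jn\le l$; choosing any $n>l$ forces $j=0$, and then the second family vanishes (it carries $\tilde{U}_{jn-1}=\tilde{U}_{-1}=0$), leaving only $\tilde{P}_l^{(0,0)}(\bm{m}_1,\mathfrak{i})$. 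Hence $\mathbb{A}^{\mathcal{C}_\infty,l}\subseteq\mathrm{span}\{\tilde{P}_l^{(0,0)}(\bm{m}_1,\mathfrak{i})\}$. The reverse inclusion is immediate: every element of $\mathcal{C}_\infty$ is some $\mathfrak{j}_\theta$, which fixes $\bm{m}_1$ (it rotates about that axis) and fixes $\mathfrak{i}$, so this tensor, being a polynomial in $\bm{m}_1$ and $\mathfrak{i}$ alone, is unchanged under $\mathfrak{p}\mapsto\mathfrak{pj}_\theta$. This establishes \eqref{Cinf}.

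For $\mathcal{D}_\infty$, since $\mathcal{C}_\infty$ is a subgroup I would start from $\mathbb{A}^{\mathcal{D}_\infty,l}\subseteq\mathbb{A}^{\mathcal{C}_\infty,l}=\mathrm{span}\{\tilde{P}_l^{(0,0)}(\bm{m}_1,\mathfrak{i})\}$ and impose the one extra generator $\mathfrak{b}_2$. Using $\mathfrak{pb}_2=(-\bm{m}_1,\bm{m}_2,-\bm{m}_3)$, the substitution $\bm{m}_1\mapsto-\bm{m}_1$ multiplies $\tilde{P}_l^{(0,0)}(\bm{m}_1,\mathfrak{i})$ by $(-1)^l$, since it is a homogeneous degree-$l$ polynomial whose every term $\bm{m}_1^{l-2j}\mathfrak{i}^j$ carries $\bm{m}_1$ to a power of the same parity as $l$. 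Thus the tensor is $\mathfrak{b}_2$-invariant exactly when $l$ is even, yielding \eqref{Dinf}; the remaining perpendicular two-fold rotations need not be checked separately, as they are products of $\mathfrak{b}_2$ with elements of $\mathcal{C}_\infty$, which already leave the tensor fixed.

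The argument is essentially routine once the finite-group result \eqref{Cn} is in hand, so I do not expect a serious obstacle. The only step needing care is the passage from the continuous invariance constraint to a finite verification: one must confirm that intersecting the nested spaces $\mathbb{A}^{\mathcal{C}_n,l}$ really collapses to the single $j=0$ direction rather than retaining spurious tensors. This is guaranteed precisely by the degree bound $jn\le l$, which eliminates every $j\ge 1$ term as soon as $n>l$, so the collapse is clean and no limiting or density argument is required.
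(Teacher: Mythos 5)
Your proposal is correct and follows essentially the same route as the paper: squeeze $\mathbb{A}^{\mathcal{C}_\infty,l}$ via the inclusions $\mathbb{A}^{\mathcal{C}_\infty,l}\subseteq\mathbb{A}^{\mathcal{C}_n,l}$ with $n>l$ to isolate $\tilde{P}_l^{(0,0)}(\bm{m}_1,\mathfrak{i})$, verify its invariance directly, and then impose the extra generator $\mathfrak{b}_2$ to obtain the parity restriction for $\mathcal{D}_\infty$. Your write-up merely makes explicit a few points the paper leaves implicit (the vanishing of the $\tilde{U}_{-1}$ family and the sufficiency of checking generators), so no substantive difference.
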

Note that the Jacobi polynomials $\tilde{P}_{l}^{(0,0)}$ actually give the Legendre polynomials. 
For the group $\mathcal{D}_{\infty}$ allowed by the rod-like molecules, the space $\mathbb{A}^{\mathcal{D}_{\infty},l}$ is not a zero space only when $l$ is even.
The lowest order invariant tensor is $\tilde{P}_2^{(0,0)}(\bm{m}_1)=\frac{3}{2}\bm{m}_1^2-\frac{1}{2}\mathfrak{i}$, from which the tensor $Q$ is defined. 

\subsection{Polyhedral groups}
There are three polyhedral groups that are relevant to the rotations allowed by regular polyhedrons. 

The tetrahedral group $\mathcal{T}$ is formed by all the proper rotations of a regular tetrahedron, so let us explain by considering a regular tetrahedron.
We could choose the body-fixed frame so that the vertices of the regular tetrahedron are located at
$$
\lambda(\bm{m}_1+\bm{m}_2+\bm{m}_3),\ \lambda(\bm{m}_1-\bm{m}_2-\bm{m}_3),\ \lambda(-\bm{m}_1+\bm{m}_2-\bm{m}_3),\ \lambda(-\bm{m}_1-\bm{m}_2+\bm{m}_3), 
$$
where $\lambda$ is associated with the size of the regular tetrahedron. 
Any element in $\mathcal{T}$ shall define a permutation within these four vertices.
As it turns out, the group $\mathcal{T}$ can be generated by three rotations: 
\begin{itemize}
\item $\mathfrak{j}_{\pi}$, the rotation round $\bm{m}_1$ by the angle $\pi$. It transforms $(\bm{m}_1,\bm{m}_2,\bm{m}_3)$ into $(\bm{m}_1,-\bm{m}_2,-\bm{m}_3)$.
\item $\mathfrak{b}_2$, the rotation round $\bm{m}_2$ by the angle $\pi$. It transforms $(\bm{m}_1,\bm{m}_2,\bm{m}_3)$ into $(-\bm{m}_1,\bm{m}_2,-\bm{m}_3)$.
  Thus, $\mathfrak{j}_{\pi}\mathfrak{b}_2$ rotates the frame into $(-\bm{m}_1,-\bm{m}_2,\bm{m}_3)$. 
\item $\mathfrak{r}_3$, the three-fold rotation transforming $(\bm{m}_1,\bm{m}_2,\bm{m}_3)$ into $(\bm{m}_2,\bm{m}_3,\bm{m}_1)$. We can see that $\mathfrak{r}_3^2$ takes the frame into $(\bm{m}_3,\bm{m}_1,\bm{m}_2)$. 
\end{itemize}

For the invariant tensors, we consider the symmetric traceless tensor generated by a polynomial $h(\bm{m}_1,\bm{m}_2,\bm{m}_3)$ where all the terms have the same order, written as $\big(h(\bm{m}_1,\bm{m}_2,\bm{m}_3)\big)_0$. 
Let us first consider the subgroup $\{\mathfrak{i},\mathfrak{j}_{\pi},\mathfrak{b}_2,\mathfrak{j}_{\pi}\mathfrak{b}_2\}$.
The three rotations that are not identity take two of $\bm{m}_1$, $\bm{m}_2$, $\bm{m}_3$ to their opposites. 
Therefore, in the tensor $\big(h(\bm{m}_1,\bm{m}_2,\bm{m}_3)\big)_0$, it can only possess terms with the order of $\bm{m}_i$ all even or all odd.
Because of Definition \ref{u0}, this is equivalent to that $h(\bm{m}_1,\bm{m}_2,\bm{m}_3)$ only possesses terms with the order of $\bm{m}_i$ all even or all odd. 
So, the polynomial $h$ can be written either as $g(\bm{m}_1^2,\bm{m}_2^2,\bm{m}_3^2)$ or $\bm{m}_1\bm{m}_2\bm{m}_3g(\bm{m}_1^2,\bm{m}_2^2,\bm{m}_3^2)$ where $g$ is a polynomial. 

Next, let us examine the subgroup $\{\mathfrak{i},\mathfrak{r}_3,\mathfrak{r}_3^2\}$. 
The invariance under $\mathfrak{r}_3$ requires the polynomial $g(\bm{m}_1^2,\bm{m}_2^2,\bm{m}_3^2)$ to satisfy $g(\bm{m}_1^2,\bm{m}_2^2,\bm{m}_3^2)=g(\bm{m}_2^2,\bm{m}_3^2,\bm{m}_1^2)=g(\bm{m}_3^2,\bm{m}_1^2,\bm{m}_2^2)$. 
Such a polynomial $g$ can be further decomposed as the sum of two polynomials $g_1$ and $g_2$, such that $g_1$ and $g_2$ both meet the above requirement for $g$, as well as 
\begin{align*}
  g_1(\bm{m}_1^2,\bm{m}_2^2,\bm{m}_3^2)=g_1(\bm{m}_2^2,\bm{m}_1^2,\bm{m}_3^2),\ g_2(\bm{m}_1^2,\bm{m}_2^2,\bm{m}_3^2)=-g_2(\bm{m}_2^2,\bm{m}_1^2,\bm{m}_3^2). 
\end{align*}
For $g_1$, it could be expressed as a polynomial about $\bm{m}_1^2+\bm{m}_2^2+\bm{m}_3^2$, $\bm{m}_1^2\bm{m}_2^2+\bm{m}_2^2\bm{m}_3^2+\bm{m}_3^2\bm{m}_1^2$, $\bm{m}_1^2\bm{m}_2^2\bm{m}_3^2$. 
Suppose that the order of $g_1(\bm{m}_1^2,\bm{m}_2^2,\bm{m}_3^2)$ is $2j$.
Then, $g_1$ could be linearly expressed by the following polynomials, 
\begin{align*}
  (\bm{m}_1^2+\bm{m}_2^2+\bm{m}_3^2)^{j_0}(\bm{m}_1^2\bm{m}_2^2+\bm{m}_2^2\bm{m}_3^2+\bm{m}_3^2\bm{m}_1^2)^{j_1}(\bm{m}_1^2\bm{m}_2^2\bm{m}_3^2)^{j_2},\  j_0+2j_1+3j_2=j, 
\end{align*}
which are linearly independent. 
Because of Proposition \ref{st_unique}, when considering symmetric traceless tensors $\big(g_1(\bm{m}_1^2\bm{m}_2^2\bm{m}_3^2)\big)_0$ or $\big(\bm{m}_1\bm{m}_2\bm{m}_3g_1(\bm{m}_1^2\bm{m}_2^2\bm{m}_3^2)\big)_0$, all the terms with the factor $\bm{m}_1^2+\bm{m}_2^2+\bm{m}_3^2=\mathfrak{i}$ generate zero tensor.
So, we only need to keep the terms with $j_0=0$. 
For $g_2$, it could be written as 
\begin{align*}
  (\bm{m}_1^2-\bm{m}_2^2)(\bm{m}_2^2-\bm{m}_3^2)(\bm{m}_3^2-\bm{m}_1^2)\tilde{g}_2(\bm{m}_1^2,\bm{m}_2^2,\bm{m}_3^2), 
\end{align*}
where $\tilde{g}_2$ satisfies the same condition as $g_1$.
Summarizing the above discussion, the space $\mathbb{A}^{\mathcal{T},l}$ is given by 
%
\begin{align}
\mathbb{A}^{\mathcal{T},l}=\textrm{span}\bigg\{
&\Big((\bm{m}_1^2\bm{m}_2^2+\bm{m}_2^2\bm{m}_3^2+\bm{m}_3^2\bm{m}_1^2)^{j_1}(\bm{m}_1\bm{m}_2\bm{m}_3)^{j_2}\Big)_0,\ 4j_1+3j_2=l; \nonumber\\
&\Big((\bm{m}_1^2-\bm{m}_2^2)(\bm{m}_2^2-\bm{m}_3^2)(\bm{m}_3^2-\bm{m}_1^2)(\bm{m}_1^2\bm{m}_2^2+\bm{m}_2^2\bm{m}_3^2+\bm{m}_3^2\bm{m}_1^2)^{j_1}\nonumber\\&\qquad(\bm{m}_1\bm{m}_2\bm{m}_3)^{j_2}\Big)_0, 
\quad 6+4j_1+3j_2=l\bigg\}.\label{tensors_poly}
\end{align}

Next, we discuss the octahedral group $\mathcal{O}$, which contains all the proper rotations allowed by a cube.
It is natural to put the axes $\bm{m}_i$ of the body-fixed frame along the direction of the edges.
The group $\mathcal{O}$ can be generated by $\mathfrak{j}_{\pi/2}$, rotating $(\bm{m}_1,\bm{m}_2,\bm{m}_3)$ into $(\bm{m}_1,\bm{m}_3,-\bm{m}_2)$, together with $\mathfrak{b}_2$ and $\mathfrak{r}_3$. 
Since $\mathfrak{j}_{\pi/2}^2=\mathfrak{j}_{\pi}$, we can see that $\mathcal{T}\subseteq\mathcal{O}$, so $\mathbb{A}^{\mathcal{O},l}\subseteq \mathbb{A}^{\mathcal{T},l}$. 
Now let us consider the subgroup $\{\mathfrak{i},\mathfrak{j}_{\pi/2},\mathfrak{j}_{\pi/2}^2,\mathfrak{j}_{\pi/2}^3\}$.
For the first tensor in \eqref{tensors_poly} with odd $j_2$, and the second tensor in \eqref{tensors_poly} with even $j_2$, the rotation $\mathfrak{j}_{\pi/2}$ transforms them into their opposites, so they are vanishing. 
The remaining tensors in \eqref{tensors_poly} are invariant, thus
\begin{align}
\mathbb{A}^{\mathcal{O},l}=\textrm{span}\bigg\{
&\Big((\bm{m}_1^2\bm{m}_2^2+\bm{m}_2^2\bm{m}_3^2+\bm{m}_3^2\bm{m}_1^2)^{j_1}(\bm{m}_1\bm{m}_2\bm{m}_3)^{j_2}\Big)_0,\ 4j_1+3j_2=l,\ j_2\text{ even }; \nonumber\\
&\Big((\bm{m}_1^2-\bm{m}_2^2)(\bm{m}_2^2-\bm{m}_3^2)(\bm{m}_3^2-\bm{m}_1^2)(\bm{m}_1^2\bm{m}_2^2+\bm{m}_2^2\bm{m}_3^2+\bm{m}_3^2\bm{m}_1^2)^{j_1}\nonumber\\&\qquad(\bm{m}_1\bm{m}_2\bm{m}_3)^{j_2}\Big)_0, 
\quad 6+4j_1+3j_2=l,\ j_2\text{ odd }\bigg\}.\label{tensors_poly2}
\end{align}

\begin{theorem}
  The invariant tensors for $\mathcal{T}$ and $\mathcal{O}$ are given by \eqref{tensors_poly} and \eqref{tensors_poly2}, respectively. 
\end{theorem}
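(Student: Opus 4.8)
The plan is to characterise $\mathbb{A}^{\mathcal{T},l}$ and $\mathbb{A}^{\mathcal{O},l}$ by filtering the space of $l$-th order symmetric traceless tensors through the generating elements of each group, and then to collect the surviving tensors into the two claimed spanning families. Throughout I would represent such a tensor as $\big(h(\bm{m}_1,\bm{m}_2,\bm{m}_3)\big)_0$ for a homogeneous degree-$l$ polynomial $h$, as permitted by Definition \ref{u0}. The decisive tool is Proposition \ref{tensor_subgroup}: invariance under $\mathcal{G}$ forces invariance under every subgroup, so it suffices to intersect the conditions coming from a convenient generating set. The bridge connecting tensor conditions to polynomial conditions is Proposition \ref{st_unique}: since the kernel of $h\mapsto (h)_0$ on homogeneous polynomials consists exactly of the multiples of $\mathfrak{i}=\bm{m}_1^2+\bm{m}_2^2+\bm{m}_3^2$, any linear constraint imposed on $\big(h\big)_0$ can be pulled back to $h$ up to such multiples, and we are free to choose the representative $h$ accordingly.

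For $\mathcal{T}$ I would argue in two stages. Testing against the Klein four-group $\{\mathfrak{i},\mathfrak{j}_{\pi},\mathfrak{b}_2,\mathfrak{j}_{\pi}\mathfrak{b}_2\}$, whose nontrivial elements each negate exactly two of the $\bm{m}_i$, shows that an invariant tensor may only carry monomials whose three exponents share a common parity; via the grading-respecting property of $(\cdot)_0$ and Proposition \ref{st_unique}, the representative satisfies $h=g(\bm{m}_1^2,\bm{m}_2^2,\bm{m}_3^2)$ or $h=\bm{m}_1\bm{m}_2\bm{m}_3\,g(\bm{m}_1^2,\bm{m}_2^2,\bm{m}_3^2)$. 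Imposing invariance under $\{\mathfrak{i},\mathfrak{r}_3,\mathfrak{r}_3^2\}$ then forces $g$ to be cyclically symmetric in $(\bm{m}_1^2,\bm{m}_2^2,\bm{m}_3^2)$; splitting $g=g_1+g_2$ into its $S_3$-symmetric part and the part transforming by the sign representation (legitimate because $\mathcal{C}_3$ is normal in $S_3$), I would invoke the structure theory of symmetric and alternating polynomials to write $g_1$ as a polynomial in $\sum\bm{m}_i^2$, $\sum_{i<j}\bm{m}_i^2\bm{m}_j^2$, $\prod\bm{m}_i^2$, and $g_2$ as $(\bm{m}_1^2-\bm{m}_2^2)(\bm{m}_2^2-\bm{m}_3^2)(\bm{m}_3^2-\bm{m}_1^2)$ times a symmetric polynomial. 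Discarding every factor $\sum\bm{m}_i^2=\mathfrak{i}$, which contributes nothing after $(\cdot)_0$, leaves precisely the generators of \eqref{tensors_poly} with the degree counts $4j_1+3j_2=l$ and $6+4j_1+3j_2=l$.

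For $\mathcal{O}$ the work is shorter. Since $\mathfrak{j}_{\pi/2}^2=\mathfrak{j}_{\pi}$ we have $\mathcal{T}\subseteq\mathcal{O}$, so by Proposition \ref{tensor_subgroup} it is enough to impose the single extra condition of invariance under $\mathfrak{j}_{\pi/2}$ on the basis \eqref{tensors_poly}. Computing the action $(\bm{m}_1,\bm{m}_2,\bm{m}_3)\mapsto(\bm{m}_1,\bm{m}_3,-\bm{m}_2)$ on each generator, the symmetric factor $\sum_{i<j}\bm{m}_i^2\bm{m}_j^2$ is fixed, the factor $\bm{m}_1\bm{m}_2\bm{m}_3$ contributes $(-1)^{j_2}$, and the Vandermonde factor contributes an extra $-1$; hence the first family is invariant exactly when $j_2$ is even and the second exactly when $j_2$ is odd, which is \eqref{tensors_poly2}.

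The main obstacle I anticipate is not the invariance bookkeeping but the completeness-and-independence claim hidden in asserting that $\mathbb{A}^{\mathcal{T},l}$ and $\mathbb{A}^{\mathcal{O},l}$ \emph{equal} the spans of these explicit tensors. One direction, that each listed tensor is genuinely invariant, is a direct check against the generators; the reverse, that no invariant tensor is missed and that the generators are linearly independent with no redundancy, is where care is needed. Both rely on Proposition \ref{st_unique}: it guarantees that removing the $\sum\bm{m}_i^2$ factors is a faithful reduction and that distinct surviving monomials in $\sum_{i<j}\bm{m}_i^2\bm{m}_j^2$, $\prod\bm{m}_i^2$, and the Vandermonde factor map to linearly independent nonzero symmetric traceless tensors. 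Verifying this faithfulness, together with the degree bookkeeping, is the real content behind the clean statement.
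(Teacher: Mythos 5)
Your proposal is correct and follows essentially the same route as the paper: the same two-stage filtering of $\big(h(\bm{m}_1,\bm{m}_2,\bm{m}_3)\big)_0$ through the Klein four-group $\{\mathfrak{i},\mathfrak{j}_{\pi},\mathfrak{b}_2,\mathfrak{j}_{\pi}\mathfrak{b}_2\}$ and then $\{\mathfrak{i},\mathfrak{r}_3,\mathfrak{r}_3^2\}$, the same split $g=g_1+g_2$ into symmetric and alternating parts with the elementary-symmetric-polynomial expression and the discarding of $\mathfrak{i}$-factors via Proposition \ref{st_unique}, and the same sign count under $\mathfrak{j}_{\pi/2}$ to pass from \eqref{tensors_poly} to \eqref{tensors_poly2}. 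The completeness and independence point you flag at the end is real but is exactly the content the paper also relies on (linear independence of the monomials in the invariants with $j_0=0$, plus the faithfulness of $h\mapsto(h)_0$ modulo multiples of $\mathfrak{i}$), so nothing is missing relative to the paper's own argument.
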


At last, we discuss the icosahedral group $\mathcal{I}$, which contains all the proper rotations of a regular icosahedron.
It can be generated by $\mathfrak{j}_{\pi}$, $\mathfrak{b}_2$, $\mathfrak{r}_3$, and 
\begin{align}
  &\mathfrak{v}_5=\frac{1}{2}\left(\begin{array}{ccc}
    \phi & -1 & \phi-1\\
    1 & \phi-1 & -\phi\\
    \phi-1 & \phi & 1
  \end{array}\right),\quad \phi=\frac{1+\sqrt{5}}{2}. 
\end{align}
Note that $\mathcal{T}\subseteq \mathcal{I}$, so we have $\mathbb{A}^{\mathcal{I},n}\subseteq \mathbb{A}^{\mathcal{T},l}$.
Furthermore, an invariant tensor shall be invariant under $\mathfrak{v}_5$. 
\begin{theorem}
  $\mathbb{A}^{\mathcal{I},n}=\{T(\mathfrak{p}): T(\mathfrak{p})\in\mathbb{A}^{\mathcal{T},l},\ T(\mathfrak{pv}_5)=T(\mathfrak{p}) \}$. 
\end{theorem}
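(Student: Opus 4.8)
The plan is to reduce everything to the principle, already noted after Proposition \ref{tensor_subgroup}, that invariance of a tensor under a group is equivalent to invariance under any generating set. The key preliminary observation is that for a fixed $n$-th order symmetric traceless tensor $T$, the collection of rotations $\mathfrak{s}\in SO(3)$ satisfying $T(\mathfrak{ps})=T(\mathfrak{p})$ for all $\mathfrak{p}$ is a subgroup of $SO(3)$. This follows from associativity of matrix multiplication alone: if $T$ is fixed by $\mathfrak{s}_1$ and $\mathfrak{s}_2$, then $T(\mathfrak{ps}_1\mathfrak{s}_2)=T((\mathfrak{ps}_1)\mathfrak{s}_2)=T(\mathfrak{ps}_1)=T(\mathfrak{p})$, so $T$ is fixed by $\mathfrak{s}_1\mathfrak{s}_2$; substituting $\mathfrak{ps}^{-1}$ for $\mathfrak{p}$ shows it is fixed by $\mathfrak{s}^{-1}$; and it is trivially fixed by $\mathfrak{i}$. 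Hence this invariance set is closed under products and inverses, and in particular invariance under a generating set of a group forces invariance under the whole group.

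Given this, I would first recall the two generation facts taken as known in the excerpt: $\mathcal{T}$ is generated by $\mathfrak{j}_\pi,\mathfrak{b}_2,\mathfrak{r}_3$ (stated in the tetrahedral subsection), and $\mathcal{I}$ is generated by $\mathfrak{j}_\pi,\mathfrak{b}_2,\mathfrak{r}_3,\mathfrak{v}_5$ (stated just above the theorem). By the subgroup observation, a tensor $T$ lies in $\mathbb{A}^{\mathcal{I},n}$ exactly when it is fixed by each of the four generators $\mathfrak{j}_\pi,\mathfrak{b}_2,\mathfrak{r}_3,\mathfrak{v}_5$, and it lies in $\mathbb{A}^{\mathcal{T},n}$ exactly when it is fixed by the first three.

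The proof then closes immediately. For $T\in\mathbb{A}^{\mathcal{I},n}$, invariance under the subgroup $\mathcal{T}\subseteq\mathcal{I}$ gives $T\in\mathbb{A}^{\mathcal{T},n}$ (this is also Proposition \ref{tensor_subgroup}), while invariance under the single element $\mathfrak{v}_5$ is automatic since $\mathfrak{v}_5\in\mathcal{I}$. Conversely, if $T\in\mathbb{A}^{\mathcal{T},n}$ and $T(\mathfrak{pv}_5)=T(\mathfrak{p})$, then $T$ is fixed by all four generators of $\mathcal{I}$, hence by all of $\mathcal{I}$ by the first paragraph, so $T\in\mathbb{A}^{\mathcal{I},n}$. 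This establishes the claimed set equality.

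I do not expect a genuine obstacle inside this argument — it is purely generator/subgroup bookkeeping built on top of Proposition \ref{tensor_subgroup}. The only substantive input is the geometric claim that adjoining $\mathfrak{v}_5$ to the tetrahedral generators produces the full icosahedral group of order $60$, rather than something smaller or larger; this is where one must rely on the classification of point groups cited in \cite{Group_Cotton} (or verify directly that $\mathfrak{v}_5$ is the stated five-fold rotation and that $\langle\mathcal{T},\mathfrak{v}_5\rangle$ has order $60$). Correspondingly, the reason the theorem is phrased implicitly — rather than with an explicit monomial basis as for $\mathcal{T}$ and $\mathcal{O}$ in \eqref{tensors_poly}--\eqref{tensors_poly2} — is that the $\mathfrak{v}_5$-invariance condition mixes the golden ratio $\phi$ into the coefficients, so no comparably clean closed form is available; the content of the theorem is precisely that one need only impose this single extra linear constraint on the already-characterized space $\mathbb{A}^{\mathcal{T},n}$.
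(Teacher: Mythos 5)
Your proposal is correct and follows the same route the paper takes implicitly: the paper states the theorem as an immediate consequence of $\mathcal{T}\subseteq\mathcal{I}$ (via Proposition \ref{tensor_subgroup}) together with the earlier remark that invariance need only be checked on generating elements, given that $\mathcal{I}$ is generated by $\mathfrak{j}_{\pi}$, $\mathfrak{b}_2$, $\mathfrak{r}_3$, $\mathfrak{v}_5$. You merely spell out the generator/subgroup bookkeeping that the paper leaves unstated, so there is nothing to add.
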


We shall derive the expression for the lowest order invariant tensor for $\mathcal{I}$. 
The theorem tells us to choose from $\mathbb{A}^{\mathcal{T},l}$.
For $l=1,2$, the space is a zero space.
For $l=3$, the space is one-dimensional containing $\bm{m}_1\bm{m}_2\bm{m}_3$.
For $l=4$, the space is also one-dimensional, containing $(\bm{m}_1^2\bm{m}_2^2+\bm{m}_2^2\bm{m}_3^2+\bm{m}_3^2\bm{m}_1^2)_0$.
However, these two tensors are not invariant under $\mathfrak{v}_5$.
For $l=5$, the space is a zero space.
Therefore, we look into $l=6$. It is a two-dimensional space. 
We seek the tensor written in the form 
\begin{equation}
  H(\mathfrak{p})=a(\bm{m}_1^2\bm{m}_2^2\bm{m}_3^2)_0+b\Big((\bm{m}_1^2-\bm{m}_2^2)(\bm{m}_2^2-\bm{m}_3^2)(\bm{m}_3^2-\bm{m}_1^2)\Big)_0, 
\end{equation}
which satisfies $H(\mathfrak{pv}_5)=H(\mathfrak{p})$. 
Taking the expressions of \eqref{SymTrls_monomial} into the above, we can solve after a long calculation that $b=-\frac{\sqrt{5}}{11}a$. 
Thus, $\mathbb{A}^{\mathcal{I},6}$ is the lowest order nontrivial space. 

\subsection{Distinguishing groups by order parameters\label{orddis}}
In the above, we have written down the space of invariant tensors for all the point groups in $SO(3)$. 
For order parameters, we shall choose some from invariant tensors and average them about the density $\rho(\mathfrak{p})$. 
It might involve many considerations when making the choice of order parameters when studying a particular molecule. 
However, the choice shall anyway be able to distinguish one group from others.
To be specific, we notice from Proposition \ref{tensor_subgroup} that for two different groups $\mathcal{G}_1\subset\mathcal{G}_2$, we have $\mathbb{A}^{\mathcal{G}_2,l}\subseteq\mathbb{A}^{\mathcal{G}_1,l}$. 
When choosing order parameters for $\mathcal{G}_1$ from the invariant tensors, we need to choose at least one tensor from $(\mathbb{A}^{\mathcal{G}_2,l})^{\perp}\cap\mathbb{A}^{\mathcal{G}_1,l}$ to distinguish the two groups. 

We shall pay special attention for the distiguishing of the groups 
$\mathcal{D}_n$ and $\mathcal{D}_{\infty}$. 
It shall be noted that for $l<n$, we have $\mathbb{A}^{\mathcal{D}_n,l}=\mathbb{A}^{\mathcal{D}_{\infty},l}$. 
In order to distinguish these two groups, we need to at least include the $n$-th order tensor $\langle\tilde{T}^n(\bm{m}_2,\mathfrak{i}-\bm{m}_1^2)\rangle$. 
In other words, if we only choose some averaged tensors not greater than $(n-1)$-th order, we are not able to distinguish $\mathcal{D}_n$ from $\mathcal{D}_{\infty}$.
A similar requirement is suitable for distinguishing $\mathcal{C}_n$ and $\mathcal{C}_{\infty}$. 

The requirements above are based on the consideration of distinguishing molecular symmetries. 
The choice of order parameters also depends on our demand of classifying local anisotropy, which we will discuss in the next section. 


\section{Classifying local anisotropy\label{dist_sym}}
For the description of the local anisotropy formed by rigid molecules, a few averaged tensors are chosen as order parameters. 
Recall that for rod-like molecules, using the second order tensor $Q$, the local anisotropy is classfied by the eigenvalues. 
We discuss such a problem for general rigid molecules: suppose that we have chosen a few averaged tensors as order parameters, how to classify the local anisotropy from the values of these tensors? 
%
%
To be specific, we notice that for rod-like molecules, the isotropic, uniaxial and biaxial states are regarded to have different symmetries. 
Thus actually, we would like to ask under what values of the tensors does the local anisotropy have certain mesoscopic symmetry. 

Let us first state the symmetry of local anisotropy in mathematical language. 
Remember that the state in an infinitesimal volume is described by the density function $\rho(\mathfrak{p})$.
We rotate this infinitesimal volume by certain $\mathfrak{t}\in SO(3)$.
Under such a rotation, each body-fixed frame $\mathfrak{p}$ is transformed into $\mathfrak{tp}$. 
Thus, the resulting state is given by $\rho(\mathfrak{tp})$. 
We consider all the $\mathfrak{t}$ leaving the volume invariant, i.e.
\begin{equation}
  \rho(\mathfrak{tp})=\rho(\mathfrak{p}). \label{mesosym}
\end{equation}
They form a subgroup $\mathcal{H}$ of $SO(3)$.
The symmetry of local anisotropy is described by the group $\mathcal{H}$. 

However, we only know the values of some tensors averaged by $\rho(\mathfrak{p})$.
Generally, it is not sufficient to determine the density function. 
Of all the density functions giving these averaged tensors, we shall consider the one that maximizes the entropy. 
This approach has been done before for rod-like and bent-core molecules \cite{RodModel,BentModel}. 
Such a density function could be regarded as the equilibrium state for the system with the averaged tensors constrained by some forces. 

\subsection{From tensor to density function}
Assume that the symmetric traceless tensors we have chosen are $\langle U_1^{n_1}(\mathfrak{p})\rangle,\ldots,\langle U_l^{n_l}(\mathfrak{p})\rangle$. 
Hereafter, we will use the superscript $n_j$ to represent the order of the tensor $U_j$. 
To discuss the symmetry of the local anisotropy, let us consider the density function maximizing the entropy, or minimizing the following functional, 
\begin{equation}
\int\rho(\mathfrak{p})\ln\rho(\mathfrak{p})\md \mathfrak{p}, \label{entropy}
\end{equation}
under the constraints
\begin{align}
  \int\rho(\mathfrak{p})\,\md \mathfrak{p}=1,\quad
  \int\rho(\mathfrak{p})\,U_j^{n_j}(\mathfrak{p})\,\md \mathfrak{p}=W_j^{n_j}. \label{tensor_value}
\end{align}

\begin{lemma}\label{boltz}
  For any chosen tensors $\langle U_1^{n_1}(\mathfrak{p})\rangle,\ldots,\langle U_l^{n_l}(\mathfrak{p})\rangle$, assume that there exists a density function $0\le\rho(\mathfrak{p})< +\infty$ such that \eqref{tensor_value} holds. 
  Then, under the constraints \eqref{tensor_value}, there exists a unique density function $\rho(\mathfrak{p})$ that minimizes \eqref{entropy}.
  It is given in the following form, 
  \begin{align}
    \rho(\mathfrak{p})=\frac{1}{Z}\exp\bigg(\sum_{j=1}^l\sum_{s=1}^{2n_j+1}b_{js}U_j^{n_j}(\mathfrak{p})\cdot X^{n_j}_{s}\bigg), \label{form1}
  \end{align}
  where $X^{n_j}_1,\ldots,X^{n_j}_{2n_j+1}$ gives a basis of the $n_j$-th order symmetric traceless tensors, $b_{js_j}$ and $Z$ are constants to satisfy the constraints \eqref{tensor_value}. 
\end{lemma}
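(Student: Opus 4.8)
The plan is to view this as the minimization of a strictly convex functional subject to finitely many linear constraints, and to establish the three claims — the Gibbs form \eqref{form1}, uniqueness, and existence — by combining the relative-entropy (Gibbs) inequality with a convex-duality argument for the multipliers. First I would turn each tensor constraint in \eqref{tensor_value} into scalar constraints: dotting $\int\rho\,U_j^{n_j}\,\md\mathfrak{p}=W_j^{n_j}$ with each basis element $X^{n_j}_{s}$ yields the $2n_j+1$ scalar conditions $\int\rho(\mathfrak{p})\,f_{js}(\mathfrak{p})\,\md\mathfrak{p}=W_j^{n_j}\cdot X^{n_j}_{s}$, where $f_{js}(\mathfrak{p})=U_j^{n_j}(\mathfrak{p})\cdot X^{n_j}_{s}$. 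Each $f_{js}$ is a bounded continuous function on the compact group $SO(3)$, and we adjoin the constant function for normalization. The feasible set of densities is convex and, by hypothesis, nonempty. Since $x\ln x$ is strictly convex, the functional \eqref{entropy} is strictly convex on this convex set, so it has \emph{at most} one minimizer; uniqueness is thus settled before existence.

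For the Gibbs form I would avoid the Euler--Lagrange treatment of the inequality $\rho\ge 0$ and argue instead through the relative-entropy identity. Suppose for the moment that constants $b_{js}$ and $Z$ can be chosen so that the density $\rho_*$ of \eqref{form1} meets all the constraints. Then for any feasible $\rho$ I would write $\int\rho\ln\rho\,\md\mathfrak{p}-\int\rho_*\ln\rho_*\,\md\mathfrak{p}=\int\rho\ln(\rho/\rho_*)\,\md\mathfrak{p}+\int(\rho-\rho_*)\ln\rho_*\,\md\mathfrak{p}$. Because $\ln\rho_*=\sum_{js}b_{js}f_{js}-\ln Z$ is a linear combination of the constrained functions and the constant, and because $\rho$ and $\rho_*$ share the same averages of every $f_{js}$ and the same total mass, the second integral vanishes term by term; the first is the Kullback--Leibler divergence, which is nonnegative and zero only when $\rho=\rho_*$. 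This simultaneously identifies $\rho_*$ as the minimizer and re-proves uniqueness, so the entire problem reduces to producing the multipliers.

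The real content, and the step I expect to be the main obstacle, is therefore showing that such $b_{js}$ exist, i.e. that the moments of the Gibbs family can be matched to the prescribed values. Here I would exploit the compactness of $SO(3)$: since every $f_{js}$ is bounded, the log-partition function $\Lambda(b)=\ln\int_{SO(3)}\exp\big(\sum_{js}b_{js}f_{js}(\mathfrak{p})\big)\,\md\mathfrak{p}$ is finite and smooth for all $b$, it is convex, and its gradient $\partial_{b_{js}}\Lambda(b)=\langle f_{js}\rangle_{\rho_b}$ returns the moments of the corresponding Gibbs density $\rho_b$. Finding the multipliers is then equivalent to minimizing the convex dual $\Lambda(b)-\sum_{js}b_{js}\big(W_j^{n_j}\cdot X^{n_j}_{s}\big)$, whose stationarity condition is exactly the constraint system. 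The delicate point is coercivity of this dual, which can fail only when the target moments sit on the boundary of the attainable moment set.

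To close this gap I would use the precise form of the hypothesis. The assumption is that the constraints are realized by an \emph{honest} density $0\le\rho<+\infty$ rather than by a singular measure concentrated on a lower-dimensional subset of $SO(3)$; I would argue that this places the vector $\big(W_j^{n_j}\cdot X^{n_j}_{s}\big)$ in the interior of the moment set, which is what restores coercivity of the dual and hence a finite minimizer $b$. If the functions $f_{js}$ happen to be linearly dependent, the minimizing $b$ is not unique, but the resulting density $\rho_b$, and therefore $\rho_*$, still is. Assembling the pieces — the dual minimizer gives the multipliers, the relative-entropy inequality gives the form and optimality, and strict convexity gives uniqueness — completes the proof of the lemma.
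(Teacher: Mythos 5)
Your overall strategy coincides with the paper's: uniqueness from strict convexity of $\rho\ln\rho$, the Gibbs form from first-order optimality, and existence of the multipliers from a coercivity argument for the dual log-partition functional (your $\Lambda(b)-\sum_{j,s}b_{js}\,W_j^{n_j}\cdot X^{n_j}_s$ is exactly the paper's $J(b_{js})$ written as a difference of two terms). Your relative-entropy identity is a slightly cleaner substitute for the paper's Euler--Lagrange computation, since it delivers optimality of the Gibbs density and uniqueness in one stroke without having to worry about the constraint $\rho\ge 0$; that part is fine.

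The gap is the coercivity step, which you correctly single out as the real content but then only assert: you say the hypothesis ``places the vector $\big(W_j^{n_j}\cdot X^{n_j}_s\big)$ in the interior of the moment set'' without proving it, and this is precisely where the paper does its work. The argument you need is a separation argument: if coercivity failed along some ray, i.e.\ if $\sum_{j,s}b_{js}\big(U_j^{n_j}(\mathfrak{p})-W_j^{n_j}\big)\cdot X^{n_j}_s\le 0$ for all $\mathfrak{p}$ and some $b\ne 0$, then averaging against the feasible density $\rho_1$ gives zero, so $\rho_1$ must vanish wherever this function is strictly negative. To derive a contradiction one needs two facts that your sketch does not supply: first, that the functions $U_j^{n_j}(\mathfrak{p})\cdot X^{n_j}_s$ are linearly independent, so that the combination above is a not-identically-zero function of $\mathfrak{p}$ (the paper gets this from Proposition \ref{wt}, the equivalence with Wigner D-functions); second, that the zero set of such a function has measure zero in $SO(3)$ (the paper notes it is a nonzero trigonometric polynomial in the Euler angles), so that $\rho_1$ would have to be supported on a null set, contradicting $0\le\rho_1<+\infty$ together with $\int\rho_1\,\md\mathfrak{p}=1$. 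Without these two ingredients the ``interior of the moment set'' claim is unsubstantiated --- and note that if the $f_{js}$ were linearly dependent (a case you contemplate), the moment vector is never in the interior of the full-dimensional moment set and you would have to restrict the dual to the affine hull, so the linear-independence input is not optional. Supplying this separation argument would complete your proof and make it essentially identical to the paper's.
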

\begin{proof}
  For the uniqueness, we just notice that $\rho\ln\rho$ is strictly convex about $\rho$.

  We shall next write down the Euler--Langrange equation for when minimizing \eqref{entropy} under the constraints \eqref{tensor_value},  
  \begin{align*}
    Z+\ln\rho(\mathfrak{p})=\sum_{j,s}b_{js}U_j^{n_j}(\mathfrak{p})\cdot X^{n_j}_s. 
  \end{align*}
  It is equivalent to \eqref{form1}. 
  Thus, if there exists $b_{js}$ and $Z$ that solve the Euler--Lagrange equation and \eqref{tensor_value}, they give the unique solution to the constrainted minimization problem. 

  We show the existence below. Construct a function 
  \begin{align}
    J(b_{js})=\ln\int\exp\Big(\sum_{j,s}b_{js}\big(U_j^{n_j}(\mathfrak{p})- W_j^{n_j}\big)\cdot X^{n_j}_s\Big)\,\md \mathfrak{p}. 
  \end{align}
  We shall prove that the function has a stationary point.
  
  The condition in the lemma indicates that we could find a $0\le\rho_1(\mathfrak{p})<+\infty$ such that 
  \begin{align}
    \int \rho_1(\mathfrak{p})\sum_{j,s}b_{js}\Big(U_j^{n_j}(\mathfrak{p})-W_j^{n_j}\Big)\cdot X^{n_j}_s\,\md \mathfrak{p}=0. 
  \end{align}
  We shall prove that there exists some $\mathfrak{p}\in SO(3)$ such that 
  \begin{align}
    \sum_{j,s}b_{js}\Big(U_j^{n_j}(\mathfrak{p})-W_j^{n_j}\Big)\cdot X^{n_j}_s>0. \label{fun1}
  \end{align}
  Actually, if it does not hold, then it is nonpositive for any $\mathfrak{p}$.
  Because the average is zero, we must have $\rho_1(\mathfrak{p})=0$ if it is negative. By Proposition \ref{wt}, we know that the functions $\big(U_j^{n_j}(\mathfrak{p})-W_j^{n_j}\big)\cdot X^{n_j}_s$ are linearly independent. Thus, when $b_{js}$ are not all zero, the left-hand side of \eqref{fun1} is a nonzero function.
  Since the left-hand side of \eqref{fun1} could be written as a trigonometric polynomial using the Euler angles expression, we conclude that all the $\mathfrak{p}$ that make it zero forms a zero-measure set. 
  This contradicts $\rho_1<+\infty$. 

  Therefore, for any $b_{js}$ that are not all zero, we could find a $\mathfrak{p}$ such that \eqref{fun1} holds. Since the left-hand side of \eqref{fun1} is continuous on $SO(3)$, it follows that \eqref{fun1} holds in a neighborhood of $\mathfrak{p}$.
  We then deduce that $\lim_{a\to +\infty}J(ab_{js})=+\infty$. 
  Thus, the function $J$ has at least one stationary point. For any stationary point, we take the values $b_{js}$ into \eqref{form1} and calculate the normalizing constant $Z$ accordingly. It is straightforward to verify that the resulting $\rho(\mathfrak{p})$ satisfies \eqref{tensor_value}. 
\end{proof}

We turn to the symmetry of the density function $\rho(\mathfrak{p})$. 
If the density function satisfies $\rho(\mathfrak{tp})=\rho(\mathfrak{p})$ for any $\mathfrak{t}\in \mathcal{H}$, then we have 
\begin{align}
  \langle U_j^{n_j}(\mathfrak{p})\rangle=&\int U_j^{n_j}(\mathfrak{p})\rho(\mathfrak{p})\,\md \mathfrak{p}=\int U_j^{n_j}(\mathfrak{tp})\rho(\mathfrak{tp})\,\md \mathfrak{p}
  =\int \mathfrak{t}\circ U_j^{n_j}(\mathfrak{p})\rho(\mathfrak{p})\md \mathfrak{p}=\mathfrak{t}\circ \langle U_j^{n_j}(\mathfrak{p})\rangle, 
\end{align}
for any $\mathfrak{t}\in\mathcal{H}$. 
Therefore, using the statements in the previous section, we must have
\begin{align}
  \langle U_j^{n_j}(\mathfrak{p})\rangle=W_j^{n_j},\quad W_j^{n_j}(\mathfrak{p})\in\mathbb{A}^{\mathcal{H},n_j}. 
\end{align}

We discuss the inverse of the above statement.
\begin{theorem}\label{mesosym1}
  If there exists a $\mathfrak{q}\in SO(3)$ such that $\langle U_j^{n_j}(\mathfrak{p})\rangle=Z^{n_j}_j(\mathfrak{q})$, where $Z^{n_j}_j(\mathfrak{p})\in\mathbb{A}^{\mathcal{H},n_j}$. Then, the density function $\rho(\mathfrak{p})$ that maximizes the entropy, if it exists, satisfies $\rho(\mathfrak{qtq}^{-1}\mathfrak{p})=\rho(\mathfrak{p})$. 
\end{theorem}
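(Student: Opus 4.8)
The plan is to exploit the uniqueness of the maximum entropy density provided by Lemma~\ref{boltz}: rather than manipulating the explicit Boltzmann form \eqref{form1}, I would show that a suitably rotated copy of $\rho$ reproduces both the constraints \eqref{tensor_value} and the same entropy, so that it must coincide with $\rho$ itself.

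Fix $\mathfrak{t}\in\mathcal{H}$ and abbreviate $\mathfrak{r}=\mathfrak{qtq}^{-1}$. Define $\tilde\rho(\mathfrak{p})=\rho(\mathfrak{rp})$. First I would check that $\tilde\rho$ is an admissible density: nonnegativity is immediate, and $\int\tilde\rho\,\md\mathfrak{p}=\int\rho(\mathfrak{rp})\,\md\mathfrak{p}=\int\rho(\mathfrak{p})\,\md\mathfrak{p}=1$ by the left invariance of the measure \eqref{invariantP}. The same substitution gives $\int\tilde\rho\ln\tilde\rho\,\md\mathfrak{p}=\int\rho\ln\rho\,\md\mathfrak{p}$, so $\tilde\rho$ carries exactly the entropy of $\rho$.

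The heart of the argument is to verify that $\tilde\rho$ satisfies the constraints \eqref{tensor_value}. Using the measure invariance to substitute $\mathfrak{p}\mapsto\mathfrak{r}^{-1}\mathfrak{p}$, then the identity $U_j^{n_j}(\mathfrak{r}^{-1}\mathfrak{p})=\mathfrak{r}^{-1}\circ U_j^{n_j}(\mathfrak{p})$ together with the linearity of $\mathfrak{r}^{-1}\circ(\cdot)$, I would compute
$$
\int U_j^{n_j}(\mathfrak{p})\tilde\rho(\mathfrak{p})\,\md\mathfrak{p}=\mathfrak{r}^{-1}\circ\int U_j^{n_j}(\mathfrak{p})\rho(\mathfrak{p})\,\md\mathfrak{p}=\mathfrak{r}^{-1}\circ Z_j^{n_j}(\mathfrak{q}).
$$
Now I write $Z_j^{n_j}(\mathfrak{q})=\mathfrak{q}\circ Z_j^{n_j}(\mathfrak{i})$ and apply the composition law $(\mathfrak{ab})\circ U=\mathfrak{a}\circ(\mathfrak{b}\circ U)$ to collapse $\mathfrak{r}^{-1}\circ(\mathfrak{q}\circ Z_j^{n_j}(\mathfrak{i}))=(\mathfrak{qt}^{-1})\circ Z_j^{n_j}(\mathfrak{i})=Z_j^{n_j}(\mathfrak{qt}^{-1})$. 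Since $Z_j^{n_j}\in\mathbb{A}^{\mathcal{H},n_j}$ is invariant under right multiplication by $\mathcal{H}$ and $\mathfrak{t}^{-1}\in\mathcal{H}$, this equals $Z_j^{n_j}(\mathfrak{q})$. Hence $\tilde\rho$ obeys the same constraints as $\rho$.

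Finally, because $\tilde\rho$ minimizes \eqref{entropy} under \eqref{tensor_value} just as $\rho$ does, the uniqueness clause of Lemma~\ref{boltz} forces $\tilde\rho=\rho$, that is, $\rho(\mathfrak{qtq}^{-1}\mathfrak{p})=\rho(\mathfrak{p})$ for every $\mathfrak{t}\in\mathcal{H}$. The step demanding the most care is the constraint computation, where one must track the switch between the left action defining mesoscopic symmetry in \eqref{mesosym} and the right action defining $\mathbb{A}^{\mathcal{H},n_j}$ in \eqref{tgset}; this interplay is precisely what produces the conjugate group $\mathfrak{q}\mathcal{H}\mathfrak{q}^{-1}$, and keeping the order of $\mathfrak{q}$, $\mathfrak{t}$ and $\mathfrak{r}^{-1}$ straight in the composition law is where an order error would most easily intrude.
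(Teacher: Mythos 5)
Your argument is correct, and it takes a genuinely different route from the paper's. The paper proves the theorem \emph{constructively}: it sets up a reduced minimization problem whose constraints only involve the projections of $U_j^{n_j}$ onto a basis $\{V_s^{n_j}\}$ of $\mathbb{A}^{\mathcal{H},n_j}$, invokes Lemma~\ref{boltz} to get the explicit Boltzmann form $\rho=\frac{1}{Z}\exp\big(\sum b_{js}U_j^{n_j}(\mathfrak{p})\cdot V_s^{n_j}(\mathfrak{q})\big)$, reads off the symmetry $\rho(\mathfrak{qtp})=\rho(\mathfrak{qp})$ directly from the invariance of the $V_s^{n_j}$, and then must do extra work to verify that this reduced-problem solution actually attains the full prescribed moments $Z_j^{n_j}(\mathfrak{q})$ (by showing the components in $(\mathbb{A}^{\mathcal{H},n_j})^{\perp}$ vanish, via a sum over $\mathfrak{t}\in\mathcal{H}$). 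You instead run a symmetrization-plus-uniqueness argument on the \emph{original} problem: the pullback $\tilde\rho(\mathfrak{p})=\rho(\mathfrak{qtq}^{-1}\mathfrak{p})$ is feasible for \eqref{tensor_value} (your constraint computation, which correctly threads the left action through $U(\mathfrak{p}_1\mathfrak{p}_2)=\mathfrak{p}_1\circ U(\mathfrak{p}_2)$ and lands on the right-invariance $Z_j^{n_j}(\mathfrak{qt}^{-1})=Z_j^{n_j}(\mathfrak{q})$ from \eqref{tgset}) and has the same entropy, so strict convexity forces $\tilde\rho=\rho$. What your approach buys: it never needs the explicit exponential form or the existence theory for the auxiliary problem \eqref{tensym}, and it handles each $\mathfrak{t}\in\mathcal{H}$ individually, so it applies verbatim to the continuous groups $\mathcal{C}_\infty$, $\mathcal{D}_\infty$ where the paper's final summation over $\mathfrak{t}\in\mathcal{H}$ would need to be replaced by an integral. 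What the paper's approach buys: it exhibits the maximizer concretely and shows along the way that the maximum entropy state depends only on the invariant part of the moment data, which is used implicitly in the subsequent classification tables. Both proofs rely on the same uniqueness-by-convexity clause of Lemma~\ref{boltz}; yours just uses it as the main engine rather than as a footnote.
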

\begin{proof}
  For each $n_j$, choose $\{V^{n_j}_s\}$ as a basis of $\mathbb{A}^{\mathcal{H},n_j}$. Let us consider the minimization problem below, 
  \begin{align}
    &\min\int \rho\ln\rho\,\md \mathfrak{p}, 
    \qquad\text{s.t. }\int\rho(\mathfrak{p})\,\md \mathfrak{p}=1,\quad
    \int\rho(\mathfrak{p})\,U_j^{n_j}(\mathfrak{p})\cdot V^{n_j}_s(\mathfrak{q})\,\md \mathfrak{p}=Z_j^{n_j}(\mathfrak{q})\cdot V^{n_j}_s(\mathfrak{q}). \label{tensym}
  \end{align}
  Following the same route of Lemma \ref{boltz}, if the solution exists, it must be unique and takes the form 
  \begin{align}
    \rho(\mathfrak{p})=\frac{1}{Z}\exp\Big(\sum_{j,s}b_{js}U_j^{n_j}(\mathfrak{p})\cdot V^{n_j}_s(\mathfrak{q})\Big). \label{rhosym}
  \end{align}
  By Proposition \ref{inv_van}, $V^{n_j}_s(\mathfrak{pt})=V^{n_j}_s(\mathfrak{p})$ for any $\mathfrak{t}\in\mathcal{H}$. 
  So we have 
  \begin{align*}
    \rho(\mathfrak{qtp})=&\frac{1}{Z}\exp\Big(\sum_{j,s}b_{js}U_j^{n_j}(\mathfrak{qtp})\cdot V^{n_j}_s(\mathfrak{q})\Big)
    =\frac{1}{Z}\exp\Big(\sum_{j,s} b_{js}U_j^{n_j}(\mathfrak{p})\cdot V^{n_j}_s(\mathfrak{t}^{-1})\Big)\\
    =&\frac{1}{Z}\exp\Big(\sum_{j,s}b_{js}U_j^{n_j}(\mathfrak{p})\cdot V^{n_j}_s(\mathfrak{i})\Big)
    =\frac{1}{Z}\exp\Big(\sum_{j,s}b_{js}U_j^{n_j}(\mathfrak{qp})\cdot V^{n_j}_s(\mathfrak{q})\Big)\\
    =&\rho(\mathfrak{qp}). 
  \end{align*}
  The remaining is to show that the above density indeed has the moments $\langle U_j^{n_j}(\mathfrak{p})\rangle=Z_j^{n_j}(\mathfrak{q})$. Since we have constraints \eqref{tensym} and $V_j^{n_j}(\mathfrak{p})$ give a basis of $\mathbb{A}^{\mathcal{H},n_j}$, we only need to show $\langle U_j^{n_j}(\mathfrak{p})\rangle\cdot Y_j(\mathfrak{q})=0$ for any $Y_j(\mathfrak{p})\in (\mathbb{A}^{\mathcal{H},n_j})^{\perp}$. Indeed, for any $\mathfrak{t}\in\mathcal{H}$, we have 
\begin{align*}
  \langle U_j^{n_j}(\mathfrak{p})\rangle\cdot Y_j(\mathfrak{q})=
  &\int U_j^{n_j}(\mathfrak{p})\cdot Y_j(\mathfrak{q})\rho(\mathfrak{p})\,\md\mathfrak{p}
  =\int U_j^{n_j}(\mathfrak{qtq}^{-1}\mathfrak{p})\cdot Y_j(\mathfrak{q})\rho(\mathfrak{qtq}^{-1}\mathfrak{p})\,\md\mathfrak{p}\nonumber\\
  =&\int U_j^{n_j}(\mathfrak{p})\cdot Y_j(\mathfrak{qt}^{-1})\rho(\mathfrak{qtq}^{-1}\mathfrak{p})\,\md\mathfrak{p}
  =\int U_j^{n_j}(\mathfrak{p})\cdot Y_j(\mathfrak{qt}^{-1})\rho(\mathfrak{p})\,\md\mathfrak{p}\nonumber\\
  =&\langle U_j^{n_j}(\mathfrak{p})\rangle\cdot Y_j(\mathfrak{qt}^{-1}). 
\end{align*}
Let us take the sum over $\mathfrak{t}\in\mathcal{H}$.
Using Proposition \ref{inv_van}, we have $Y_j^{\mathcal{H}}(\mathfrak{q})=0$, yielding 
\begin{align*}
  \sum_{\mathfrak{t}\in\mathcal{H}}\langle U_j^{n_j}(\mathfrak{p})\rangle\cdot Y_j(\mathfrak{qt}^{-1})=0, 
\end{align*}
which implies $\langle U_j^{n_j}(\mathfrak{p})\rangle\cdot Y_j(\mathfrak{q})=0$. 
\end{proof}

In the above theorem, we incorporate a rotation $\mathfrak{q}$ to take an appropriate choice of the reference frame into considertaion, 
which is analogous to diagonalizing the tensor $Q$ for the rod-like molecules. 
The theorem actually indicates that the maximum entropy solution gives the highest symmetry allowed by certain value of tensors. 



\subsection{Classification by tensors}
Theorem \ref{mesosym1} can be used to discuss the classification of local anisotropy. 
If all of $\langle U_j^{n_j}(\mathfrak{p})\rangle$ are zero, the infinitesimal volume is isotropic. 
By allowing some of these tensors to take nonzero values, the isotropic state will be broken into an anisotropic state. 
In general, let us consider two groups $\mathcal{H'}\subset\mathcal{H}$. 
By Proposition \ref{tensor_subgroup}, we have $\mathbb{A}^{\mathcal{H},n}\subseteq\mathbb{A}^{\mathcal{H'},n}$. 
If for some $n_j$, $\mathbb{A}^{\mathcal{H},n_j}$ is a proper subset of $\mathbb{A}^{\mathcal{H'},n_j}$, the averaged tensors $\langle U_j^{n_j}(\mathfrak{p})\rangle$ are allowed to take values in larger spaces, so that the mesoscopic symmetry $\mathcal{H}$ is broken into its subgroup $\mathcal{H'}$.
On the other hand, if for all $n_j$ we have $\mathbb{A}^{\mathcal{H},n_j}=\mathbb{A}^{\mathcal{H'},n_j}$, then $\mathcal{H'}$ cannot be distinguished from $\mathcal{H}$ by the tensors $\langle U_j^{n_j}(\mathfrak{p})\rangle$.
Therefore, the choice of order parameter tensors will affect the ability to classify mesoscopic symmetries. 

In what follows, we aim to discuss the mesoscopic symmetry breaking for some cases. 
We examine four molecular symmetries $\mathcal{D}_{\infty}$, $\mathcal{D}_2$, $\mathcal{C}_2$ and $\mathcal{T}$. 
For some of the molecular symmetries, different choices of order parameters will also be considered, which will lead to different classification. 
Consider the five tensors below: 
\begin{align*}
&Q^1=\langle\bm{m}_1\rangle,\
Q^2=\langle\bm{m}_1^2-\frac{1}{3}\mathfrak{i}\rangle,\
Q^4=\langle\bm{m}_1^4 - \frac{6}{7}\bm{m}_1^2\mathfrak{i} + \frac{3}{35}\mathfrak{i}^2\rangle,\\
&M^2=\langle\bm{m}_2^2-\frac{1}{2}(\mathfrak{i}-\bm{m}_1^2)\rangle,
T^3=\langle 2\bm{m}_1\bm{m}_2\bm{m}_3\rangle.
\end{align*}
Specifically, we discuss the six cases: 
\begin{itemize}
\item $\mathcal{D}_{\infty}$ molecular symmetry, with (1) one tensors $Q^2$; (2) two tensors $Q^2$, $Q^4$. 
\item $\mathcal{D}_2$ molecular symmetry, with (1) two tensors $Q^2$, $M^2$; (2) three tensors $Q^2$, $M^2$, $T^3$. 
\item $\mathcal{C}_2$ molecular symmetry, with three tensors $Q^1$, $Q^2$, $M^2$. 
\item $\mathcal{T}$ molecular symmetry, with one tensor $T^3$. 
\end{itemize}
Let us express the tensors by the basis of symmetric traceless tensors given in \eqref{SymTrls_gamma} with the frame $(\bm{e}_1,\bm{e}_2,\bm{e}_3)$ replaced by $\mathfrak{q}=(\bm{q}_1,\bm{q}_2,\bm{q}_3)$: 
\begin{subequations}\label{fivetensors}
\begin{align}
  Q^1=&d_1\bm{q}_1+d_2\bm{q}_2+d_3\bm{q}_3,\label{Q1}\\
  Q^2=&
  a_1(\bm{q}_1^2-\frac{1}{3}\mathfrak{i})+a_2\big(2\bm{q}_2^2-(\mathfrak{i}-\bm{q}_1^2)\big)+2a_3\bm{q}_2\bm{q}_3+a_4\bm{q}_1\bm{q}_3+a_5\bm{q}_2\bm{q}_3,\label{Q2}\\
  M^2=&
  a'_1(\bm{q}_1^2-\frac{1}{3}\mathfrak{i})+a'_2\big(2\bm{q}_2^2-(\mathfrak{i}-\bm{q}_1^2)\big)+2a'_3\bm{q}_2\bm{q}_3+a'_4\bm{q}_1\bm{q}_3+a'_5\bm{q}_2\bm{q}_3,\label{M2}\\
  T^3=&b_1(\bm{q}_1^3-\frac{3}{5}\mathfrak{i}\bm{q}_1)
  +b_2\big(4\bm{q}_2^3-3(\mathfrak{i}-\bm{q}_1^2)\bm{q}_2\big)
  +b_3\big(4\bm{q}_2^2\bm{q}_3-(\mathfrak{i}-\bm{q}_1^2)\bm{q}_3\big)\nonumber\\
  &+2b_4\bm{q}_1\bm{q}_2\bm{q}_3+b_5\bm{q}_1\big(2\bm{q}_2^2-(\mathfrak{i}-\bm{q}_1^2)\big)
  +b_6(\bm{q}_1^2-\frac{1}{5}\mathfrak{i})\bm{q}_2+b_7(\bm{q}_1^2-\frac{1}{5}\mathfrak{i})\bm{q}_3,\label{T3}\\
  Q^4=&
  c_1(\bm{q}_1^4 - \frac{6}{7}\mathfrak{i}\bm{q}_1^2 + \frac{3}{35}\mathfrak{i}^2)+c_2(\bm{q}_1^2-\frac{1}{7}\mathfrak{i})\Big(2\bm{q}_2^2-(\mathfrak{i}-\bm{q}_1^2)\Big)+c_3(\bm{q}_1^2-\frac{1}{7}\mathfrak{i})2\bm{q}_2\bm{q}_3\nonumber\\
  &+c_4\big(8\bm{q}_2^4-8(\mathfrak{i}-\bm{q}_1^2)\bm{q}_2^2+(\mathfrak{i}-\bm{q}_1^2)^2\big)
  +c_5\big(8\bm{q}_2^3\bm{q}_3-4(\mathfrak{i}-\bm{q}_1^2)\bm{q}_2\bm{q}_3\big)\nonumber\\
  &+c_6\bm{q}_1\big(4\bm{q}_2^2\bm{q}_3-(\mathfrak{i}-\bm{q}_1^2)\bm{q}_3\big)
  +c_7\bm{q}_1\big(4\bm{q}_2^3-3(\mathfrak{i}-\bm{q}_1^2)\bm{q}_2\big)\nonumber\\
  &+c_8(\bm{q}_1^3-\frac{3}{7}\mathfrak{i}\bm{q}_1)\bm{q}_2+c_9(\bm{q}_1^3-\frac{3}{7}\mathfrak{i}\bm{q}_1)\bm{q}_3. \label{Q4}
\end{align}
\end{subequations}
When discussing mesoscopic symmetries, Theorem \ref{mesosym1} will impose conditions on the coefficients $d_i$, $a_i$, $a'_i$, $b_i$, $c_i$. 

\begin{table}
\centering
\begin{tabular}{|c|c|}\hline
  Mesoscopic & Nonzero coefficients allowed, \\
  symmetry & with constraints required\\\hline
 $\mathcal{D}_{\infty}$ & $a_1,a_1',c_1$ \\\hline
 $\mathcal{C}_{\infty}$ & $d_1,a_1,a_1',b_1,c_1$ \\\hline
  $\mathcal{O}$ & 1) $c_1=7c_4$; \\
  & or 2)  $-4\sqrt{2}c_1=7c_6$\\\hline
  $\mathcal{T}$ & 1) $b_4$, $c_1=7c_4$;\\
  & or 2) $\sqrt{2}b_1=5b_3$, $-4\sqrt{2}c_1=7c_6$\\\hline
 $\mathcal{D}_4$ & $a_1,a_1',c_1,c_4$ \\\hline
 $\mathcal{D}_3$ & $a_1,a_1',b_2,c_1,c_6$ \\\hline
 $\mathcal{D}_2$ & $a_1,a_2,c_1,c_2,c_4$ \\\hline
 $\mathcal{C}_4$ & $d_1,a_1,a_1',b_1,c_1,c_4,c_5$\\\hline
 $\mathcal{C}_3$ & $d_1,a_1,a_1',b_1,b_2,b_3,c_1,c_6,c_7$ \\\hline
 $\mathcal{C}_2$ & $d_1, a_1,a_2,a_3, a_1',a_2',a_3', b_1,b_4,b_5, c_1,c_2,c_3,c_4,c_5$ \\\hline
\end{tabular}
\caption{Nonzero coefficients allowed for mesoscopic symmetries, where for $\mathcal{O}$ and $\mathcal{T}$ some constraints are required. Also, for $\mathcal{O}$ and $\mathcal{T}$ two different conditions are listed under different choices of $\mathfrak{q}$, which are expalined in the text.\label{coefcond}}
\end{table}

For the mesoscopic symmetries, from the discussion in Section \ref{orddis} we have $\mathbb{A}^{\mathcal{C}_n,l}=\mathbb{A}^{\mathcal{C}_{\infty},l}$ and $\mathbb{A}^{\mathcal{D}_n,l}=\mathbb{A}^{\mathcal{D}_{\infty},l}$ for $l<n$. 
Since the tensors we consider are not greater than fourth order, we do not need to discuss $\mathcal{C}_n$  and $\mathcal{D}_n$ for $n\ge 5$, because
with the order parameters above, we are not able to distinguish the two mesoscopic symmetries $\mathcal{C}_n$ (resp. $\mathcal{D}_n$) and $\mathcal{C}_{\infty}$ (resp. $\mathcal{D}_{\infty}$). 
For the same reason, we do not need to discuss $\mathcal{I}$. 
The remaining groups are listed in the Table \ref{coefcond}.

For each group in Table \ref{coefcond}, we write down the conditions on the coefficients according to Theorem \ref{mesosym1}, with the $\mathfrak{q}$ in \eqref{fivetensors} identical to the $\mathfrak{q}$ in Theorem \ref{mesosym1}.
As a result, the condition in Theorem \ref{mesosym1} allows some coefficients to be nonzero, sometimes with constraints. 
Using the expression of invariant tensors in the previous section, it is straightforward to write down nonzero coefficients allowed for most groups, which are also listed in Table \ref{coefcond}. 
Let us explain the conditions for the two mesoscopic symmetries $\mathcal{O}$ and $\mathcal{T}$. 
Theorem \ref{mesosym1} requires $Q^4=\lambda(\bm{q}_1^2\bm{q}_2^2+\bm{q}_2^2\bm{q}_3^2+\bm{q}_3^2\bm{q}_1^2)_0$ for some constant $\lambda$. Using \eqref{SymTrls_monomial2}, we deduce that 
\begin{align*}
  (\bm{q}_1^2\bm{q}_2^2+\bm{q}_2^2\bm{q}_3^2+\bm{q}_3^2\bm{q}_1^2)_0=&(\bm{q}_1^2\bm{q}_2^2+(\bm{q}_1^2+\bm{q}_2^2)(\mathfrak{i}-\bm{q}_1^2-\bm{q}_2^2))_0
  =-(\bm{q}_1^4+\bm{q}_1^2\bm{q}_2^2+\bm{q}_2^4)_0\\
  =&-\frac{1}{8}\big(8\bm{q}_2^4-8(\mathfrak{i}-\bm{q}_1^2)\bm{q}_2^2+(\mathfrak{i}-\bm{q}_1^2)^2\big)
  -\frac{7}{8}(\bm{q}_1^4 - \frac{6}{7}\bm{q}_1^2\mathfrak{i} + \frac{3}{35}\mathfrak{i}^2). 
\end{align*}
Therefore, we obtain the condition $c_1=7c_4$. 
Another thing to be explained is that we write down two different conditions for $\mathcal{O}$ and $\mathcal{T}$.
They are actually equivalent, but shall be understood in different choice of the frame $\mathfrak{q}$.
The conditions 2) for $\mathcal{O}$ and $\mathcal{T}$ are actually derived in the frame 
\begin{align*}
  \mathfrak{q'}=(\bm{q'}_1,\bm{q'}_2,\bm{q'}_3)=(\bm{q}_1,\bm{q}_2,\bm{q}_3)
  \left(
  \begin{array}{ccc}
    \frac{1}{\sqrt{3}} & \frac{1}{\sqrt{2}} & \frac{1}{\sqrt{6}}\\
    \frac{1}{\sqrt{3}} & -\frac{1}{\sqrt{2}} & \frac{1}{\sqrt{6}}\\
    \frac{1}{\sqrt{3}} & 0 & -\frac{2}{\sqrt{6}}
  \end{array}
  \right).
\end{align*}
With some direct but tedious calculations using \eqref{SymTrls_monomial}--\eqref{SymTrls_monomial2}
, we obtain 
\begin{align*}
  &\bm{q}_1\bm{q}_2\bm{q}_3=\frac{5}{6\sqrt{3}}(\bm{q'}_1{}^3)_0+\frac{1}{3\sqrt{6}}(4\bm{q'}_2{}^2-(\mathfrak{i}-\bm{q'}_1{}^2))\bm{q'}_3. \\
  &(\bm{q}_1^2\bm{q}_2^2+\bm{q}_2^2\bm{q}_3^2+\bm{q}_3^2\bm{q}_1^2)_0
  =-\frac{7}{12}(\bm{q'}_1{}^4)_0+\frac{\sqrt{2}}{3}\bm{q'}_1(4\bm{q'}_2{}^2-(\mathfrak{i}-\bm{q'}_1{}^2))\bm{q'}_3. 
\end{align*}
In the frame $\mathfrak{q'}$, we actually put $\bm{q'}_1$ as the three-fold axis for $\mathcal{O}$ and $\mathcal{T}$. 
The conditions 2) follow from the right-hand side of the two equations above. 

The conditions listed in Table \ref{coefcond} are for the coefficients of all the five tensors.
One can notice that the conditions on coefficients for different groups are distinct if all the five tensors are chosen as order parameters. 
However, for the six cases we will discuss, we are including a part of the five tensors. 
So, for each of the six cases, only the conditions for chosen tensors in the second column of Table \ref{coefcond} will be effective.
As a result, not all the groups in Table \ref{coefcond} can be recognized in classification. 

Because we will discuss the breaking of one symmetry group into its subgroups, 
let us write down the subgroup relations below. 
They can be verified by the generating elements discussed in Section \ref{mol_sym}, 
$$
\mathcal{C}_n\subseteq\mathcal{D}_n\subseteq\mathcal{D}_{\infty},\
\mathcal{C}_n\subseteq\mathcal{C}_{\infty}\subseteq\mathcal{D}_{\infty},\ 
\mathcal{C}_2\subseteq\mathcal{C}_4,\ 
\mathcal{D}_2\subseteq\mathcal{D}_4,\ 
\mathcal{D}_2\subseteq\mathcal{T}\subseteq\mathcal{O},\
\mathcal{D}_2\subseteq\mathcal{D}_4\subseteq\mathcal{O}. 
$$
The following two relations need to be comprehended in the frame $\mathfrak{q'}$ for $\mathcal{T}$ and $\mathcal{O}$. 
$$
\mathcal{C}_3\subseteq\mathcal{T},\
\mathcal{D}_3\subseteq\mathcal{O}.
$$

\subsubsection{$\mathcal{D}_{\infty}$ molecular symmetry, one tensor $Q^2$}
We use this well-understood case to illustrate how we arrive at the classification. 
Since only one the tensor $Q^2$ is the order parameter, in Table \ref{coefcond} only the conditions on $a_i$ are effective. 
From the isotropic state, by allowing $a_1$ to be nonzero one obtains $\mathcal{D}_{\infty}$ mesoscopic symmetry, which is the uniaxial state. 
Further allowing $a_2$ to be nonzero one obtains $\mathcal{D}_{2}$, which is the biaxial state. 
We draw a graph showing the above relations in Fig. \ref{meso1} (left). 
In the graph, we write down the group and the nonzero coefficients allowed.
Arrows are drawn from a group to its subgroup. One could compare the nonzero coefficients allowed to find out what are the additional nonzero coefficients for certain symmetry breaking. 

As for other groups, for $\mathcal{C}_3$, $\mathcal{C}_4$, $\mathcal{C}_{\infty}$, $\mathcal{D}_3$, and $\mathcal{D}_4$, the only nonzero coefficient allowed is $a_1$. 
Thus, we cannot distinguish them from $\mathcal{D}_{\infty}$. 
For $\mathcal{T}$ and $\mathcal{O}$, it requires $Q^2=0$. 
Therefore, they also do not appear in the graph. 
Let us explain why $\mathcal{C}_2$ also does not appear. 
Under $\mathcal{C}_2$ mesoscopic symmetry, the coefficients $a_1,a_2,a_3$ could be nonzero. 
Consider another frame 
$$
\tilde{\mathfrak{q}}=(\tilde{\bm{q}}_1, \tilde{\bm{q}}_2, \tilde{\bm{q}}_3)=(\bm{q}_1,\bm{q}_2,\bm{q}_3)\mathfrak{j}_{\theta},
$$
where we recall that $\mathfrak{j}_{\theta}$ is defined in \eqref{basicrot}. 
We have $\tilde{\bm{q}}_1=\bm{q}_1$ and  
$(\tilde{\bm{q}}_2+\sqrt{-1}\tilde{\bm{q}}_3)^k=e^{-\sqrt{-1}\cdot k\theta}(\bm{q}_2+\sqrt{-1}\bm{q}_3)^k$ using \eqref{m23rotate}. 
Together with \eqref{chebyshev0}, we deduce that 
\begin{align}
&\cos k\theta\cdot\tilde{T}_k(\bm{q}_2,\mathfrak{i}-\bm{q}_1^2)-\sin k\theta\cdot\tilde{U}_{k-1}(\bm{q}_2,\mathfrak{i}-\bm{q}_1^2)\bm{q}_3=\tilde{T}_k(\tilde{\bm{q}}_2,\mathfrak{i}-\tilde{\bm{q}}_1^2), \nonumber\\
&\sin k\theta\cdot\tilde{T}_k(\bm{q}_2,\mathfrak{i}-\bm{q}_1^2)+\cos k\theta\cdot\tilde{U}_{k-1}(\bm{q}_2,\mathfrak{i}-\bm{q}_1^2)\bm{q}_3=\tilde{U}_{k-1}(\tilde{\bm{q}}_2,\mathfrak{i}-\tilde{\bm{q}}_1^2)\tilde{\bm{q}}_3.\label{m23rotate2}
\end{align}
Let $k=2$, and choose a $\theta$ such that $a_2\sin 2\theta+a_3\cos 2\theta=0$,
to arrive at 
\begin{align*}
  &a_1(\bm{q}_1^2-\frac{1}{3}\mathfrak{i})+a_2\big(2\bm{q}_2^2-(\mathfrak{i}-\bm{q}_1^2)\big)+2a_3\bm{q}_2\bm{q}_3
  =a_1(\tilde{\bm{q}}_1^2-\frac{1}{3}\mathfrak{i})+\sqrt{a_2^2+a_3^2}\big(2\tilde{\bm{q}}_2^2-(\mathfrak{i}-\tilde{\bm{q}}_1^2)\big). 
\end{align*}
The above process to eliminate a nonzero coefficient by rotating the frame $\mathfrak{q}$ can be viewed as a special case of diagonalizing the tensor $Q$. 

\begin{figure}
  \centering
  \includegraphics[width=0.1\textwidth,keepaspectratio]{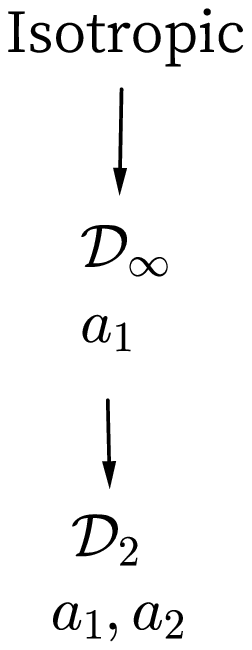}\hspace{36pt}
  \includegraphics[width=0.4\textwidth,keepaspectratio]{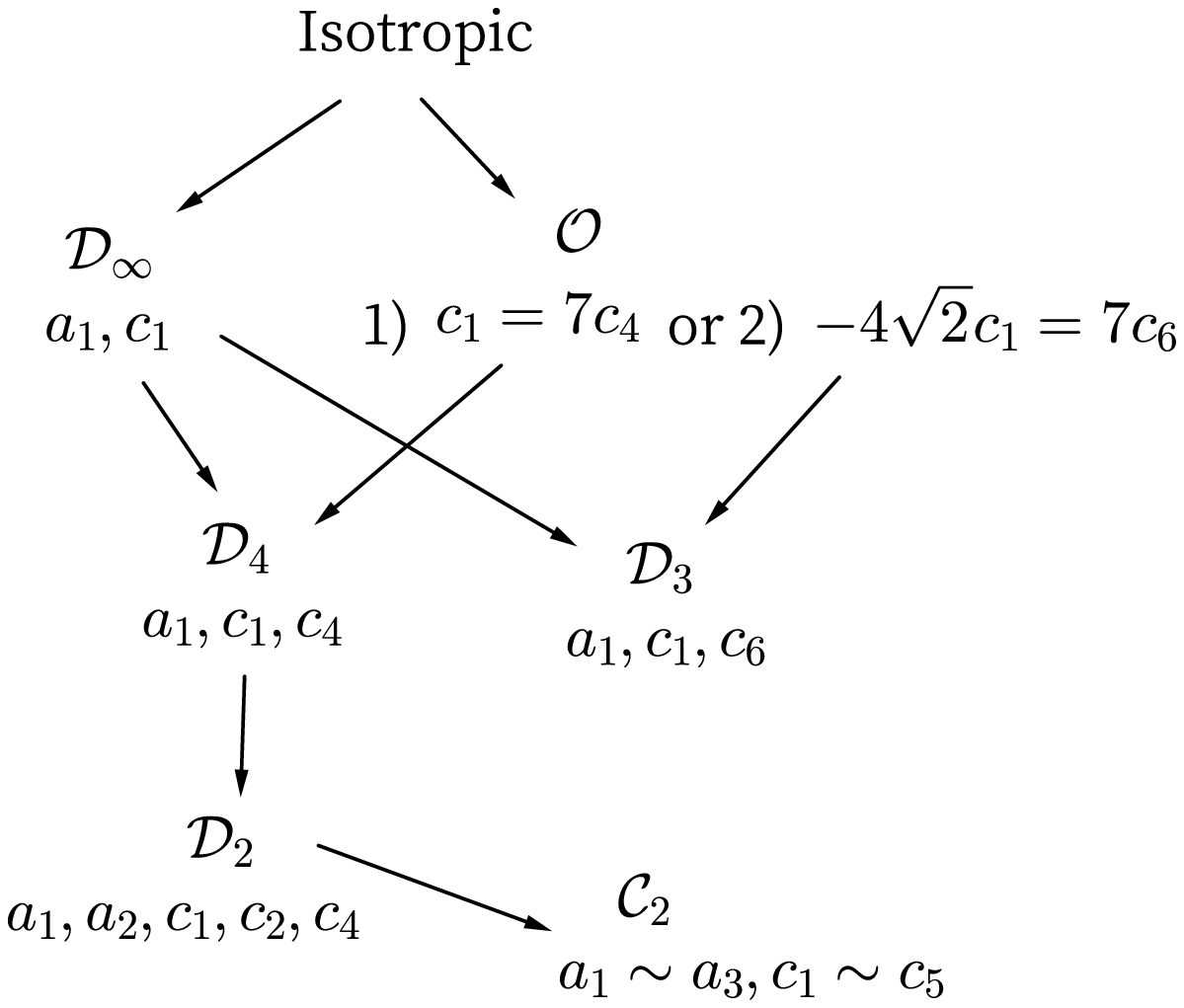}
  \caption{Classification of mesoscopic symmetries for $\mathcal{D}_{\infty}$ molecular symmetry, using one tensor $Q^2$ (left); two tensors $Q^2$, $Q^4$ (right). }\label{meso1}
\end{figure}
\begin{figure}
  \centering
  \includegraphics[width=0.18\textwidth,keepaspectratio]{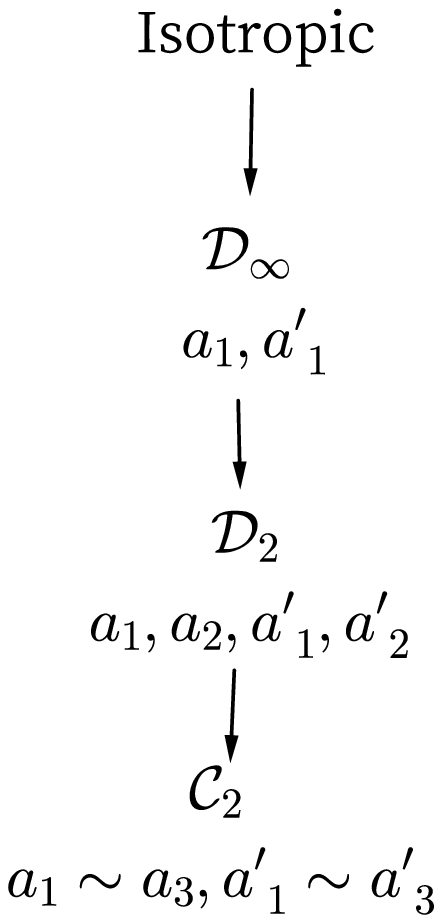}\hspace{36pt}
  \includegraphics[width=0.42\textwidth,keepaspectratio]{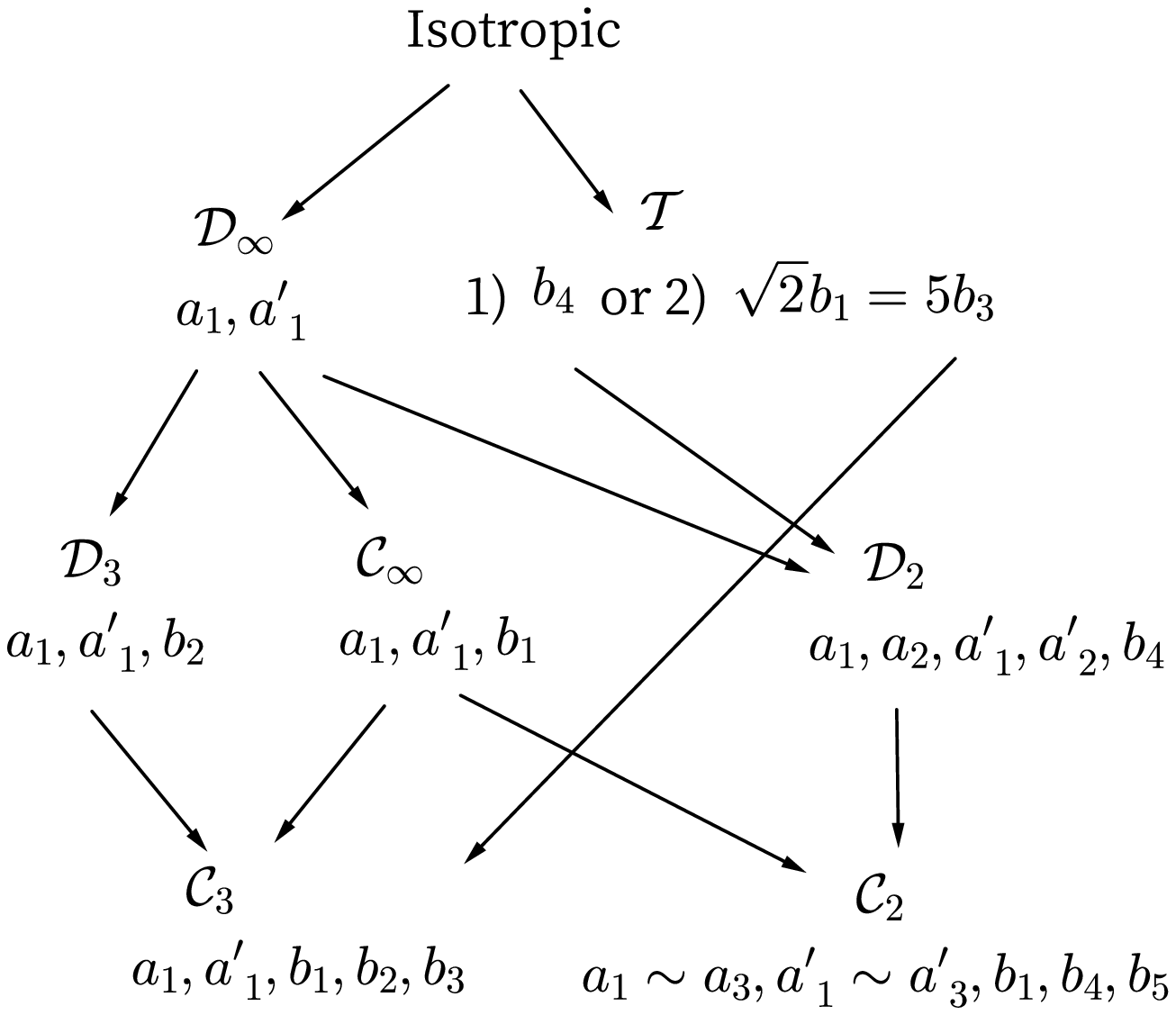}
  \caption{Classification of mesoscopic symmetries for $\mathcal{D}_2$ molecular symmetry, using two tensors $Q^2$, $M^2$ (left); three tensors $Q^2$, $M^2$, $T^3$ (right). }\label{meso2}
\end{figure}
\begin{figure}
  \centering
  \includegraphics[width=0.22\textwidth,keepaspectratio]{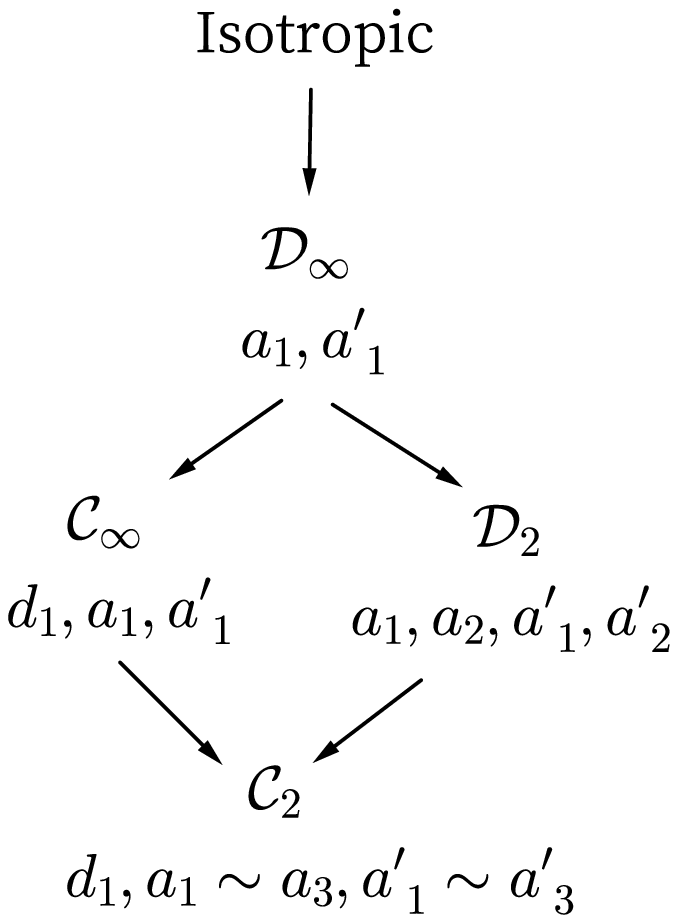}
  \includegraphics[width=0.41\textwidth,keepaspectratio]{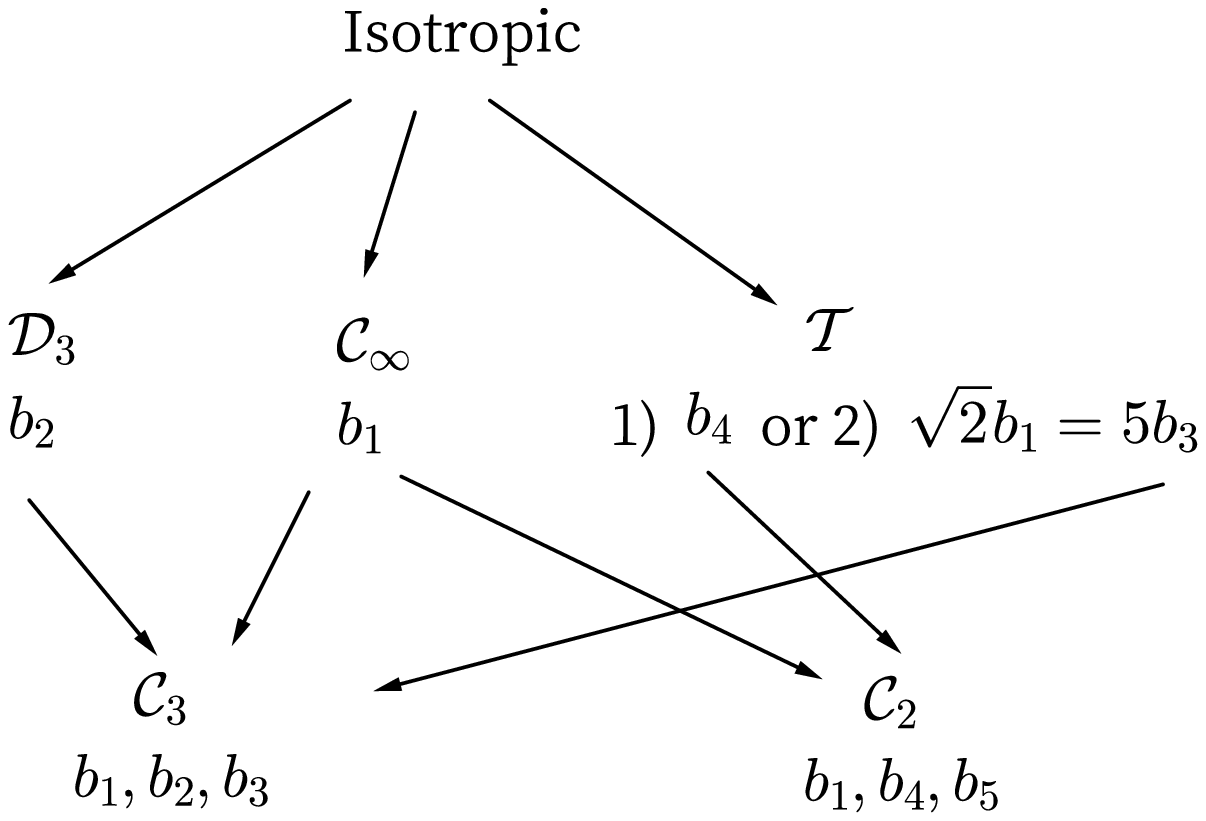}
  \caption{Classification of mesoscopic symmetries for $\mathcal{C}_2$ molecular symmetry with three tensors $Q^1$, $Q^2$, $M^2$ (left); for $\mathcal{T}$ molecular symmetry with one tensor $T^3$ (right). }\label{meso3}
\end{figure}

\subsubsection{$\mathcal{D}_{\infty}$ molecular symmetry, two tensors $Q^2$, $Q^4$}
We still examine the $\mathcal{D}_{\infty}$ molecular symmetry, but include two order parameter tensors $Q^2$ and $Q^4$. 
The graph for mesoscopic symmetries is given in Fig. \ref{meso1} (right). 
From the isotropic state, one could obtain $\mathcal{D}_{\infty}$ by allowing $a_1$ and $c_1$ nonzero; or, in another branch, by allowing either 1) $c_1=7c_2$ nonzero, or 2) $-4\sqrt{2}c_1=7c_6$ nonzero, to get $\mathcal{O}$. 
Then, from $\mathcal{D}_{\infty}$, one allows $c_4$ nonzero to obtain $\mathcal{D}_4$; $c_6$ nonzero to obtain $\mathcal{D}_3$. 
From $\mathcal{O}$, one allows $a_1$ nonzero and discards the constraint $c_1=7c_2$ of 1) to get $\mathcal{D}_4$. 
Still from $\mathcal{O}$, one discards the constraint $-4\sqrt{2}c_1=7c_6$ of 2) to get $\mathcal{D}_3$. 
Finally, starting from $\mathcal{D}_4$, by allowing $a_2,c_2$ nonzero one arrives at $\mathcal{D}_2$, further allowing $a_3,c_3,c_5$ nonzero to get $\mathcal{C}_2$. 

When using the tensors $Q^2$ and $Q^4$, the conditions for $\mathcal{C}_{\infty}$ are identical to those for $\mathcal{D}_{\infty}$, and the conditions for $\mathcal{T}$ are identical to those for $\mathcal{O}$. 
Thus, $\mathcal{C}_{\infty}$ and $\mathcal{T}$ do not appear in the graph. 
For $\mathcal{C}_3$, it allows $a_1,c_1,c_6,c_7$ nonzero.
One could use \eqref{m23rotate2} to determine a rotation making $c_7=0$, so that $\mathcal{C}_3$ cannot be distinguished from $\mathcal{D}_3$. 
For the same reason, $\mathcal{C}_4$ does not appear in the graph. 
Notice that $\mathcal{C}_2$ does appear in the graph. 
The difference between $\mathcal{C}_3$, $\mathcal{C}_4$ and $\mathcal{C}_2$ is that the rotation $\mathfrak{j}_{\theta}$ affects the coefficients $a_2,a_3,c_2,c_3,c_4,c_5$, which are all allowed nonzero by $\mathcal{C}_2$. Using \eqref{m23rotate2}, one could only make one of them zero, so that it cannot be reduced to a higher symmetry. 

Notice that when we include two tensors $Q^2$ and $Q^4$ as order parameters, they may describe local anisotropy with no symmetry. 
It will also happen for the remaining four cases to be presented. 
For the local anisotropy without any symmetry, we will give further discussion by the end of this section. 

\subsubsection{$\mathcal{D}_2$ molecular symmetry, two tensors $Q^2$, $M^2$}
We turn to the $\mathcal{D}_2$ molecular symmetry, and choose the two lowest order tensors $Q^2$ and $M^2$ as order parameters. 
The graph of mesoscopic symmetries is given in Fig. \ref{meso2} (left). 
From the isotropic state, one allows $a_1,a_1'$ nonzero to obtain $\mathcal{D}_{\infty}$, 
then allows $a_2,a_2'$ nonzero to obtain $\mathcal{D}_2$, 
finally allows $a_3,a_3'$ nonzero to obtain $\mathcal{C}_2$. 
Unlike the case of one second order tensor $Q^2$, we cannot always find a rotation to make $a_3$ and $a_3'$ vanish simultaneously. 


\subsubsection{$\mathcal{D}_2$ molecular symmetry, three tensors $Q^2$, $M^2$, $T^3$}
We still consider the $\mathcal{D}_2$ molecular symmetry, but include in the order parameters the third order tensor $T^3$ in addition to the two second order tensors. 
The graph of mesoscopic symmetries is given in Fig. \ref{meso2} (right). 
There are no $\mathcal{C}_4$, $\mathcal{D}_4$ and $\mathcal{O}$ because we do not have fourth order tensors in the order parameters. 
All the remaining seven groups appear in the graph. 
We do not explain the graph in detail, but point out two things. 
One is that the appearance of $\mathcal{C}_{\infty}$ and $\mathcal{C}_3$ is due to $b_1$. The other is that the connection from $\mathcal{T}$ to $\mathcal{C}_3$ can be realized by using the condition 2) in Table \ref{coefcond}: one does not require $\sqrt{2}b_1=5b_3$ and allows $a_1,a_1',b_1$ nonzero. 

\subsubsection{$\mathcal{C}_2$ molecular symmetry, three tensors $Q^1$, $Q^2$, $M^2$}
The group $\mathcal{C}_2$ gives the proper rotations allowed by a bent-core molecule. 
Here, the choice of three tensors is proposed in \cite{SymmO, BentModel}. 
The graph of mesoscopic symmetries is given in Fig. \ref{meso3} (left),  
where we could find four groups: $\mathcal{C}_{\infty}$, $\mathcal{D}_{\infty}$, $\mathcal{C}_2$, $\mathcal{D}_2$.
Compared with Fig. \ref{meso2} (left) for two tensors $Q^2$ and $M^2$, the group $\mathcal{C}_{\infty}$ appears, because we have a first order tensor and $d_1$ may become nonzero. 





\subsubsection{$\mathcal{T}$ molecular symmetry, one tensor $T^3$}
We finally discuss the tetrahedral molecular symmetry, and use the lowest order tensor $T^3$ as the order parameter. 
The graph of mesoscopic symmetries is given in Fig. \ref{meso3} (right). 
Let us compare with Fig. \ref{meso2} (right) for $\mathcal{D}_2$ molecular symmetry where $T^3$ is also one of the order parameters. 
We do not find $\mathcal{D}_{\infty}$, because it requires $T^3=0$. 
Besides, there is no $\mathcal{D}_2$, because the only nonzero coefficient it allows is $b_4$, but this is identical to the nonzero coefficient allowed by $\mathcal{T}$. 











\subsection{Discussions}
The main results in this section are shown by graphs connecting mesoscopic symmetries. 
From such a graph, one could figure out the ability to classify mesoscopic symmetry by certain set of tensors. 
To study when certain mesoscopic symmetry could occur, it 
would require detailed analysis, as is done in previous works \cite{fatkullin2005critical,ji2006study,liu2005axial,xu2017transmission}.

We discussed the conditions for mesoscopic symmetries allowing three- and four- fold rotations proposed from experiments \cite{JJAP}. 
In the theory of liquid crystals, usually only the lowest order averaged invariant tensors are kept as order parameters, such as $Q^2$ for rod-like molecules, and $Q^2,M^2$ for molecules with $\mathcal{D}_2$ symmetry. 
However, if there is evidence that three- or higher-fold rotations are allowed mesoscopically, one needs to include higher order tensors to adequately describe such states. 
In this sense, the choice of order parameters would be dependent on our demand on classifying mesoscopic symmetries. 
When trying to determine up to which order of tensors we need to keep, we shall consult the graphs we have given in this section. 

This is also the case for interpreting results of molecular simulations. 
Generally, one could obtain from molecular simulations an orientation distribution. 
In order to explain the result, one also needs to calculate some tensors averaged over this distribution. 
Which tensors should be calculated then becomes a significant problem.
It certainly depends on the molecular symmetry.
Meanwhile, one also needs to consider the demand of classifying local anisotropy, for which the graphs of mesoscopic symmetries are also useful. 

Under mesoscopic symmetry, the local anisotropy could be further classfied. 
For example, the uniaxial state for $Q^2$, which requires $Q^2=a_1(\bm{q}_1^2-\mathfrak{i}/3)$, is further classified by the sign of $a_1$. 
Nevertheless, the classification by the symmetries is generally the first level. 

We mentioned that the local anisotropy described by some tensors might have no symmetry. 
In this case, however, we can consider measuring the distance to certain mesoscopic symmetry. 
Suppose the order parameter tensors are $\langle U_j^{n_j}(\mathfrak{p})\rangle$ where $1\le j\le l$. The distance to a group $\mathcal{H}$ can be defined by 
\begin{align}
  \min_{\substack{\mathfrak{q}\in SO(3)\\W_j^{n_j}(\mathfrak{p})\in\mathbb{A}^{\mathcal{H},n_j}}}\sum_{j=1}^l|\langle U_j^{n_j}(\mathfrak{p})\rangle-W_j^{n_j}(\mathfrak{q})|^2. 
\end{align}
One could compare the distances to all the possible $\mathcal{H}$ to find which is the closest.

\section{Conclusion\label{concl}}
We discuss the description and classification of local anisotropy formed by rigid molecules in an infinitesimal volume, which is a fundamental problem in liquid crystals. 
With the consideration of identifying independent components, the order parameters shall be chosen from symmetric traceless tensors averaged by the density function. 
For certain molecular symmetry described by a point group in $SO(3)$, we shall eliminate the vanishing tensors and keep only the invariant tensors under this point group. 
For each point group in $SO(3)$, we write down the space of invariant tensors by explicit expressions. 
Once we have chosen some order parameter tensors according to the above principle, we could then classify the local anisotropy by its symmetry, i.e. the mesoscopic symmetry. 
By considering the maximum entropy state, the mesoscopic symmetry is determined by the value of order parameter tensors.
The conditions are also closely related to the space of invariant tensors. 
We discuss the classification for several sets of tensors, where three-fold, four-fold and polyhedral mesoscopic symmetries are included. 

Our results also provide information for the interpretation of results from molecular simulations. 
In a forthcoming work, we will utilize the results in the current paper to discuss the derivation of free energy about tensors from the molecular theory.

\appendix
\section{Quaternions}
We briefly describe how to use quaternions to express rotations.
A quaternion can be written as $\bm{q}=a+b\bm{i}+c\bm{j}+d\bm{k}$. 
The multiplication of quaternion follows $\bm{i}^2=\bm{j}^2=\bm{k}^2=-1,\ \bm{ij}=-\bm{ji}=\bm{k},\ \bm{jk}=-\bm{kj}, \bm{ki}=-\bm{ik}=\bm{j}$. 
Every unit quaternion with $a^2+b^2+c^2+d^2=1$ gives an element in $SO(3)$. 
For a vector $\bm{v}=(x,y,z)^T$, write it as $x\bm{i}+y\bm{j}+d\bm{k}$. 
The rotation is defined by $\bm{v}\mapsto \bm{q}(x\bm{i}+y\bm{j}+d\bm{k})\bm{q}^{-1}$. 
Obviously, $\bm{q}$ and $-\bm{q}$ yield the same rotation. 
The above definition actually gives the rotation matrix 
\begin{equation}
  \left(
  \begin{array}{ccc}
    a^2+b^2-c^2-d^2 & 2(bc-ad) & 2(ac+bd) \\
    2(ad+bc) & a^2-b^2+c^2-d^2 & 2(cd-ab) \\
    2(bd-ac) & 2(ab+cd) & a^2-b^2-c^2+d^2 
  \end{array}
  \right)
\end{equation}
The components are given by second order homogeneous polynomials about $(a,b,c,d)$.
A second order homogeneous polynomial about four variables has ten terms.
Eliminating $a^2+b^2+c^2+d^2=1$, there are nine terms remaining. 
Note that $\mathfrak{p}_{ij}$ has nine components. 
Thus, second order homogeneous polynomials about $(a,b,c,d)$ are linearly equivalent to $\mathfrak{p}_{ij}$.

\small
\bibliographystyle{plain}
\bibliography{bib_sym}

\end{document}